\def\BD{\Phi}					%block diagonal matrix only
\def\bigM{\bigM}
\def\bign{\widetilde{n}}
\def\bigN{\bigN}
\global\long\def\<{\langle}
 \global\long\def\>{\rangle}
 \global\long\def\iunit{\mathrm{i}}
\global\long\def\iunit{\mbox{i}}
\newtheorem{definition}{Definition}%[section]
\newtheorem{lemma}{Lemma}%[section]
\newtheorem{thm}{Theorem}%[section]
\newtheorem{cor}{Corollary}%[section]
\newcommand{\tl}[1]{\widetilde{#1}}
\newcommand{\RIPcond}[1]{\delta_{#1}} % RIP conditioning
\newcommand{\RIPcondu}[3]{\delta_{#1}(#2,#3)} % RIP conditioning
\newcommand{\twonorma}[1]{\left\|#1\right\|_2}
\newcommand{\twonorm}[1]{\|#1\|_2}
\newcommand{\onenorm}[1]{\|#1\|_1}
\newcommand{\zeronorm}[1]{\|#1\|_0}
\newcommand{\cohU}[1]{\mu(#1)} % Coherence
\newcommand{\absa}[1]{\left|#1\right|}
\newcommand{\abs}[1]{|#1|}
\newcommand{\XvecU}[2]{X(#1,#2)} % X_U(a)
\newcommand{\QU}[1]{\gamma(#1)} % Block coherence
\newcommand{\QUsq}[1]{\gamma^2(#1)} % Block coherence squarred
\newcommand{ \infnorm}[1]{\|#1\|_{\operatorname{max}}}
\newcommand{\svalvec}[1]{\sigma_{#1}} % Vector of singular values
\newcommand{\frobnorm}[1]{\|#1\|_F}
\newcommand{\Banachnorm}[1]{\|#1\|_{\Banach}}
\newcommand{\E}[2]{\mathbb{E}^{#2}\hspace{-1mm}\left\{ #1 \right\}}
\newcommand{\Proba}[2]{\operatorname{P}_{#2}\hspace{-1mm}\left\{ #1 \right\}}
\newcommand{\Eb}[2]{\mathbb{E}^{#2}\hspace{0mm}#1} % Brief expectation
\newcommand{\sparses}[1]{\Omega_{#1}} % Set of sparse signals in alpha
\newcommand{\modcohU}[1]{\modcoh(#1)} % Modified coherence
\newcommand{\modcohUsq}[1]{\modcoh^2(#1)} % Modified coherence squared
\newcommand{\lpnorm}[1]{\ell_{#1}} % Lp norm, both for rv and vectors
\newcommand{\sgnorm}[1]{\|#1\|_{\psi_2}}
\newcommand{\covera}[3]{\mathcal{N}\left(#1,#2,#3\right)}
\newcommand{\cover}[3]{\mathcal{N}(#1,#2,#3)}
\newcommand{\compress}[1]{[#1]} % Counts from 1 to #1
\newcommand{\xnorma}[1]{\left\|#1\right\|_X} % DBD X norm
\newcommand{\xnorm}[1]{\|#1\|_X} % DBD X norm
\newcommand{\ball}[2]{\mathcal{B}^{#1}_{#2}} % Unit ball wrt a norm
\newcommand{\gennorm}[1]{\|#1\|} % Generic norm
\newcommand{\sign}[1]{\operatorname{sgn}\left(#1\right)}
\newcommand{\schattena}[2]{\left\|#1\right\|_{S_{#2}}} % Schatten norm
\newcommand{\schatten}[2]{\|#1\|_{S_{#2}}} % Schatten norm
\newcommand{\rank}[1]{\operatorname{Rank}\left(#1\right)}
\newcommand{\pnorm}[2]{\|#1\|_{#2}}
\newcommand{\re}[1]{\mbox{Re}[#1]}
\newcommand{\im}[1]{\mbox{Im}[#1]}
\newcommand{\net}[3]{\mathcal{C}\left(#1,#2,#3\right)} % Cover for a set
\newcommand{\shiftop}[1]{\operatorname{S}^{#1}} % Shift operator
\def\bigM{\tl{M}}
\def\bigN{\tl{N}}
\def\B{\Phi} % Each block
\def\DBD{\Psi} % Block diagonal matrix without U
\def\RBD{\Xi} % Block diagonal matrix without U
\def\basis{U}
\def\basisc{u} % Basis column
\def\smallbasis{u} % Small basis
\def\rbasis{R} % Random basis
\def\rbasisc{r} % Random basis rows
\def\bign{\tl{n}}
\def\Q{\gamma}
\def\Qall{\kappa}	% generic coherence for both RBD and DBD
\def\coh{\mu}
\def\basisc{u} % Columns of U
\def\rgvec{g} % Random Gaussian vector used with generic basis
\def\dev{t} % Deviation parameter in COM
\def\rgmat{G} % Random Gaussian matrix used with generic basis
\def\Banach{\chi} % Banach space
\def\rad{\xi} %rademacher sequence
\def\rv{Z} % Test rv
\def\symrv{Y} % Symmetcized test rv
\def\RIPcondgl{\delta} % RIP conditioning goal
\def\thresh{\dummy_0} % Breaking point in intgrals
\def\modcoh{\tl{\coh}} % Modified coherence
\def\sgn{\tau} % Sub-Gaussian norm of entries of measurement matrices
\def\maxsgnorm{K} % Max sub-Gaussian norm of a sequence of rvs
\def\genset{\mathcal{S}} % Generic set
\def\covres{r} % Covering resulotion
\def\rrad{\varepsilon} % RBD Rademacher
\def\rrads{\epsilon} % RBD Rademacher but small letter
\def\reals{\mathbb{R}}
\def\comps{\mathbb{C}}
\def\supp{T} % Support
\def\dummy{\nu} % Integral dummy
\def\temp{\beta} % Arbitrary vector for computing the cover no of l1 ball
\def\copies{L} % No of independent copies in covering the l1 ball
\def\tempmat{A} % Arbitrary matrix in Khintchine inequality for spectral norm
\def\tempmatp{B} % Another temp matrix
\def\const{C} % Constant
\def\acover{\mathcal{C}} % A particular cover used with l1 ball
\def\dumone{a} % Dummy variables
\def\dumtwo{b} % Dummy variables
\def\dumthree{c} % Dummy variables
\def\circmat{\Gamma} % Circulant matrix
\def\circrbd{\Xi} % Axilurary RBD matrix made out of the circulant matrix
\def\extx{\widehat{x}} % Extended vector x in circulant RIP
\def\extxtwo{x_e} % Second extended vector x in circulant RIP
\def\circbasis{T} % Basis in circulant argument
\def\mod{\mbox{mod}} % Modulus operator
\def\parone{E_1}
\def\partwo{E_2}
\def\parthree{E_3}
\def\gammafcn{\gamma}
\def\setofmat{\mathcal{A}}
\def\A{A}
\def\Cover{\mathcal{N}} % Cover
\def\vec{\operatorname{vec}} % Vectorize
\begin{document}

\title{The Restricted Isometry Property for \\ Random Block Diagonal Matrices}
\date{October 2012}
\author{Armin Eftekhari, Han Lun Yap, Christopher J.\ Rozell, and Michael B.\ Wakin\footnote{The first and second authors contributed equally to this paper. AE and MBW are with the Department of Electrical Engineering and Computer Science at the Colorado School of Mines. HLY and CJR are with the School of Electrical and Computer Engineering at the Georgia Institute of Technology. Email: aeftekha@mines.edu and  yhanlun@dso.org.sg. This work was partially supported by NSF grants CCF-0830456 and CCF-0830320, by NSF CAREER grant CCF-1149225, and by DSO National Laboratories of Singapore. A preliminary version of Theorem~\ref{thm:DBD main thm}, with a different proof, was originally presented at the 2011 IEEE Conference on Information Sciences and Systems (CISS)~\cite{yap2011restricted}.}}

\begin{abstract}
In Compressive Sensing, the Restricted Isometry Property (RIP) ensures that robust recovery of sparse vectors is possible from noisy, undersampled measurements via computationally tractable algorithms. It is by now well-known that Gaussian (or, more generally, sub-Gaussian) random matrices satisfy the RIP under certain conditions on the number of measurements. Their use can be limited in practice, however, due to storage limitations, computational considerations, or the mismatch of such matrices with certain measurement architectures. These issues have recently motivated considerable effort towards studying the RIP for structured random matrices. In this paper, we study the RIP for block diagonal measurement matrices where each block on
the main diagonal is itself a sub-Gaussian random matrix. Our main result states that such matrices can indeed satisfy the RIP but that the requisite number of measurements depends on certain properties of
the basis in which the signals are sparse. In the best case, these matrices perform nearly as well as dense Gaussian random matrices, despite having many fewer nonzero entries.
\end{abstract}
\maketitle

\begin{keywords}
 	Compressive Sensing, Block Diagonal Matrices, Restricted Isometry Property
\end{keywords}

\section{Introduction}

\label{sec:intro}

%\subsection{Compressive Sensing}
\label{sec:CS intro}

Many interesting classes of signals have a low-dimensional geometric structure that can be exploited to design efficient signal acquisition and recovery methods. The emerging field of Compressive Sensing (CS) \cite{candes2006compressive,donoho2005neighborliness,donoho2005sparse} deals with signals that can be parsimoniously expressed in a basis or a dictionary. A canonical result in CS states that sparse signals, i.e., signals with very few nonzero entries, can be accurately recovered from a small number of linear measurements by solving a tractable convex optimization problem if the measurement system satisfies the Restricted Isometry Property (RIP)~\cite{candes2006compressive}.

The RIP requires the linear measurement system to approximately maintain the distance between any pair of sparse signals in the measurement space, implying that the geometry of the family of sparse signals is approximately preserved in the measurement space. Apart from playing a central role in the analysis of numerous signal recovery algorithms in CS~\cite{Candes2008589,needell2009cosamp,davenport2010analysis,Fou12SparseRecovery,huang2011recovery}, the RIP also provides a framework to analyze signal processing and inference algorithms in the compressed measurement domain~\cite{davenport2010signal}. Moreover, measurement systems that satisfy the RIP, after undergoing some minor modifications, can approximately preserve the geometry of an arbitrary point cloud (as confirmed by the Johnson-Lindenstrauss lemma)~\cite{krahmer2010new} or a low-dimensional compact manifold~\cite{yap2011stable}.

A measurement system represented by a matrix populated with i.i.d.\ sub-Gaussian\footnote{Roughly speaking, the tail of a sub-Gaussian random variable is similar to that of a Gaussian random variable. This term is defined precisely in Section~\ref{sec:notations}.} random variables is known to satisfy the RIP with high probability whenever the number of rows scales linearly with the sparsity of the signal and logarithmically with the length of the signal~\cite{candes2006compressive,bah2010improved}. Such matrices are also universal in that, with the same number of random measurements, they satisfy the RIP with respect to any fixed sparsity basis with high probability. We refer to such matrices---densely populated with i.i.d.\ random entries---as {\em unstructured} measurement matrices. There has been significant recent interest in studying {\em structured} measurement systems because unstructured random measurements may be undesirable due to memory limitations, computational costs, or specific constraints in the data
acquisition architecture. Many structured systems have been studied in the CS literature, including subsampled bounded orthonormal systems~\cite{rudelson2008fourier,rauhut2010compressive},
random convolution systems (described by partial Toeplitz~\cite{haupt2008toeplitz} and circulant matrices~\cite{rauhut2011restricted,Krahmer2012}) and deterministic matrix constructions~\cite{devore2007deterministic}. Generally, structured random matrices require more measurements to satisfy the RIP and lack the universality of unstructured random matrices.

In this paper, we are concerned with establishing the RIP for block diagonal matrices populated with i.i.d.\ sub-Gaussian random variables. The advantages of such matrices are varied. First, these matrices require less memory and computational resources than their unstructured counterparts. Second, they are particularly useful for representing acquisition systems with architectural constraints that prevent global data aggregation. For example, this type of architecture arises in distributed sensing systems %, e.g., antenna arrays and MIMO radars,
 where communication and environmental constraints limit the dependence of each sensor to only a subset of the data and in streaming applications where signals have data rates that necessitate operating on local signal blocks rather than on the entire signal simultaneously. In these scenarios, the data may be divided naturally into discrete blocks, with each block acquired via a local measurement operator.

To make things concrete, for some positive integers $J,N$, and $M$, set $\bigM:=JM$ and $\bigN:=JN$. We model a signal ${x \in \comps^{\bigN}}$ as being partitioned %into $J$ blocks $\{x_j\}$, $j\in\compress{J}$.
into $J$ blocks of length $N$, i.e., $x = \left[ x_1^T, \; \cdots, \; x_J^T \right]^T$ where $x_j \in \comps^{N}$, $j \in \compress{J}$. Here, $\compress{J}$ denotes the set $\{1,2,\cdots,J\}$.
As an example, $x$ can be a video sequence and $\{x_j\}$, $j\in\compress{J}$, can be the individual frames in the video.
For each $j \in\compress{J}$, we suppose that a linear operator $\B_j : \comps^N \rightarrow \comps^M$ collects the measurements $y_j = \B_j x_j$.
In our example, this means that each video frame $x_j$ is measured with an operator $\B_j$. %, $j\in\compress{J}$.
Concatenating all of the measurements into a vector $y \in \comps^{\bigM}$, we then have
\begin{equation}
	\underbrace{\left[ \begin{array}{c} y_1 \\ y_2 \\ \vdots \\ y_J \end{array} \right]}_{y: ~ \widetilde{M} \times 1} =
	\left[
	\begin{array}{cccc}
	\BD_1 & & \\ & \BD_2 \\
	& & \ddots & \\
	& & & \BD_J
	\end{array}
	\right]
	\underbrace{\left[ \begin{array}{c} x_1 \\ x_2 \\ \vdots \\ x_J \end{array} \right]}_{x: ~ \widetilde{N} \times 1},
	\label{eq:matrixdef}
\end{equation}
Thus we see that the overall measurement operator relating $y$ to $x$ will have a block diagonal structure. In this paper we consider two scenarios. When the $\{\B_j\}$ are distinct, we call the resulting matrix a Distinct Block Diagonal (DBD) matrix. On the other hand, when the $\{\B_j\}$ are all identical, we call the resulting matrix a Repeated Block Diagonal (RBD) matrix. Our results show that whenever the total number of measurements $\bigM$ is sufficiently large, DBD and RBD matrices can both satisfy the RIP. As we summarize in Sections~\ref{sec:DBD intro} and~\ref{sec:RBD intro} below, the requisite number of measurements depends on the type of matrix (DBD or RBD) and on the basis in which $x$ has a sparse expansion. We also show that certain sparse matrices and random convolution systems considered in the CS literature can be studied in the framework of block diagonal matrices.

% Our results show that whenever the total number of measurements $\bigM$ scales linearly with the sparsity of the signal $S$, the coherence of the sparsifying basis, and poly-logarithmically with the ambient dimension $\bigN$, DBD matrices will satisfy the RIP.  For RBD matrices, $\bigM$ must scale with $S^{3/2}$, the block-coherence of sparsifying basis (to be defined later), and poly-logarithmically with the ambient dimension $\bigN$.

In general, proving the RIP for structured measurement systems requires analytic tools beyond the elementary approaches that suffice for unstructured matrices.
For example, in~\cite{rudelson2008fourier} the authors employed tools such as Dudley's inequality from the theory of probabilities in Banach spaces, and in~\cite{rauhut2011restricted} a variant of Dudley's inequality for chaos random processes was used to obtain a result that was out of reach for elementary approaches. While some of the ideas and techniques utilized in~\cite{rudelson2008fourier} and~\cite{rauhut2011restricted} can be used to establish the RIP for random block diagonal matrices (see~\cite{yap2011restricted} for our preliminary study), even these sophisticated tools result in measurement rates that are worse than what we report in this paper. Fortunately, recent work by Krahmer et al. has established an improved bound on the suprema of chaos random processes that enabled them to prove the RIP for Toeplitz matrices with an optimal number of measurements~\cite{Krahmer2012}. The bound in~\cite{Krahmer2012} is very general, and we have leveraged this result for the main results of this paper.
Specifically, the work in~\cite{Krahmer2012} has allowed us to develop a unified treatment of DBD and RBD matrices with bounds on the measurement rates that are significantly improved over our preliminary work.

\subsection{Definition of the RIP}
\label{sec:def and uses of RIP}

%In this paper, we are interested in deriving conditions whereby a RBD or DBD matrix satisfies the RIP.
A linear measurement operator satisfies the RIP if it acts as an approximate isometry on all sufficiently sparse signals. More specifically, the Restricted Isometry Constant (RIC) of a matrix $\tempmat \in \reals^{\bigM\times\bigN}$ is defined as the smallest positive number $\RIPcond{S}$ for which
\begin{align}\label{eq:ripPlain}
	(1-\RIPcond{S})\twonorm{x}^2\le\twonorm{Ax}^2\le(1+\RIPcond{S})\twonorm{x}^2 & \;\;\;\;\;
	\mbox{for all }x\mbox{ with }\zeronorm{x}\le S,
\end{align}
where $\|\cdot\|_0$ merely counts the number of nonzero entries of a vector. %\noteh{I suggest we define RIP-$(S,\delta)$, and later say that if RIP-$(2S, \delta)$ means approximate distance preservation, which is definition of \emph{sable embedding}. We can then say that RIC is the smallest of all such $\delta$. This is because we never define stable embedding but yet we use this word below...}
In many applications, however, signals may be sparse in an orthobasis $\basis$ other than the canonical basis, and so we will find the notion of the $\basis$-RIP more convenient.
\begin{definition}
	Let $\basis$ denote an orthobasis for $\comps^{\bigN}$. The RIC of a matrix $A\in\reals^{\bigM\times\bigN}$ in the basis $\basis$, $\RIPcond{S}=\RIPcondu{S}{\tempmat}{\basis}$, is defined as the smallest positive number for which
	\begin{align}\label{eq:def of U-RIP}
		(1-\RIPcond{S})\twonorm{x}^2\le\twonorm{Ax}^2\le(1+\RIPcond{S})\twonorm{x}^2 & \;\;\;\;\;
		\emph{\mbox{for all }}x\emph{\mbox{ with }}\zeronorm{\basis^*x}\le S,
	\end{align}
	where $\basis^*$ denotes the conjugate transpose of $\basis$.
\end{definition}
In general, whenever the sparsity basis is clear from the context, $\basis$ is dropped from our notation (after its first appearance).
%\footnote{
More generally, the notion of the RIP could be extended to the class of signals that are sparse in an overcomplete dictionary~\cite{candes2011redundant}.
Nonetheless, for the clarity of exposition, we restrict ourselves throughout this paper to considering only orthobases.

\subsection{The RIP for Distinct Block Diagonal (DBD) Matrices}
\label{sec:DBD intro}

Suppose the matrices $\{\B_j\}$, ${j\in\compress{J}}$, in \eqref{eq:matrixdef} are distinct, and let $\DBD$ denote the resulting block diagonal matrix. Following~\cite{park2011concentration}, we say that $\DBD$ has a DBD structure. DBD matrices arise naturally when modeling the information captured by individual (and different) sensors in a sensor network. In this setting, $y_j = \B_j x_j$ represents the local measurements of the signal $x$ made by $j$th sensor and thus $y = \DBD x$ represents the total measurements of $x$ captured by the whole network. Or, as mentioned previously, DBD matrices can be used to represent the process of measuring a video sequence frame-by-frame, but where each frame is observed using a different measurement matrix. DBD matrices also arise in the study of observability matrices in certain linear dynamical systems~\cite{wakin2010observability}, and as another example, DBD
matrices
can be used as a simplified model for the visual pathway (because the
information captured by the photoreceptors in the retina is aggregated locally by horizontal and bipolar cells~\cite{kandel2000principles}). Due to their structure, DBD matrices can be transformed into sparse measurement matrices after permutation of their rows and columns~\cite{SpectralImaging}.
%\noteh{[1] E. R. Kandel, J. H. Schwartz, and T. M. Jessell, Principles of neural science, vol. 4. McGraw-Hill New York, 2000.}
Also see \cite{do2012fast} for an application of block diagonal matrices in developing fast and efficient compressive measurement operators.
%\note{In this paper, they develop structured random matrices for fast multiplication. Using block diagonal matrices is part of their strategy. Their analysis is a bit messy but clearly not based on RIP I think.}

We have previously derived concentration inequalities for DBD matrices populated with sub-Gaussian random variables~\cite{wakin2010concentration,park2011concentration}. Rather than ensuring the stable embedding of an entire family of sparse signals, these equalities concern the probability that a bound such as~\eqref{eq:ripPlain} will hold for a single, arbitrary (not necessarily sparse) signal $x$. We have shown that, unlike the case for unstructured random matrices, the probability of concentration with DBD matrices is actually {\em signal dependent}, and in particular the concentration probability depends on the allocation of the signal energy among the signal blocks. However, for signals whose energy is nearly uniformly spread across the $J$ blocks (this happens, for example, with signals that are sparse in the Fourier domain~\cite{park2011concentration}), the highly structured DBD matrices can provide concentration performance that is on par with the unstructured matrices often used in CS.

While concentration of measure inequalities are useful for applications concerning compressive signal processing~\cite{davenport2010signal}, it is not evident how such a concentration result can be extended to give an RIP bound as strong as the one in this paper. Specifically, in Section~\ref{sec:main result DBD} of this paper, we show that if the total number of measurements $\bigM$ scales linearly with the sparsity of the signal $S$ and poly-logarithmically with the ambient dimension $\bigN$, DBD matrices populated with sub-Gaussian random entries will satisfy the $\basis$-RIP with high probability. In addition to a dependence on $S$ and $\bigN$, however, our measurement bounds also reveal a dependence on a property known as the {\em coherence} of the sparsifying basis $\basis$. In this sense, the signal-dependent nature of our concentration of measure inequalities carries over to our RIP analysis for DBD matrices. (The fine details of how this occurs, however, are different.) Our study does confirm that
for the class
of signals that are sparse in the frequency domain, DBD matrices satisfy the RIP with approximately the same number of rows required in an unstructured Gaussian random matrix (despite having many fewer nonzero entries).

\subsection{The RIP for Repeated Block Diagonal (RBD) Matrices}
\label{sec:RBD intro}

Alternatively, suppose the matrices $\{\B_j\}$, ${j\in\compress{J}}$, in \eqref{eq:matrixdef} are all equal, and let $\RBD$ denote the resulting block diagonal matrix. Following~\cite{park2011concentration}, we say that $\RBD$ has an RBD structure. In the context of the sensor network, video processing, and observability applications discussed before, RBD matrices arise when the same measurement matrix is used for all the signal blocks. In the delay embedding of dynamical systems, as another example, a time series is obtained by repeatedly applying a scalar measurement function to the trajectory of a dynamical system. This time series can then be embedded in a low-dimensional space (hence the name), and this embedding can be expressed through an RBD measurement matrix provided that the scalar measurement function is linear~\cite{5930379}. Though not obvious at first glance, RBD matrices also have structural similarities with random convolution
matrices found in the CS literature~\cite{rauhut2011restricted,Krahmer2012}. We revisit this connection in order to re-derive the RIP for partially circulant random matrices in Section~\ref{sec:Toeplitz}.

We have previously derived concentration inequalities for RBD matrices populated with Gaussian random variables~\cite{wakin2010concentration,rozell2010concentration,park2011concentration}. Our bounds for these matrices again reveal that the probability of concentration is signal dependent. However, in this case, our bounds depend on both the allocation of the signal energy among the signal blocks as well as the mutual orthogonality of the signal blocks.

In Section~\ref{sec:main result RBD} of this paper, we show that if the total number of measurements $\bigM$ scales linearly with $S$ and poly-logarithmically with $\bigN$, RBD matrices populated with sub-Gaussian random variables will satisfy the $\basis$-RIP with high probability. Our measurement bounds also reveal a dependence on a property known as the {\em block-coherence} of the sparsifying basis $\basis$ that quantifies the dependence between its row blocks. When the block-coherence of $\basis$ is small, RBD matrices perform favorably compared to unstructured Gaussian random matrices. Most sparsifying bases are in fact favorable in this regard; we prove that the block-coherence of $\basis$ is small when $\basis$ is selected randomly. Once again, for RBD matrices, the signal dependent nature of concentration inequalities and the dependence of the RIP on the sparsifying basis emerge as two sides of the same coin.

\subsection{Outline}

This paper is organized as follows. Section~\ref{sec:notations} introduces the notation used throughout the rest of the paper. Section~\ref{sec:main results} summarizes our main results regarding the RIP for DBD and RBD matrices; these results are later proved in Section~\ref{sec:proof of DBD RBD main result}. Section~\ref{sec:simulations} presents numerical simulations that illustrate the dependence of signal recovery performance on the sparsifying basis $U$. We conclude the paper with a short discussion in Section~\ref{sec:conclusion}. We note that, for the reader's convenience, the Toolbox (\ref{sec:toolbox}) gathers some general tools from linear algebra and probability theory used in our analysis.

\section{Notation}

\label{sec:notations}

We reserve the letters $C,C_1,C_2,\cdots$ to represent universal positive constants. We adopt the following (semi-)order: $\dumone\lesssim\dumtwo$ means that there is an absolute constant $\Cl{notation1}$ such that $\dumone\le\Cr{notation1}\dumtwo$. If, instead of being an absolute constant, $\Cr{notation1}$  depends on some parameter $\dumthree$, we write $\dumone\lesssim_{\dumthree}\dumtwo$. Also $\dumone\gtrsim\dumtwo$ and $\dumone\gtrsim_{\dumthree}\dumtwo$ are defined similarly.

For an integer $S$, a signal with no more than $S$ nonzero entries is called $S$-sparse, and $S$ is known as the sparsity level. In particular, $\zeronorm{\dumone}$ denotes the number of nonzero entries of a vector $\dumone$. More generally, a signal that is a linear combination of at most $S$ columns of a basis is said to be $S$-sparse in that basis. The conjugate transpose of a matrix $\tempmat$ will be denoted by $\tempmat^*$. In this paper, $\rank{\tempmat}$ stands for the rank of matrix $\tempmat$. In addition to the regular $\lpnorm{p}$-norms in the Euclidean spaces, $1\le p\le\infty$, we use $\twonorm{\tempmat}$ and $\frobnorm{\tempmat}$ to denote the spectral and Frobenious norms of a matrix $\tempmat$, respectively. We use $ \infnorm{\tempmat}$ to denote the largest entry of the matrix $\tempmat$ in magnitude. For $1\le p\le\infty$, the Schatten norm of order $p$ of a matrix $\tempmat$ is
denoted by $\schatten{\tempmat}{p}$ and is defined as
\[
	\schatten{\tempmat}{p}:=\pnorm{\svalvec{\tempmat}}{p},
\]
where $\svalvec{\tempmat}$ is the vector formed by the singular values of $\tempmat$.  Observe that $\schatten{\tempmat}{\infty} = \|\tempmat\|_2$ and $\schatten{\tempmat}{2} = \|\tempmat\|_F$. Throughout this paper, for a matrix $\tempmat$, $\vec(\tempmat)$ returns the vector formed by stacking the columns of $\tempmat$. Also, we will use the conventions $\compress{N}:=\{1,2,\cdots,N\}$ (for an integer $N$) %Finally, for support $\supp\subset\compress{N}$, $\suppvec{\dumone}{\supp}\in\comps^{\#\supp}$ will denote the restriction of vector $\dumone$ to the support $\supp$.
and $\#T$ for the cardinality of a set $T$.

When it appears, the subscript of an expectation operator $\Eb{}{}$ specifies the (group of) random variable(s) with respect to which the expectation is taken. %If the subscript is missing, then the expectation is taken with respect to all present random variables.
For a random variable $\rv$ taking values in $\comps$, we define $\Eb{\abs{\rv}}{p}:=(\Eb{\abs{\rv}^p}{})^{1/p}$, $p\ge 1$. A random variable $\rv$ is sub-Gaussian if its sub-Gaussian norm, defined below, is finite~\cite{vershynin2010introduction}:
\begin{equation}\label{eq:def of sub-gaussian norm}
	\sgnorm{\rv}:=\sup_{p\ge1}\frac{1}{\sqrt{p}}\Eb{\abs{\rv}}{p}.
\end{equation}
Qualitatively speaking, the tail of (the distribution of) a sub-Gaussian random variable is similar to that of a Gaussian random variable, hence the name.
%Finally, $\{\rad_j\}$ will denote a Rademacher sequence, i.e.\ a sequence of i.i.d.\ random variables independent of everything else that take $\pm1$ equally likely. \noteh{Remove previous sentence since I think the r.v. is not used at all?}
A Rademacher sequence is a sequence of i.i.d.\ random variables that take the values $\pm1$ with equal probability.  In this paper, every appearance of such a sequence is independent of other random variables in this paper and, in particular, independent of other Rademacher sequences elsewhere in the paper. Finally, $\underset{\tiny{\mbox{i.d.}}}{=}$ means that the random variables on both sides of the equality have the same distribution.

A set $\net{\genset}{\gennorm{\cdot}}{\covres}$ is called a cover for the set $\genset$ at resolution $\covres$ and with respect to the metric $\gennorm{\cdot}$ if for every $x\in\genset$, there exists $x'\in\net{\genset}{\gennorm{\cdot}}{\covres}$ such that $\gennorm{x-x'}\le\covres$. The minimum cardinality of all such covers is called the covering number of $\genset$ at resolution $\covres$ and
with respect to the norm $\|\cdot\|$, and is denoted here by $\cover{\genset}{\|\cdot\|}{\covres}$. %\noteh{Now, I have a similar problem with all the covering number bullshit. The problem is that the set $\net{\genset}{\gennorm{\cdot}}{\covres}$ is not unique. I sidetrack this problem by calling what you call $\net{\genset}{\gennorm{\cdot}}{\covres}$ now simply a $(r, \|\cdot\|_2)$-cover, and then $\net{\genset}{\gennorm{\cdot}}{\covres}$ is the unique cover with minimum cardinality.}

\section{Main Results}\label{sec:main results}

%\noteh{General question: Why don't you have the following layout for this section: 1) RIP for DBD - first introduce coherence, explain coherence, then show RIP result 2) RIP for RBD - introduce block-coherence, explain block-coherence including lemma, then show RIP result. 3) RIP for partial circulant matrices. I think this way, there is a shorter ``suspense'' time between introducing the coherence factors and seeing them in use.}

% Here we first lay the groundwork in Section~\ref{sec:sparsifying basis} by defining the properties of the sparsifying basis that we will need in order to state our main results. Our main results are then listed in Sections~\ref{sec:main result DBD} and~\ref{sec:main result RBD}.

%, stated in Sections~\ref{sec:main result DBD} and~\ref{sec:main result RBD}, indeed depend on the aforementioned properties. Introducing some notation is necessary here.

\subsection{Measures of Coherence}
\label{sec:sparsifying basis}

Our results for random block diagonal matrices depend on certain properties of the sparsity basis, i.e., the basis in which the signals have a sparse expansion. These properties are defined and studied in this section;  this sets the stage for a detailed statement of our main results in Sections~\ref{sec:main result DBD} and~\ref{sec:main result RBD}.

\subsubsection{Coherence Definitions}

The {\em coherence} of an orthobasis $\basis\in\comps^{\bigN\times\bigN}$ is defined as follows~\cite{candes2008introduction}:
\begin{equation}\label{eq:def of coherence}
	\cohU{\basis}:=\sqrt{\bigN}\max_{p,q\in\compress{\bigN}}\absa{\basis(p,q)},\end{equation}
where $\basis(p,q)$ is the $(p,q)$th entry of $\basis$. If $\{\basisc_{\bign}\}$ and $\{e_{\bign}\}$, $\bign \in \compress{\bigN}$, denote the columns of $\basis$ and of the canonical basis for $\comps^{\bigN}$, respectively, one can easily verify that
\begin{equation}\label{eq:eq. def of coherence}
	\cohU{\basis}=\sqrt{\bigN}\max_{p,q\in\compress{\bigN}}\absa{\<\basisc_{p},e_{q}\>}.
\end{equation}
This allows us to interpret $\cohU{\basis}$ as the similarity between $\basis$ and the canonical basis.

A few more definitions are in order before we can define the second important property of a basis used in this paper. For $\alpha\in\comps^{\bigN}$, set $x(\alpha)=x(\alpha,\basis):=\basis\alpha$, and define $x_j(\alpha)=x_j(\alpha,\basis)\in\comps^N$, $j\in\compress{J}$, such that
\begin{equation}\label{eq:breaking x into chunks}
	x(\alpha)=[x_1(\alpha)^T,x_2(\alpha)^T,\cdots,x_J(\alpha)^T]^T.
\end{equation}
% \begin{equation}
% 	x(\alpha,\basis)=\left[\begin{array}{c}
% 	x_{1}(\alpha,\basis)\\
% 	x_{2}(\alpha,\basis)\\
% 	\vdots\\
% 	x_{J}(\alpha,\basis)\end{array}\right].
% 	
% \end{equation}
If we also define $\basis_j\in\comps^{N\times\bigN}$, ${j\in\compress{J}}$, such that \begin{equation}\label{eq:def of U1 UJ}
	\basis=[\basis_1^T,\basis_2^T,\cdots,\basis_J^T]^T,
\end{equation}
% \begin{equation}
% 	\basis = \left[\begin{array}{c}
% 	                \basis_1\\
% 			\basis_2\\
% 			\vdots\\
% 			\basis_J
% 	               \end{array}
% 		\right],
% 	\label{eq:def of U1 UJ}
% \end{equation}
we observe that $x_j(\alpha) = U_j \alpha$ for every $j$. Define $X_R(\alpha,\basis)\in\comps^{N\times J}$ as
\[
	X_R(\alpha)=X_R(\alpha,\basis):=\left[\begin{array}{cccc}
	x_{1}(\alpha) & x_{2}(\alpha) & \cdots & x_{J}(\alpha)\end{array}\right]= \left[ \begin{array}{ccc} U_1 \alpha & \cdots & U_J \alpha \end{array} \right].
\]
%Let $e_{\bign}$ denote the $\bign$th canonical vector in $\comps^{\bigN}$, %$\bign\in\compress{\bigN}$.
Now the {\em block-coherence} of $\basis$, denoted by $\QU{\basis}$, is defined as
\begin{equation}\label{eq:def of block-coherence}
	\QU{\basis}:=\sqrt{J}\max_{\bign\in\compress{\bigN}}\twonorma{X_R(e_{\bign},\basis)}.
\end{equation}
%where $\twonorm{\tempmat}$ denotes the spectral norm of $\tempmat$.
In words, $\QU{\basis}$ is proportional to the maximal spectral norm when any column of $U$ is reshaped into an $N \times J$ matrix. In the special case where $J=1$ or $N=1$, $\gamma(U)$ simply scales with the maximum $\ell_2$-norm of the columns of $U$. In analogy with \eqref{eq:eq. def of coherence}, one can also think of \eqref{eq:def of block-coherence} as a (non-commutative)  coherence measure between $\basis$ and $I_{\bigN}$.\footnote{It can be easily verified that, in general, $\max_{\bign}\|X_R(e_{\bign},\basis)\|_2\ne \max_{\bign}\|X_R(\smallbasis_{\bign},I_{\bigN})\|_2$, where $\smallbasis_{\bign}$ is the $\bign$th column of $\basis$.}
Qualitatively speaking, $\Q(\basis)$ measures the orthogonality and distribution of energy between $U_1,U_2,\cdots,U_J$, the consecutive $N\times\widetilde{N}$ row-submatrices of $\basis$. If for every column of $\basis$, the energy is evenly distributed between its row-blocks and they are nearly orthogonal, $\Q(\basis)$ will be small and, as we will see later, better suited for our purposes. In the next subsection, we compute the coherence and block-coherence of a few widely-used orthobases.

\subsubsection{Computing the Coherence for a Few Orthonormal Bases}

It is easily verified that%\footnote{\note{AE: Cut this?}
% 	Let the unit $\lpnorm{2}$-norm vectors $\{\basisc_{\bign}\}$, $\bign\in\compress{\bigN}$, denote the columns of $\basis$. Since $\twonorm{\basisc_{\bign}}=1$, we have $1/\sqrt{\bigN}\le \infnorm{\basisc_{\bign}}\le 1$, for every $\bign\in\compress{\bigN}$. Since $\cohU{\basis}=\sqrt{\bigN}\max_{\bign\in\compress{\bigN}} \infnorm{\basisc_{\bign}}$, it follows that $1\le\cohU{\basis}\le\sqrt{\bigN}$.
% }
\begin{equation}\label{eq:range of coherence}
	1\le\cohU{\basis}\le\sqrt{\bigN}.
\end{equation}
The upper bound is achieved, for example, by the canonical basis in $\comps^{\bigN}$, i.e.,  $\cohU{I_{\bigN}}=\sqrt{\bigN}$. The lower bound, on the other hand, is achieved by any basis that is maximally incoherent with the canonical basis. For example, $\cohU{F_{\bigN}}=1$, where $F_{\bigN}$ denotes the Fourier basis in $\comps^{\bigN}$. The next lemma, proved in \ref{sec:proof of coh of rand basis}, indicates that {\em most} orthobases are also highly incoherent with the canonical basis. %Such a basis may be obtained, for example, by applying the Gram-Schmidt process to the columns of an  unstructured random Gaussian matrix---the output will almost surely be an orthobasis.
\begin{lemma}\label{lem:coh of rand basis}
	Let $\rbasis\in\reals^{\bigN\times\bigN}$ denote a generic orthobasis in $\comps^{\bigN}$ chosen randomly from the uniform distribution on the orthogonal group.
%	constructed as follows. %\footnote{\note{AE: Cut this?}
%% 	We recall that an orthobasis for $\reals^{\bigN}$ is automatically an orthobasis for $\comps^{\bigN}$; its columns are unit $\lpnorm{2}$-norm, orthogonal and span $\comps^{\bigN}$.
%% 	}
%	The first column, $\rbasisc_1$, is chosen from the uniform distribution on the unit sphere
%	in $\reals^{\bigN}$. %The second column is chosen from
%	%the uniform distribution on the unit sphere in the orthogonal complement
%	%of the first column.
%	For every $\bign\in\compress{\bigN}\backslash\{1\}$, $\rbasisc_{\bign}$ is chosen from the uniform distribution on the unit sphere in the orthogonal complement of the span of the first $\bign-1$ columns.
	Then the following holds for fixed $\dev\gtrsim 1$ and $\bigN\gtrsim\dev^2\log\bigN$:
	\begin{equation}\label{eq:coh of rand basis}
		\Proba{\cohU{\rbasis}>\dev\sqrt{\log\bigN}}{}\lesssim\bigN^{-\dev}.
	\end{equation}

\end{lemma}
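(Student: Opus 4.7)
My plan is to exploit the rotation invariance of the Haar measure to reduce the claim to a sub-Gaussian concentration bound for a single coordinate of a uniform point on the sphere $S^{\bigN-1}$, and then union bound over all $\bigN^{2}$ entries of $\rbasis$.

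First, I would observe that because the Haar measure on $O(\bigN)$ is invariant under left- and right-multiplication by any fixed orthogonal matrix, each column $\rbasis e_{q}$ is marginally uniform on $S^{\bigN-1}$, and consequently every entry $\rbasis(p,q)=e_{p}^{*}\rbasis e_{q}$ has the distribution of a single coordinate $u_{1}$ of a random vector $u$ drawn uniformly from $S^{\bigN-1}$.

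Next, I would establish the single-coordinate tail
\[
\Proba{|u_{1}|\ge s}{}\le 2\exp(-c\bigN s^{2}),\qquad 0\le s\le 1,
\]
for an absolute constant $c>0$. The cleanest route is the Gaussian representation $u=g/\twonorm{g}$ with $g\sim N(0,I_{\bigN})$: the standard Gaussian tail for $g_{1}$ combined with a deviation bound of the form $\Proba{\twonorm{g}^{2}\le \bigN/2}{}\le e^{-c'\bigN}$ (e.g., Laurent--Massart) delivers the displayed estimate after conditioning on $\twonorm{g}\ge\sqrt{\bigN/2}$. Alternatively, L\'evy's isoperimetric inequality applied to the $1$-Lipschitz function $u\mapsto u_{1}$, whose median vanishes by symmetry, yields the same bound.

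Finally, I would union bound over all $\bigN^{2}$ pairs $(p,q)$ and set $s=\dev\sqrt{(\log\bigN)/\bigN}$, which lies in $[0,1]$ precisely under the hypothesis $\bigN\gtrsim\dev^{2}\log\bigN$. This gives
\[
\Proba{\cohU{\rbasis}>\dev\sqrt{\log\bigN}}{}\le 2\bigN^{2}\,e^{-c\dev^{2}\log\bigN}=2\,\bigN^{\,2-c\dev^{2}},
\]
and for $\dev$ at least a suitable absolute constant (which is what $\dev\gtrsim 1$ affords, after possibly enlarging the implicit constant) one has $c\dev^{2}-\dev-2\ge 0$, producing the claimed $\lesssim\bigN^{-\dev}$. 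The only step requiring any real care is the single-coordinate sphere concentration; once that is in hand, the remainder is a routine union bound, so I do not anticipate a serious obstacle.
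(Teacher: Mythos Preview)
Your proposal is correct and follows essentially the same route as the paper's proof: both reduce to the distribution of a single coordinate of a uniform point on $S^{\bigN-1}$ via the Gaussian representation $u=g/\twonorm{g}$, split the tail into a bound on $|g_1|$ and a lower deviation bound on $\twonorm{g}$, and then union bound over the $\bigN^{2}$ entries. Your mention of L\'evy's isoperimetric inequality as an alternative for the single-coordinate tail is a slight variation the paper does not invoke, but the primary argument is the same.
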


We now turn to computing the block-coherence  of the same orthobases. Since every column of $U$ has unit $\lpnorm{2}$-norm, it is easily observed that
%
% MBW: I am cutting this footnote
%\footnote{\note{AE: Cut this?} By the definition of the Schatten norm, $\twonorm{\tempmat}=\schatten{\tempmat}{\infty}$, and $\frobnorm{\tempmat}=\schatten{\tempmat}{2}$ for any matrix $\tempmat$. Since each column of $\basis$ has unit $\lpnorm{2}$ norm, the claim then follows from \eqref{eq:prop of schatten norm} in the Toolbox with $p=\infty,q=2$.}
%
\begin{equation}
	1\le \QU{\basis}\le\sqrt{J}.\label{eq:range of Q}
\end{equation}
Consider the canonical basis in $\comps^{\bigN}$.
For every $\bign\in\compress{\bigN}$, $\XvecU{e_{\bign}}{I_{\bigN}}$ has a single non-zero entry, which equals $1$, and thus $\twonorm{\XvecU{e_{\bign}}{I_{\bigN}}}=1$. Hence,
$\QU{I_{\bigN}}=\sqrt{J}$.
Moving on to $F_{\bigN}$, we observe that the entries of the first column of $F_{\bigN}$ equal $\bigN^{-1/2}$. As a result, the entries of $\XvecU{e_1}{F_{\bigN}}$ all equal $\bigN^{-1/2}$. It follows that $\twonorm{\XvecU{e_1}{F_{\bigN}}}=1$ and therefore $\QU{F_{\bigN}}= \sqrt{J}$.
% Of course, by \eqref{eq:range of Q}, $\QU{F_{\bigN}}=\sqrt{J}$.
% As we will see later, a basis $U$ with large $Q(U)$ (close to $1$) performs poorly with RBD measurement matrices, whereas the oposite holds for a basis with small  $Q$. In particular, since $Q(I_{\bigN})=Q(F_{\bigN})=1$, both the canonical and Fourier bases
% perform poorly in our settings.

Because the canonical basis and the Fourier basis---which one might naturally consider to be opposite ends on some spectrum of orthobases---both have large block-coherence, one might wonder whether any orthobasis could have small block-coherence. As we will see, {\em most} possible orthobases actually do have small block-coherence. For example, consider the generic orthobasis $\rbasis$ constructed in Lemma~\ref{lem:coh of rand basis}. The columns of $\XvecU{e_1}{\rbasis}$ are $J$ random vectors in $\reals^{N}$. These vectors are weakly dependent because the first column of $\rbasis$ has unit $\ell_2$-norm. %$\twonorm{\rbasisc_1}=1$.
With high probability, the length of each vector is  approximately $1/\sqrt{J}$ (so that the $\ell_2$-norm of the first column of $\rbasis$ is one). If $J\le N$, then with high probability these points are spread out in $\reals^{N}$ so that $\twonorm{\XvecU{e_{1}}{\rbasis}} \approx1/\sqrt{J}$. Now, since the columns of $\rbasis$ have the same distribution, $\twonorm{\XvecU{e_{\bign}}{\rbasis}} \approx1/\sqrt{J}$ for every $\bign\in\compress{\bigN}$.
 Therefore, $\QU{\rbasis}\approx1$, which is much smaller than the block-coherence of the
canonical and Fourier bases. The next result, proved in \ref{sec:Proof-of-Lemma Q for random mtx}, formalizes this discussion.

\begin{lemma}\label{lem:Q of random matrix}
	Consider the generic orthobasis $\rbasis$ constructed in Lemma~\ref{lem:coh of rand basis}. For fixed $\dev\le1$, the following holds if $J\le N$ and $N\gtrsim\dev^{-2}\log \bigN$:
	\begin{equation}\label{eq:bounds on Q of rand matrix}
		\Proba{\QU{\rbasis}\gtrsim 1+\sqrt{\frac{J}{N}}+\dev}{}\lesssim\bigN^{-\dev}.
	\end{equation}
\end{lemma}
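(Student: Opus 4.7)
The plan is to exploit the rotational invariance of the Haar measure on $O(\widetilde{N})$: because $R$ is uniformly distributed on the orthogonal group, each individual column $r_{\bign}$ of $R$ is uniformly distributed on the unit sphere $S^{\bigN-1}$, and can be represented as $g/\twonorm{g}$ for $g \sim \mathcal{N}(0, I_{\bigN})$. Reshaping $g$ into an $N\times J$ matrix $G$ (so that its $j$th column is the $j$th length-$N$ block of $g$) yields a matrix with i.i.d.\ $\mathcal{N}(0,1)$ entries, and the reshape is an isometry, so
\[
\twonorm{X_R(e_{\bign},R)} \;\overset{\mathrm{i.d.}}{=}\; \frac{\twonorm{G}}{\frobnorm{G}}.
\]
Thus $\QU{R}$ is distributed as the maximum over $\widetilde{N}$ (dependent) copies of $\sqrt{J}\,\twonorm{G}/\frobnorm{G}$.

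Next I would control the numerator and denominator separately using standard Gaussian matrix concentration (which I would cite from the Toolbox). Since $J\le N$, the Davidson--Szarek/Gordon inequality gives, for any $s\ge 0$,
\[
\Proba{\twonorm{G}\ge \sqrt{N}+\sqrt{J}+s}{} \;\le\; 2\exp(-s^2/2).
\]
Since $\frobnorm{G}^2$ is a chi-square random variable with $NJ = \widetilde{N}$ degrees of freedom, a standard $\chi^2$ lower tail bound gives $\Proba{\frobnorm{G}\le \tfrac12\sqrt{NJ}}{}\le \exp(-cNJ)$ for an absolute constant $c>0$. Taking $s = \dev\sqrt{N}$ and combining these two events, on a set of probability at least $1 - 2\exp(-\dev^2 N/2) - \exp(-c\widetilde{N})$ we obtain
\[
\sqrt{J}\,\twonorm{X_R(e_{\bign},R)} \;\le\; \frac{\sqrt{J}\,(\sqrt{N}+\sqrt{J}+\dev\sqrt{N})}{\tfrac12\sqrt{NJ}} \;=\; 2\bigl(1+\sqrt{J/N}+\dev\bigr).
\]

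The final step is a union bound over $\bign\in\compress{\bigN}$. The columns of $R$ are not independent, but they are marginally uniform on the sphere, so the union bound applies with no further care:
\[
\Proba{\QU{R}\gtrsim 1+\sqrt{J/N}+\dev}{} \;\le\; \widetilde{N}\bigl(2\exp(-\dev^2 N/2)+\exp(-c\widetilde{N})\bigr).
\]
The second term is dominated by $\widetilde{N}^{-\dev}$ since $\widetilde{N}\ge N$ is large. For the first term we need $\widetilde{N}\cdot 2\exp(-\dev^2 N/2)\lesssim \widetilde{N}^{-\dev}$, i.e.\ $\dev^2 N/2 \gtrsim (\dev+1)\log\widetilde{N}$, which under $\dev\le 1$ is precisely the hypothesis $N\gtrsim \dev^{-2}\log\widetilde{N}$.

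The argument is largely bookkeeping; there is no single ``hard step,'' but the one place to be careful is to make sure the dependence between $\twonorm{G}$ and $\frobnorm{G}$ is handled correctly. They need not be independent, yet intersecting the two high-probability events and applying a union bound on the failure probabilities is sufficient because each is separately a concentration statement for a fixed Gaussian matrix. A secondary subtlety is that the conclusion must be uniform over all $\widetilde{N}$ columns despite their mutual dependence under the Haar measure, but since the union bound is a statement about probabilities of any events (not requiring independence), the marginal sphere-uniform distribution of each column is all that is needed.
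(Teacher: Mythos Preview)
Your proposal is correct and follows essentially the same route as the paper's proof: represent each column of $R$ as $g/\twonorm{g}$ with $g$ standard Gaussian, identify $\twonorm{X_R(e_{\bign},R)}$ with $\twonorm{G}/\frobnorm{G}$, control numerator and denominator separately via Gaussian-matrix and $\chi^2$ concentration, and then union-bound over the $\bigN$ columns. The only cosmetic difference is that the paper uses the threshold $(1-\dev/3)\sqrt{\bigN}$ for $\frobnorm{G}$ rather than your fixed $\tfrac12\sqrt{\bigN}$, but this does not affect the conclusion up to constants.
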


We close this section by noting that Section~\ref{sec:Toeplitz} provides an example of a deterministic basis with small block-coherence, which is obtained by  modifying the Kronecker product of the canonical and Fourier bases. (This basis is then used to prove the RIP for partial random circulant matrices.)

\subsection{The RIP for DBD Matrices}\label{sec:main result DBD}

Let $\B\in\reals^{M\times N}$ denote a matrix populated with i.i.d.\ sub-Gaussian random variables having mean zero, standard deviation $1/\sqrt{M}$, and sub-Gaussian norm $\sgn/\sqrt{M}$, for some $\sgn>0$.
Take $\{\B_j\}$ in \eqref{eq:matrixdef} to be $J$ independent copies of $\Phi$ and let $\DBD\in\reals^{\bigM\times\bigN}$ denote the resulting block diagonal matrix in \eqref{eq:matrixdef}.
Our first main result, proved in Section~\ref{sec:proof of DBD RBD main result}, establishes the RIP for DBD matrices with this construction.
\begin{thm}\label{thm:DBD main thm}
	Let $\basis$ denote an orthobasis for $\comps^{\bigN}$ and  set
\begin{equation}\label{eq:def of mutilde}
	\modcohU{\basis}:=\min\left(
	\sqrt{J},\; \cohU{U}\right),\qquad\mbox{where}\qquad\mu(U)=\sqrt{\widetilde{N}}\max_{p,q\in[\widetilde{N}]}|U(p,q)|.
\end{equation}
	If $S \gtrsim 1$ and
	\begin{equation}\label{eq:req bigM for DBD RIP}
		\bigM \gtrsim_{\sgn} \RIPcondgl^{-2}\modcohUsq{\basis} \cdot S\cdot\log^2 S\log^{2}\bigN,
	\end{equation}
	then $\RIPcondu{S}{\DBD}{\basis}\le\RIPcondgl<1$, except with a probability of at most $O(\bigN^{-\log\bigN\log^2 S})$. %Recall that $\sgn/\sqrt{M}$ is the sub-Gaussian norm of the nonzero entries of $\DBD$.
\end{thm}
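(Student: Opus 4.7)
My plan is to recast the $\basis$-RIP as the supremum of a second-order chaos in the sub-Gaussian entries of $\BD$, then invoke the suprema-of-chaos inequality of Krahmer, Mendelson, and Rauhut~\cite{Krahmer2012}. Collect the random entries of all $J$ blocks into a single sub-Gaussian vector $\xi := (\vec(\BD_1)^T,\dots,\vec(\BD_J)^T)^T$. Recalling $x_j(\alpha)=\basis_j\alpha$ from~\eqref{eq:breaking x into chunks}, one verifies the identity
\[
\|\DBD\basis\alpha\|_2^2 \;=\; \|A(\alpha)\xi\|_2^2, \qquad A(\alpha) := \operatorname{diag}\!\bigl(x_j(\alpha)^T \otimes I_M\bigr)_{j\in\compress{J}},
\]
with $\Ep{\|A(\alpha)\xi\|_2^2}{}{} = \|\alpha\|_2^2$. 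The $\basis$-RIP at level $\RIPcondgl$ is therefore equivalent to the chaos-deviation statement $\sup_{\alpha\in\sparsesp{S},\,\|\alpha\|_2=1}\bigl|\,\|A(\alpha)\xi\|_2^2 - \|\alpha\|_2^2\,\bigr| \le \RIPcondgl$ for the matrix family $\mathcal{A}=\{A(\alpha):\alpha\in\sparsesp{S},\,\|\alpha\|_2=1\}$.

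The chaos inequality requires control of three geometric quantities of $\mathcal{A}$: the Frobenius and operator-norm radii $d_F(\mathcal{A}), d_{2\to 2}(\mathcal{A})$, and Talagrand's functional $\gamma_2(\mathcal{A},\|\cdot\|_{2\to 2})$. The block-diagonal form of $A(\alpha)$ makes the first two elementary: after absorbing the entry variance $1/M$ into $A$, one has $\|A(\alpha)\|_F^2 = \sum_j\|x_j(\alpha)\|_2^2 = 1$ and $\|A(\alpha)\|_{2\to 2}^2 = \max_j\|x_j(\alpha)\|_2^2/M$. The quantity $\max_j\|x_j(\alpha)\|_2^2$ I would bound in two complementary ways: trivially by $\|\basis\alpha\|_2^2 = 1$, and via the pointwise coherence inequality $|(\basis_j\alpha)(p)|\le\cohU{\basis}\|\alpha\|_1/\sqrt{\bigN}$ combined with $\|\alpha\|_1\le\sqrt{S}\|\alpha\|_2$. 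The minimum of the two estimates yields $\max_j\|x_j(\alpha)\|_2^2 \le \modcohUsq{\basis}\,S/J$, so $d_F(\mathcal{A})=1$ and $d_{2\to 2}(\mathcal{A})\lesssim\modcohU{\basis}\sqrt{S/\bigM}$.

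The heart of the proof --- and the principal technical obstacle --- is bounding $\gamma_2(\mathcal{A},\|\cdot\|_{2\to 2})$. The block-diagonal form of $A(\alpha)-A(\beta)$ gives the clean identity
\[
\|A(\alpha)-A(\beta)\|_{2\to 2} \;=\; \max_{j}\|\basis_j(\alpha-\beta)\|_2/\sqrt{M},
\]
and this quantity admits \emph{two} useful Lipschitz estimates on the $2S$-sparse difference $\alpha-\beta$: a fine-scale bound $\le\modcohU{\basis}\sqrt{2S/\bigM}\,\|\alpha-\beta\|_2$ from the coherence-plus-Cauchy--Schwarz argument above, and a coarser $\ell_1$-type bound $\le\modcohU{\basis}\,\|\alpha-\beta\|_1/\sqrt{\bigM}$ from the same pointwise coherence inequality. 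I would combine these two metrics with a multi-resolution covering of the $2S$-sparse $\ell_2$-unit sphere --- a union-bound over supports (contributing an $S\log\bigN$ entropy term) at coarse scales and a Maurey/empirical-method cover of the corresponding $\ell_1$-ball at fine scales --- and run a generic-chaining (or two-step Dudley) argument to obtain
\[
\gamma_2(\mathcal{A},\|\cdot\|_{2\to 2}) \;\lesssim_{\sgn}\; \modcohU{\basis}\sqrt{S/\bigM}\,\log S\,\log \bigN.
\]
A single-metric Dudley integral on the same cover overshoots by roughly a factor of $\sqrt{S}$, so balancing the $\ell_1$ and $\ell_2$ estimates across scales is essential.

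Finally, substituting $d_F(\mathcal{A})=1$, $d_{2\to 2}(\mathcal{A})\lesssim\modcohU{\basis}\sqrt{S/\bigM}$, and the bound on $\gamma_2$ into the chaos inequality of~\cite{Krahmer2012}, the deterministic error term $\gamma_2(\gamma_2+d_F)+d_F\,d_{2\to 2}$ is of order $\modcohU{\basis}\sqrt{S/\bigM}\,\log S\,\log\bigN$, which does not exceed $\RIPcondgl$ precisely when~\eqref{eq:req bigM for DBD RIP} holds. Choosing the tail parameter $t$ in the chaos inequality of order $\RIPcondgl$ and using the resulting $V\lesssim d_{2\to 2}(\gamma_2+d_F)$ and $U\lesssim d_{2\to 2}^2$, the exponent $\min(t^2/V^2,\,t/U)$ works out to be of order $\log^2\bigN\,\log^2 S$, producing the stated failure probability $O(\bigN^{-\log\bigN\,\log^2 S})$.
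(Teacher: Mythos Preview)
Your plan is essentially the paper's proof: rewrite $\|\DBD\basis\alpha\|_2^2$ as $\|A_D(\alpha)\,\xi\|_2^2$, apply the Krahmer--Mendelson--Rauhut chaos bound, obtain $d_F=1$ and $d_2\le\modcoh\sqrt{S/\bigM}$ from the pair of inequalities $\max_j\|U_j\alpha\|_2\le\min\bigl(\|\alpha\|_2,\,\mu\|\alpha\|_1/\sqrt{J}\bigr)$, and control $\gamma_2$ by a Dudley integral split between a support-counting bound and a Maurey bound.

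One correction is needed: you have the two covering regimes interchanged. The Maurey/empirical-method cover of the $\ell_1$-ball yields $\log\mathcal N\lesssim\modcoh^{2}\log^{2}\bigN/(\nu^{2}\bigM)$, which diverges as $\nu\to 0$ and is therefore the \emph{coarse}-resolution (large-$\nu$) estimate; the union-over-supports volumetric bound $\log\mathcal N\lesssim S\log\bigN + S\log\bigl(1+2\modcoh/(\nu\sqrt{\bigM})\bigr)$ is what keeps the integral finite near $\nu=0$ and so belongs at the fine scales. With that swap and the split point $\nu_0=\modcoh/\sqrt{S\bigM}$, your claimed $\gamma_2\lesssim\modcoh\sqrt{S/\bigM}\,\log S\,\log\bigN$ follows exactly as in the paper, and your final substitution into the chaos tail bound is correct.
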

A few remarks are in order. The requisite number of measurements is linear in the sparsity level $S$ and (poly-)logarithmic in the ambient dimension $\bigN$, on par with an unstructured random Gaussian matrix~\cite{candes2006compressive}. More importantly, the requisite number of measurements scales with $\modcohUsq{\basis}$ which takes a value in the interval $[1,J]$. For the Fourier basis, we calculated that $\cohU{F_{\bigN}} = 1$. Therefore, when measuring signals that are sparse in the frequency domain, we observe that a DBD matrix compares favorably to an unstructured Gaussian matrix of the same size. This is in the sense that they both require the same number of measurements to achieve the RIP (up to a  poly-logarithmic factor).
%, i.e., the RIP for $F_{\bigN}$ is achieved a number of measurements essentially proportional to the sparsity level $S$.

On the other hand, when the orthobasis $\basis$ is highly coherent with the canonical basis, the requisite number of measurements is proportional to $SJ$ (instead of $S$). While possibly unfavorable, this is indeed necessary (to within a poly-logarithmic factor) to achieve the RIP in some cases. For example, recall that $\cohU{I_{\bigN}}=\sqrt{\bigN}$ and so $\modcohU{I_{\bigN}}=\sqrt{J}$.  To see why the results are optimal in this case, consider the class of $S$-sparse signals in $I_{\bigN}$ whose nonzero entries are located within the first length-$N$ block of the signal. Achieving a stable embedding of this class of signals requires $\B_1$ itself to satisfy the RIP. This matrix, $\B_1$, is an unstructured sub-Gaussian matrix, and ensuring that it satisfies the RIP requires  $M\gtrsim_{\sgn}\RIPcondgl^{-2} S \log(N/S)$~\cite{candes2006compressive}. Consequently, ensuring the $I_{\bigN}$-RIP for $\DBD$ is only possible when $\bigM\gtrsim_{\sgn}\RIPcondgl^{-2} JS \log(N/S)$, as predicted by Theorem~\ref{thm:DBD main thm} (up to a poly-logarithmic term). The required number of measurements in this case can still be parsimonious ($\bigM \ll \bigN$), however, if the sparsity level $S$ of the signal
$x$ is much less than $N$, the length of each signal block $x_j$.

As a final note, Theorem~\ref{thm:DBD main thm} implies the RIP for a certain class of sparse matrices which are of potential interest in their own right~\cite{berinde2008combining,SpectralImaging}.
\begin{cor}
	Let $\basis$ denote an orthobasis for $\comps^{\bigN}$ and define $\modcohU{\basis}:=\min\left(
	\sqrt{J},\; \cohU{U}\right)$.
	Let $\DBD'$ denote the (sparse) matrix obtained by an arbitrary permutation of the rows and columns of $\DBD$. 	If $S \gtrsim 1$ and
	\begin{equation}\label{eq:req bigM for DBD RIP}
		\bigM \gtrsim_{\sgn} \RIPcondgl^{-2}\modcohUsq{\basis} \cdot S\cdot\log^2 S\log^{2}\bigN,
	\end{equation}
	then $\RIPcondu{S}{\DBD'}{\basis}\le\RIPcondgl<1$, except with a probability of at most $O(\bigN^{-\log\bigN\log^2 S})$.
\end{cor}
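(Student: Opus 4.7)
The idea is that permuting rows and columns of $\DBD$ amounts to a change of basis on the measurement and signal sides that leaves the relevant quantities invariant, so the corollary reduces to a direct application of Theorem~\ref{thm:DBD main thm}.

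My plan is as follows. First I would write the permuted matrix as $\DBD' = P_1 \DBD P_2$, where $P_1 \in \reals^{\bigM \times \bigM}$ and $P_2 \in \reals^{\bigN \times \bigN}$ are the (orthogonal) permutation matrices corresponding to the row and column rearrangements. Since $P_1$ is orthogonal, $\twonorm{\DBD' x} = \twonorm{\DBD P_2 x}$ for every $x$, so the row permutation drops out immediately.

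Next I would handle the column permutation by a change of variable $y := P_2 x$. Then $\twonorm{y} = \twonorm{x}$ and the sparsity condition $\zeronorm{\basis^* x} \le S$ translates, via $x = P_2^T y$, into $\zeronorm{(P_2 \basis)^* y} \le S$. Setting $\basis' := P_2 \basis$, which is again an orthobasis in $\comps^{\bigN}$, this shows the equivalence
\begin{equation*}
    \RIPcondu{S}{\DBD'}{\basis} \;=\; \RIPcondu{S}{\DBD}{\basis'}.
\end{equation*}

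Then I would invoke Theorem~\ref{thm:DBD main thm} on the unpermuted DBD matrix $\DBD$ with sparsifying orthobasis $\basis'$. The only point to verify is that the bound in \eqref{eq:req bigM for DBD RIP} depends on $\basis'$ only through $\modcohU{\basis'}$, and this quantity is unchanged: since $P_2 \basis$ merely permutes the rows of $\basis$, its entries coincide as a multiset with those of $\basis$, so $\cohU{\basis'} = \cohU{\basis}$, and hence $\modcohU{\basis'} = \min(\sqrt{J}, \cohU{\basis'}) = \modcohU{\basis}$. Substituting this into Theorem~\ref{thm:DBD main thm} yields the stated bound with the same failure probability.

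There is no real obstacle here; the proof is essentially a one-line reduction. The only subtlety worth pointing out is that, in contrast to the coherence $\cohU{\basis}$, the block-coherence $\QU{\basis}$ is \emph{not} invariant under row permutations of $\basis$. Fortunately, Theorem~\ref{thm:DBD main thm} is stated purely in terms of $\modcohU{\basis}$ (which depends on $\cohU{\basis}$ and not on $\QU{\basis}$), which is why the corollary goes through cleanly for DBD matrices; no analogous statement would be immediate for RBD matrices.
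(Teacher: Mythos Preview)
Your proposal is correct and follows essentially the same argument as the paper: the paper dismisses the row permutation ``without loss of generality'' (your orthogonality observation makes this explicit), introduces the column permutation $P_c$, and then applies Theorem~\ref{thm:DBD main thm} to $\DBD$ with the orthobasis $P_c\basis$, noting that $\cohU{P_c\basis}=\cohU{\basis}$. Your version is simply more detailed, and your closing remark that $\QU{\basis}$ is \emph{not} invariant under row permutations (so the analogous statement would not be immediate for RBD matrices) is a nice observation that the paper does not make.
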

\begin{proof}
	Without loss of generality, consider no permutation in the rows and let $P_c\in\reals^{\bigN}$ denote the permutation matrix for the columns of $\DBD$. Since $P_c\basis$ has the same coherence as $\basis$, the claim follows by applying Theorem~\ref{thm:DBD main thm}.
\end{proof}

\subsection{The RIP for RBD Matrices}\label{sec:main result RBD}

\subsubsection{Main Result for RBD Matrices}

Let $\B\in\reals^{M\times N}$ denote a matrix populated with i.i.d.\ sub-Gaussian random variables having mean zero, standard deviation $1/\sqrt{M}$, and sub-Gaussian norm $\sgn/\sqrt{M}$, for some $\sgn>0$. Take $\B_j=\B$ for every $j\in\compress{J}$ in \eqref{eq:matrixdef} and let $\RBD\in\reals^{\bigM\times\bigN}$ denote the resulting block diagonal matrix in \eqref{eq:matrixdef}. Our second main result, also proved in Section~\ref{sec:proof of DBD RBD main result}, establishes the RIP for RBD matrices with this construction.
\begin{thm}\label{thm:RBD main thm}
	Let $\basis$ denote an orthobasis for $\comps^{\bigN}$.
	If $S \gtrsim 1$ and
	\begin{equation*}
		\bigM \gtrsim_\sgn  \RIPcondgl^{-2}\QUsq{\basis}\cdot S \cdot \log^2 S\log^{2}\bigN,
	\end{equation*}
	then $\RIPcondu{S}{\RBD}{\basis}\le\RIPcondgl<1$, except with a probability of at most $O(\bigN^{-\log\bigN\log^2 S})$. %Recall that $\sgn/\sqrt{\bigM}$ is the sub-Gaussian norm of the nonzero entries of the RBD matrix.
\end{thm}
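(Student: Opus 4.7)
The plan is to cast $\twonorm{\RBD x}^{2}-\twonorm{x}^{2}$, uniformly over $S$-sparse signals $x=\basis\alpha$, as the deviation of a chaos process in a sub-Gaussian vector, and then invoke the suprema-of-chaos bound of~\cite{Krahmer2012}. The repeated structure of $\RBD$ gives
\begin{equation*}
\twonorm{\RBD x}^{2}=\sum_{j=1}^{J}\twonorm{\B U_{j}\alpha}^{2}=\frobnorm{\B\,X_{R}(\alpha)}^{2}=\twonorm{(X_{R}(\alpha)^{T}\otimes I_{M})\vec(\B)}^{2}.
\end{equation*}
Put $\phi=\sqrt{M}\,\vec(\B)$, a vector of i.i.d.\ sub-Gaussian entries of unit variance and sub-Gaussian norm $\sgn$, and $A_{\alpha}=M^{-1/2}(X_{R}(\alpha)^{T}\otimes I_{M})$, so that $\twonorm{\RBD x}^{2}=\twonorm{A_{\alpha}\phi}^{2}$. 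Because $\basis$ is an orthobasis, $\frobnorm{X_{R}(\alpha)}^{2}=\alpha^{*}(\sum_{j}U_{j}^{*}U_{j})\alpha=\alpha^{*}\basis^{*}\basis\alpha=\twonorm{\alpha}^{2}$, so $\mathbb{E}\twonorm{A_{\alpha}\phi}^{2}=\frobnorm{A_{\alpha}}^{2}=\twonorm{\alpha}^{2}$. Establishing the $\basis$-RIP for $\RBD$ therefore reduces to showing
\begin{equation*}
\sup_{\alpha}\,\big|\twonorm{A_{\alpha}\phi}^{2}-\twonorm{\alpha}^{2}\big|\le \RIPcondgl
\end{equation*}
with the stated probability, with the supremum taken over $S$-sparse unit vectors.

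The bound of~\cite{Krahmer2012} controls this supremum in terms of $\gamma_{2}(\mathcal{A},\twonorm{\cdot})$, $d_{F}(\mathcal{A})$, and $d_{2}(\mathcal{A})$, where $\mathcal{A}=\{A_{\alpha}\}$ and all three are measured in the operator norm. Using $\twonorm{X^{T}\otimes I_{M}}=\twonorm{X}$ and $\frobnorm{X^{T}\otimes I_{M}}=\sqrt{M}\frobnorm{X}$, the diameters become $d_{F}(\mathcal{A})=\sup_{\alpha}\frobnorm{X_{R}(\alpha)}=1$ and $d_{2}(\mathcal{A})=M^{-1/2}\sup_{\alpha}\twonorm{X_{R}(\alpha)}$. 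Block-coherence enters precisely in controlling the latter: by the triangle inequality together with~\eqref{eq:def of block-coherence},
\begin{equation*}
\twonorm{X_{R}(\alpha)}\le\sum_{\bign}|\alpha_{\bign}|\,\twonorm{X_{R}(e_{\bign})}\le\frac{\QU{\basis}}{\sqrt{J}}\,\onenorm{\alpha}\le\QU{\basis}\sqrt{\frac{S}{J}},
\end{equation*}
so $d_{2}(\mathcal{A})\lesssim\QU{\basis}\sqrt{S/\bigM}$. Separately, the orthobasis identity yields the easier comparison $\twonorm{X_{R}(\beta)}\le\frobnorm{X_{R}(\beta)}=\twonorm{\beta}$ for arbitrary $\beta$, which will be useful below.

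The main obstacle is estimating $\gamma_{2}(\mathcal{A},\twonorm{\cdot})$, and the plan is to do so via Dudley's entropy integral with two complementary covering bounds. Since $\twonorm{A_{\alpha}-A_{\alpha'}}=M^{-1/2}\twonorm{X_{R}(\alpha-\alpha')}$, the task is to cover the image under $X_{R}$ of the $2S$-sparse unit sphere in the spectral norm. For small radii $\rrad$, the Euclidean comparison $\twonorm{X_{R}(\beta)}\le\twonorm{\beta}$ lets me import the standard $\rrad$-net on $S$-sparse unit vectors, whose log-cardinality is $\lesssim S\log(e\bigN/(S\rrad))$. For larger $\rrad$, a Maurey-type empirical argument (in the spirit of~\cite{rudelson2008fourier,Krahmer2012}) exploits the sharper bound $\twonorm{X_{R}(e_{\bign})}\le\QU{\basis}/\sqrt{J}$ on individual column images to cover each fixed-support slice in the spectral norm, and a union bound over the $\binom{\bigN}{S}$ supports contributes another $S\log(e\bigN/S)$ to the entropy. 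Feeding the piecewise estimate into Dudley's integral should yield
\begin{equation*}
\gamma_{2}(\mathcal{A},\twonorm{\cdot})\lesssim_{\sgn}\QU{\basis}\sqrt{\frac{S}{\bigM}}\,\log S\log\bigN.
\end{equation*}

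Plugging $d_{F}(\mathcal{A})=1$, $d_{2}(\mathcal{A})\lesssim\QU{\basis}\sqrt{S/\bigM}$, and the above $\gamma_{2}$-bound into the tail inequality of~\cite{Krahmer2012} with deviation parameter $u\asymp\log\bigN\log^{2}S$, each of the terms controlling the supremum is of order $\QU{\basis}\sqrt{S/\bigM}\,\log S\log\bigN$ or its square. Forcing the largest of these below $\RIPcondgl$ returns exactly $\bigM\gtrsim_{\sgn}\RIPcondgl^{-2}\QUsq{\basis}\,S\,\log^{2}S\,\log^{2}\bigN$, and the choice of $u$ delivers the tail $O(\bigN^{-\log\bigN\log^{2}S})$. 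Structurally, the proof must diverge from that of Theorem~\ref{thm:DBD main thm}: since all blocks share the same $\B$, one cannot reduce the problem to independent sub-problems, and the analysis is forced to control the full cross-block geometry of $X_{R}(\alpha)$, which is precisely what $\QU{\basis}$ quantifies.
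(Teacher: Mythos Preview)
Your overall architecture is exactly the paper's: rewrite $\twonorm{\RBD x}^2$ as $\twonorm{A_\alpha\phi}^2$ for a sub-Gaussian vector $\phi$, invoke the chaos tail bound of Theorem~\ref{thm:new Dudley}, and control the $\gamma_2$-functional via Dudley with a volumetric/Maurey split. Your Kronecker representation $A_\alpha=M^{-1/2}(X_R(\alpha)^T\otimes I_M)$ agrees with the paper's block-diagonal $A_R(\alpha)$ up to a permutation, and your values $d_F=1$ and $d_2\le\QU{\basis}\sqrt{S/\bigM}$ match Section~\ref{sec:calc quantities for RBD}. (Incidentally, the proof does \emph{not} diverge from the DBD case: the paper runs an identical template, merely swapping $\modcohU{\basis}$ for $\QU{\basis}$ via Lemma~\ref{lemma:prop of X RBD norm}.)

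The gap is in your large-radius covering estimate. You propose to run the empirical (Maurey) method on each fixed-support slice and then union over the $\binom{\bigN}{S}$ supports. That union contributes an additive $S\log(e\bigN/S)$ to the log-covering number which does \emph{not} decay as the radius grows. Since your small-radius volumetric bound already carries this same $S\log\bigN$ floor, the log-covering number is bounded below by a constant times $S\log\bigN$ over the entire range $(0,d_2]$, and Dudley's integral then forces
\[
\gamma_2(\mathcal{A},\twonorm{\cdot})\;\gtrsim\; d_2\sqrt{S\log\bigN}\;=\;\QU{\basis}\,\frac{S}{\sqrt{\bigM}}\sqrt{\log\bigN},
\]
a factor of order $\sqrt{S}$ too large; the resulting sample complexity would be quadratic rather than linear in $S$. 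The repair---and this is precisely what the paper does in Lemmas~\ref{lem: main cover no} and~\ref{lem: cover no of real l1 ball}---is to \emph{abandon the support structure} in the Maurey regime: use the inclusion $\sparses{S}/\sqrt{S}\subset\ball{\bigN}{1}$ and run the empirical method on the full $\ell_1$ ball. This yields a log-covering number $\lesssim \QUsq{\basis}\,\rrad^{-2}\bigM^{-1}\log^2\bigN$ with no additive $S\log\bigN$ term, so that near the diameter the entropy is $\approx\log^2\bigN$, independent of $S$; only then does the Dudley integral collapse to the stated $\gamma_2$ bound.
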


A few remarks are in order.
%Since $\QU{\basis}^2$ ranges between $1$ and $J$, we can distinguish two different cases as follows.
%When $\QU{\basis}^2\ge J/\sqrt{S}$ for an orthobasis $\basis$, then $\modQU{\basis}^2=J/\sqrt{S}$ and our lower bound for $\bigM$ scales with $JS$.
At one end of the spectrum, the block-coherence of an orthobasis could equal $\sqrt{J}$  and consequently the required number of measurements above would scale with $JS$. This happens for signals that are sparse, for example, in the time (canonical basis) or frequency domains.
Our result is indeed optimal for both of these bases (up to a poly-logarithmic factor).
The same argument for the canonical basis carries over from the DBD matrices. For the Fourier basis, we note that it is possible to construct certain classes of periodic signals in $\comps^{\bigN}$ that would require the lower bound on $\bigM$ to scale with $JS$.
Consider, for example, the class of signals consisting of all $S$-sparse combinations of columns $1, J+1, \dots, (J-1)N+1$ from $F_{\bigN}$. If $x$ belongs to this class, then, by construction, $x_1,x_2,\cdots,x_J$ (as defined in \eqref{eq:breaking x into chunks}) are all equal because $x$ is periodic with a period of $N$.
As a result, different blocks of $\RBD$ take the same measurements from $x$. Therefore, as was the case with the DBD matrices, obtaining a stable embedding of this class of signals requires $M\gtrsim_{\sgn}\RIPcondgl^{-2} S \log(N/S)$, and equivalently, $\bigM\gtrsim_{\sgn}\RIPcondgl^{-2} JS \log(N/S)$.

%On the other hand, when $\QU{\basis}^2<J/\sqrt{S}$, then $\modQU{\basis}^2=\QU{\basis}^2$ and our bound for $\bigM$ scales with $\QU{\basis}^2S^{3/2}$. This indicates an improvement over the case above, since $\QU{\basis}^2S^{3/2}< JS$.
At the other end of the spectrum, for a generic orthobasis $\rbasis$ we computed that $\QU{\rbasis}\lesssim 1$ with high probability.
For signals that are sparse in this basis (and therefore for many possible orthobases in general), an RBD matrix performs nearly as well as an unstructured Gaussian random matrix. The RBD structure allows us to prove the RIP for certain classes of structured random matrices as a special case of Theorem~\ref{thm:RBD main thm}. In particular, in the next subsection we re-derive an RIP bound for partial random circulant matrices that originally appeared in~\cite{Krahmer2012}. As a byproduct, we also construct a deterministic sparsity basis that achieves a performance similar to the generic orthobasis we have considered above.

\subsubsection{The RIP for Partial Random Circulant Matrices}
\label{sec:Toeplitz}

This section demonstrates that the RBD model, together with Theorem~\ref{thm:RBD main thm}, can be used to derive the RIP for partial random circulant matrices.\footnote{The arguments in this section extend without much effort to the more general case of establishing the RIP for partial random Toeplitz matrices.
}
More specifically, we focus on proving the RIP for $\circmat\in\reals^{J\times P}$, with $J\le P$, defined as
\begin{eqnarray*}
	\circmat = \frac{1}{\sqrt{J}}\left[
	\begin{array}{cccc}
		\rrads_1 & \rrads_2 & \cdots & \rrads_P \\
		\rrads_P & \rrads_1 & \cdots & \rrads_{P-1} \\
		\vdots & \vdots &  & \vdots \\
		\rrads_{P-J+2} & \rrads_	{P-J+3} & \cdots & \rrads_{P-J+1}
	\end{array}
	\right],
\end{eqnarray*}
where $\{\rrads_p\}$, $p\in\compress{P}$, is a sequence of i.i.d.\ zero-mean, unit-variance random variables with sub-Gaussian norm $\sgn$. We let $\rrad$ denote the vector formed by $\{\rrads_p\}$.
% Consider a full $P \times P$ circulant matrix\footnote{
% The full circulant matrix considered in [Rauhut] has a slightly different structure where they have a right shift (instead of the left shift) of the sequence $\xi$ going down the rows. Also, their first row is $[\xi_1, \; \xi_{P}, \; \cdots, \xi_2]$ which is the sequence in reverse and shifted. In any case, our results are equivalent to theirs via permutation of our sequence $\xi$.
% }
% \begin{eqnarray*}
% \Phi^o = \left(
% \begin{array}{cccc}
% \epsilon_1 & \epsilon_2 & \cdots & \epsilon_P \\
% \epsilon_2 & \epsilon_3 & \cdots & \epsilon_1 \\
% \vdots & \vdots &  & \vdots \\
% \epsilon_P & \epsilon_1 & \cdots & \epsilon_{P-1}
% \end{array}
% \right),
% \end{eqnarray*}
% where $\epsilon = [\epsilon_1, \cdots, \epsilon_P]^T$ is a Rademacher sequence.
% Let $R_\Omega \in \reals^{J \times P}$ be a subsampling matrix that performs the function of choosing a subset of $J$ rows of a $P \times P$ matrix, where the subset of rows chosen is denoted by $\Omega = \left[ \omega_1, \cdots, \omega_J \right] \subset [P]$.
% Note that the set $\Omega$ can be arbitrarily chosen.
% Then, a subsampled circulant matrix is $\Phi = \frac{1}{\sqrt{J}} R_{\Omega} \Phi^o$ (the $\frac{1}{\sqrt{J}}$ is for normalizing the columns of $\Phi$).
% The operation of $\Phi$ can be thought of as performing a circular convolution on a signal and then subsampling the output on the set $\Omega$.
%
In order to use Theorem~\ref{thm:RBD main thm} in this setting, we make the following argument: for any signal $x \in \comps^P$, we can write $\circmat x$ as the multiplication of an RBD matrix $\circrbd \in \reals^{J \times PJ}$ and an extended vector $\extx \in \comps^{PJ}$:
\begin{eqnarray}
	\circmat x =
	\overset{\Huge{\circrbd}}{
	\overbrace{
	\left[
	\begin{array}{cccc}
		\rrad^* & & & \\
		& \rrad^* & & \\
		& & \ddots & \\
		& & & \rrad^*
	\end{array}
	\right]}
	}	
	\cdot \frac{1}{\sqrt{J}}
	\overset{\extx}{
	\overbrace{
	\left[
	\begin{array}{c}
		\shiftop{0} x \\
		\shiftop{1} x \\
		\vdots \\
		\shiftop{J-1} x
	\end{array}
	\right]
	}
	}
	=:\circrbd\cdot\frac{1}{\sqrt{J}}\extx,
	\label{eq:PC_RBD_formulation}
\end{eqnarray}
where $\shiftop{}$ is the cyclic shift-up operator on column vectors in $\comps^P$. The next lemma (proved in \ref{sec:orthobasis for Toeplitz}) states that if $x$ is sparse, then $\extx$ has a sparse representation in a favorable orthobasis $\circbasis$ (constructed in the proof).
\begin{lemma}
	\label{lem:orthobasis for Toeplitz}
	There exists an orthobasis $\circbasis$ with $\QU{\circbasis}=1$, such that every $\extx$ has an $S$-sparse representation in $\circbasis$ if the corresponding $x$ is $S$-sparse. That is, for every such $\extx$, there exists $\extxtwo$ with $\|\extxtwo\|_0\le S$ such that $\extx/\sqrt{J}=\circbasis\extxtwo$.
\end{lemma}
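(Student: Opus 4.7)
The plan is to build $\circbasis$ by a Kronecker-type ansatz that respects the cyclic-shift structure of $\extx$ and uses a Fourier basis across the blocks. Identify $\comps^{PJ}\cong\comps^{P}\otimes\comps^{J}$ so that $e_{p}\otimes e_{j}$ is the coordinate at row $p$ of block $j$, let $\omega:=e^{2\pi\iunit/J}$, and for $p\in\compress{P}$ let $\pi_{j}(p)\in\compress{P}$ be defined by $\shiftop{j-1}e_{p}=e_{\pi_{j}(p)}$. I would declare the $(p,k)$th column of $\circbasis$ to be
\begin{equation*}
v_{p,k}\;:=\;\frac{1}{\sqrt{J}}\sum_{j=1}^{J}\omega^{(k-1)(j-1)}\,e_{\pi_{j}(p)}\otimes e_{j},\qquad p\in\compress{P},\ k\in\compress{J}.
\end{equation*}
The $k=1$ slice $\{v_{p,1}\}_{p\in\compress{P}}$ is precisely the family of normalized extensions of the canonical vectors; the Fourier-in-$k$ phases are what will flatten the reshaped columns and saturate the lower bound in \eqref{eq:range of Q}.

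Given this ansatz, the proof reduces to three independent checks. First (sparsity): writing $x=\sum_{p}x_{p}e_{p}$ and $\extx=\sum_{j}\shiftop{j-1}x\otimes e_{j}=\sum_{j,p}x_{p}\,e_{\pi_{j}(p)}\otimes e_{j}$, one reads off $\extx/\sqrt{J}=\sum_{p}x_{p}\,v_{p,1}$, so whenever $\|x\|_{0}\le S$ the coefficient vector $\extxtwo$ is supported in the $k=1$ slice and has at most $S$ nonzero entries.

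Second (orthonormality): for $(p,k),(p',k')\in\compress{P}\times\compress{J}$, expanding $\<v_{p,k},v_{p',k'}\>$ and using $\<e_{\pi_{j}(p)}\otimes e_{j},e_{\pi_{j'}(p')}\otimes e_{j'}\>=\delta_{jj'}\delta_{\pi_{j}(p),\pi_{j}(p')}=\delta_{jj'}\delta_{pp'}$ (since each $\pi_{j}$ is a bijection of $\compress{P}$), collapses the double sum to $\delta_{pp'}\cdot\frac{1}{J}\sum_{j=0}^{J-1}\omega^{(k'-k)j}=\delta_{pp'}\delta_{kk'}$ by the standard root-of-unity identity. Thus the $PJ$ vectors $\{v_{p,k}\}$ form an orthobasis of $\comps^{PJ}$.

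Third (block-coherence): reshaping $v_{p,k}$ into the $P\times J$ matrix $X_{R}(v_{p,k},\circbasis)$, its $j$th column equals $\omega^{(k-1)(j-1)}e_{\pi_{j}(p)}/\sqrt{J}$. The hypothesis $J\le P$ ensures that the $J$ indices $\pi_{1}(p),\ldots,\pi_{J}(p)$ are all distinct (they are $J$ consecutive cyclic shifts of $p$ modulo $P$), so these columns have pairwise disjoint supports and common norm $1/\sqrt{J}$; hence $\twonorm{X_{R}(v_{p,k},\circbasis)}=1/\sqrt{J}$ for every $(p,k)$. This yields $\QU{\circbasis}\le 1$, and combined with the universal lower bound $\QU{\circbasis}\ge 1$ from \eqref{eq:range of Q} one concludes $\QU{\circbasis}=1$. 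The only nontrivial step is guessing the ansatz; the remaining calculations are short. The conceptual point is that the blockwise Fourier phase $\omega^{(k-1)(j-1)}$ is exactly what is needed to orthogonalize the "extra" directions $k\ge 2$ from the sparse directions $k=1$ without disturbing the equal-energy, disjoint-support geometry of the reshaped columns.
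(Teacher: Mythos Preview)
Your construction is correct and is exactly the paper's basis: the paper starts from $T'=F_{J}\otimes I_{P}$, cyclically shifts the rows of the $j$th $P\times PJ$ row-block up by $j-1$, and the resulting columns are precisely your $v_{p,k}$ (with $x_e=[x^T,0,\ldots,0]^T$ playing the role of your coefficient vector supported on the $k=1$ slice). One small notational slip: in the block-coherence step the object you reshape is the \emph{column} $v_{p,k}$ of $T$, which in the paper's notation is $X_R(e_{\tilde n},T)$ for the column index $\tilde n$ corresponding to $(p,k)$, not $X_R(v_{p,k},T)$; your computation of that reshape and its spectral norm is correct.
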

Therefore, the $\circbasis$-RIP for $\circrbd$ implies the RIP for $\circmat$.
%More specifically, after setting $M=1$, Theorem~\ref{thm:RBD main thm} implies that $\RIPcond{S}(\circmat)\le\RIPcondgl$ except with a probability of at most $(PJ)^{-1}$, provided that
%\[
%	J\gtrsim\log^{5}(PJ)\cdot\RIPcondgl^{-2}\cdot S^{\frac{3}{2}},
%\]
%which is, up to logarithmic factors, equivalent to Theorem 1.2 in~\cite{rauhut2011restricted}.
To be more specific, after setting $M=1$, Theorem~\ref{thm:RBD main thm} implies that $\RIPcond{S}(\circmat)\le\RIPcond{S}(\circrbd,T)\le\RIPcondgl$ except with a probability of at most $O(P^{-\log P\log^2 S})$, provided that
\[
	J\gtrsim_{\sgn} \RIPcondgl^{-2}\cdot S \log^2 S\log^{2}P,
\]
which is equivalent to Theorem 1.1 in~\cite{Krahmer2012}.

% The extension of this result to subsampled Toeplitz matrices is easy.
% Suppose we have a length-$N$ signal $s$ which we want to convolve with the sequence $\xi$ and then take arbitrary subsamples of the result.
% Using a common trick for converting convolution into circular convolution, we zero-pad the signal $s$ into a length-$P$ signal $x$ with $P = 2N - 1$, thus $x = \left[s, \; 0, \; \cdots,\; 0 \right]^T$.
% If $s$ is $S$-sparse, then so is $x$. Also, $\|s\|_2 = \|x\|_2$.
% Therefore, the analysis for partial circulant matrices carries over here.
% Thus, if $J \gtrsim \log^4(N) \delta^{-2} S^{3/2}$, then a subsampled Toeplitz matrix has RIC $\delta$.

%All this is nice and good. One question however is why can't we choose the subsampling support to be $\Omega = [1, 1, \cdots, 1]$?
%Now, the basis $U$ formed with this support $\Omega$ will still be a basis, i.e., the columns are still orthonormal.
%However, what breaks is the quality factor $\gamma$.
%It turns out that the matrix $\left( U_1 e_{\widetilde{n}} \; \cdots \; U_J e_{\widetilde{n}}  \right)$ has a row of ones.
%Thus, $\gamma = \sqrt{J}$.

\section{Numerical Simulations}
\label{sec:simulations}

This section contains a series of simulations that are intended to reinforce our findings for the reader. An ideal scenario might involve generating several random block diagonal matrices while varying the sparsity level $S$ and the number of measurements $\bigM$ and measuring the fraction of realizations in which the RIC falls below a fixed threshold. Unfortunately, checking the RIP for a matrix is known to be an NP~hard problem~\cite{tillmann2012computational}, although certain algorithms~\cite{journee2010generalized,dossal2010numerical} have been proposed for computing lower bounds on the RIC for moderately sized random matrices. In particular, \cite{dossal2010numerical} contains a fast heuristic algorithm for this purpose.

As we have discussed in Section~\ref{sec:CS intro}, one context where the RIP often arises is in the analysis of sparse signal recovery algorithms such as Basis Pursuit (BP)~\cite{Candes2008589}. In fact, for dense random matrices, it is known that the RIP serves as a rather pessimistic criterion to ensure the success of sparse signal recovery algorithms (particularly BP); we refer the interested reader to~\cite{blanchard2011compressed,blanchard2011phase,donoho2009observed,monajemi2013deterministic} for a more comprehensive discussion. In this section, we use empirical signal recovery performance---with a limited number of signals---to illustrate how various choices for the sparsity basis will interact with a block diagonal measurement matrix. To be clear, this is different from checking whether the RIP holds. What we see, however, is that the same bases which are most favorable for our RIP bounds tend to also be most favorable for empirical signal recovery performance.

We now detail the setup in this section. With $N=100$ and $J=10$ (and consequently, $\bigN=1000$), we generate a single $\bigN\times\bigN$ DBD matrix whose entries are i.i.d.\ Gaussian random variables having zero mean and unit standard deviation. Note that each diagonal block is of size $N\times N$. For each pair $(S,M)\in\compress{\bigM}\times \compress{N}$, the following procedure is executed. An $\bigM\times\bigN$ DBD matrix $\DBD$ is formed by keeping (and appropriately normalizing) the first $M$ rows of each diagonal block of the large $\bigN\times\bigN$ matrix. Then $20$ random $S$-sparse signals in the canonical basis of $\comps^{\bigN}$ are generated. These sparse signals are measured using $\DBD$ and reconstructed back (from incomplete measurements) via BP.\footnote{In order to implement BP, we used YALL1, a package for solving $\ell_1$ problems~\cite{YALL1,yang2011alternating}.
%In order to implement BP, we used CVX, a package for specifying and solving convex programs~\cite{cvx,grant2008graph}.
} We deem the recovery successful if the relative $\ell_2$ reconstruction error is less than the fixed threshold $10^{-2}$.  The fraction of successful recovery is recorded and this procedure is repeated for other pairs  $(S,M)$. To be clear, we use a fixed measurement matrix to generate the data corresponding to each pair $(S,M)$. As mentioned earlier, the $\widetilde{M}\times\widetilde{N}$ measurement matrix itself is generated from the $\widetilde{N}\times\widetilde{N}$ matrix fixed throughout the experiment. The resulting phase transition graph is depicted in Figure~\ref{fig:DBD simulation canonical}, where the color of each pixel ranges from black for perfect recovery in every realization to white for failed recovery every time.

We repeat the above simulation for signals that are sparse in the Fourier and generic bases (see Lemma~\ref{lem:coh of rand basis}). A new generic basis is generated in each iteration. All of the above simulations are then repeated with an RBD matrix. The results of the simulations are displayed in Figures~\ref{fig:DBD simulations} and~\ref{fig:RBD simulations}. In the simulations with DBD matrices, recovery of canonical and frequency sparse signals are, respectively, the least and most successful of the three instances. Signals that are sparse in the generic bases can be recovered nearly as well as frequency sparse signals.

In the simulations with RBD matrices, signals that are sparse in the generic bases are recovered best, while the results for the canonical and Fourier bases are less satisfactory. These observations mirror our findings in Section~\ref{sec:main results}. We do point out, however, that for the case of RBD measurement matrices we are able to recover frequency sparse signals somewhat better than signals that are sparse in the canonical basis. We note that such difference is not reflected in our RIP bounds. While this performance difference could be due to a lack of explicit numerical constants in our results, it is more likely an artifact of our simulations: We are not directly confirming the RIP but rather testing the recovery of randomly generated test signals, and those few signals which make it difficult to satisfy the RIP may be more pathological in the Fourier basis than in the canonical basis.

\begin{figure}[H]
	\centering

	\begin{subfigure}[b]{0.45\linewidth}
	  \centering
	  \includegraphics[scale=.40]{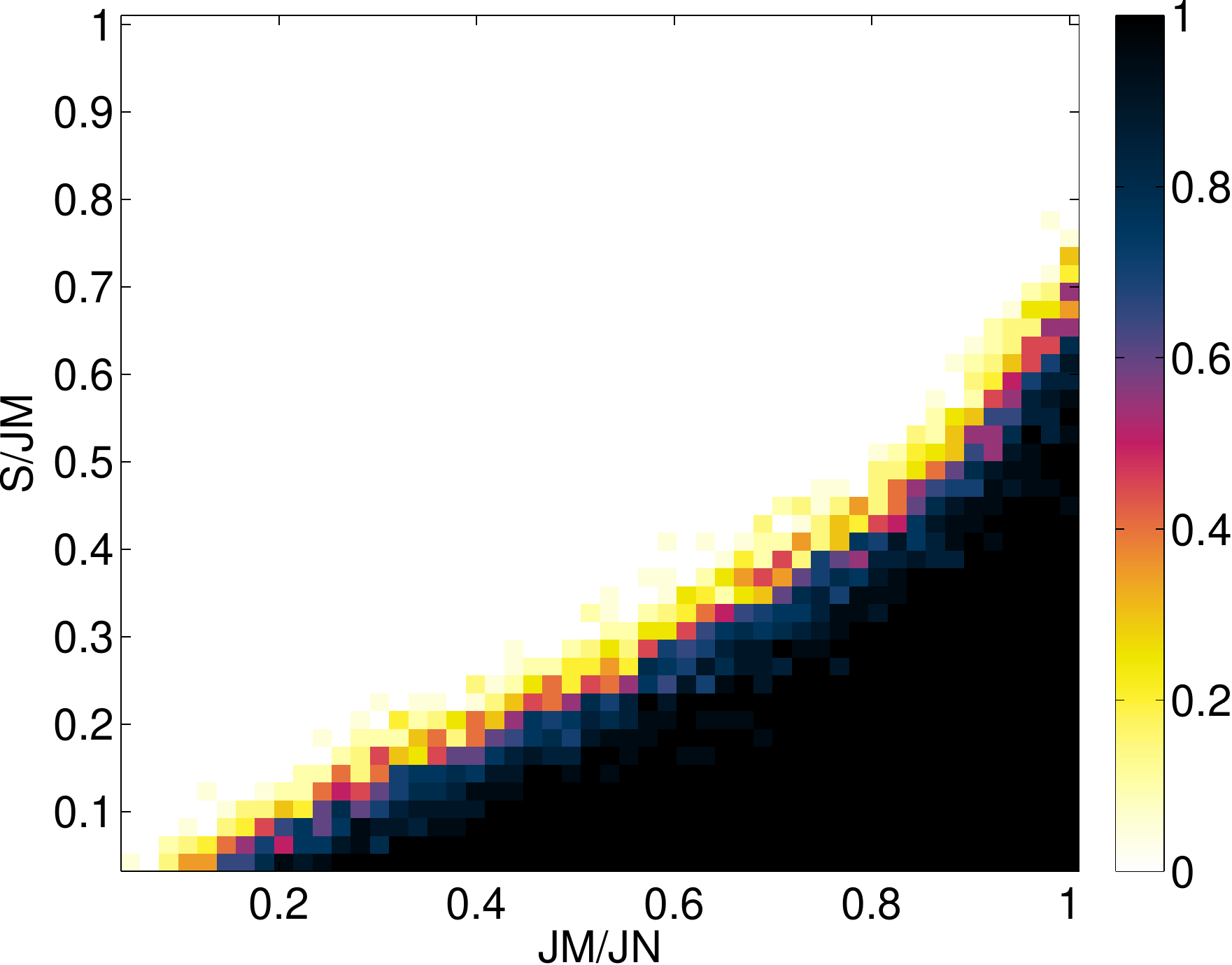}
	  \caption{Canonical basis}
	  \label{fig:DBD simulation canonical}
	\end{subfigure}
	\begin{subfigure}[b]{0.45\linewidth}
	  \centering
	  \includegraphics[scale=.4]{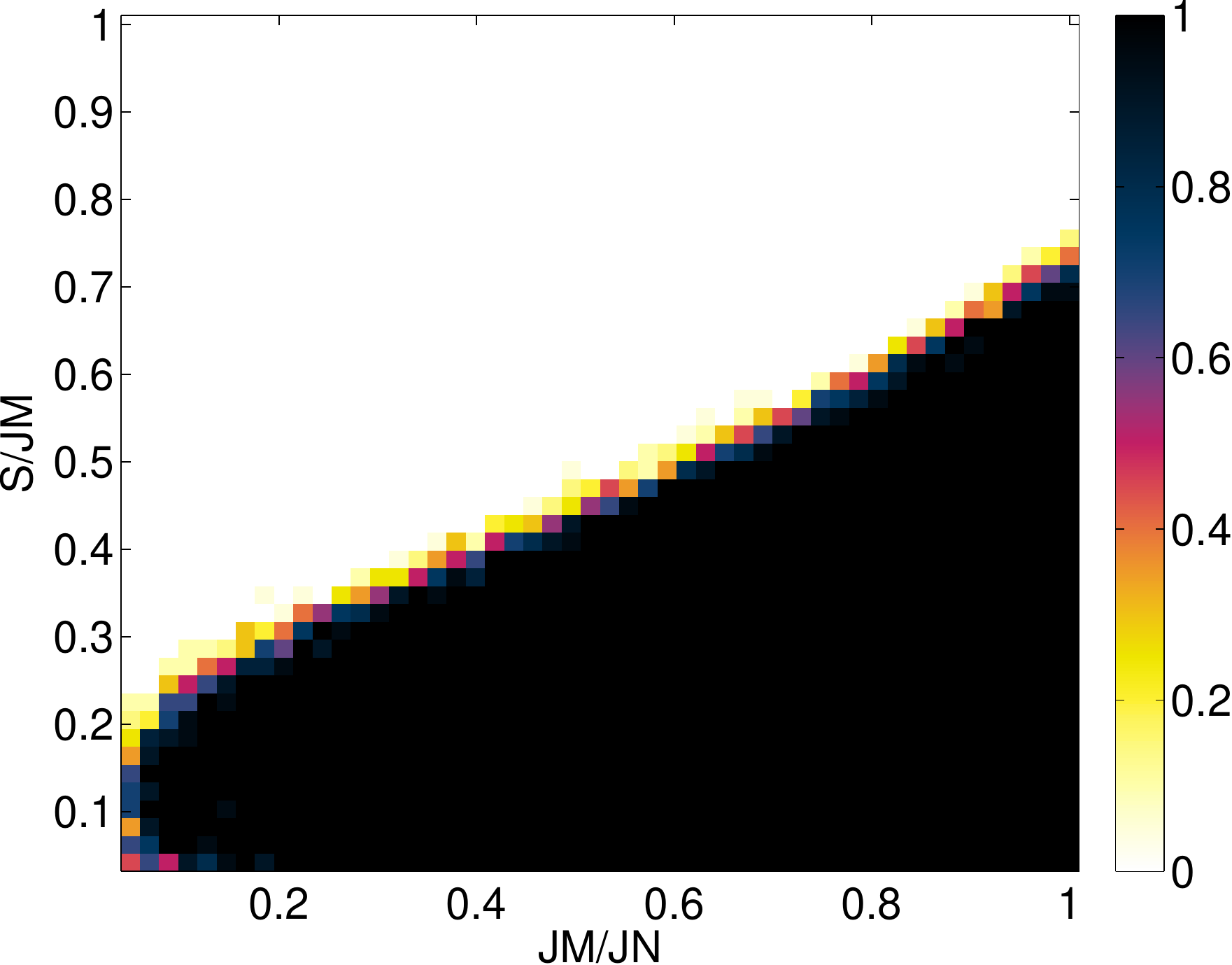}
	  \caption{Fourier basis}
	\end{subfigure}
	
	\begin{subfigure}[b]{0.45\linewidth}
	  \centering
	  \includegraphics[scale=.4]{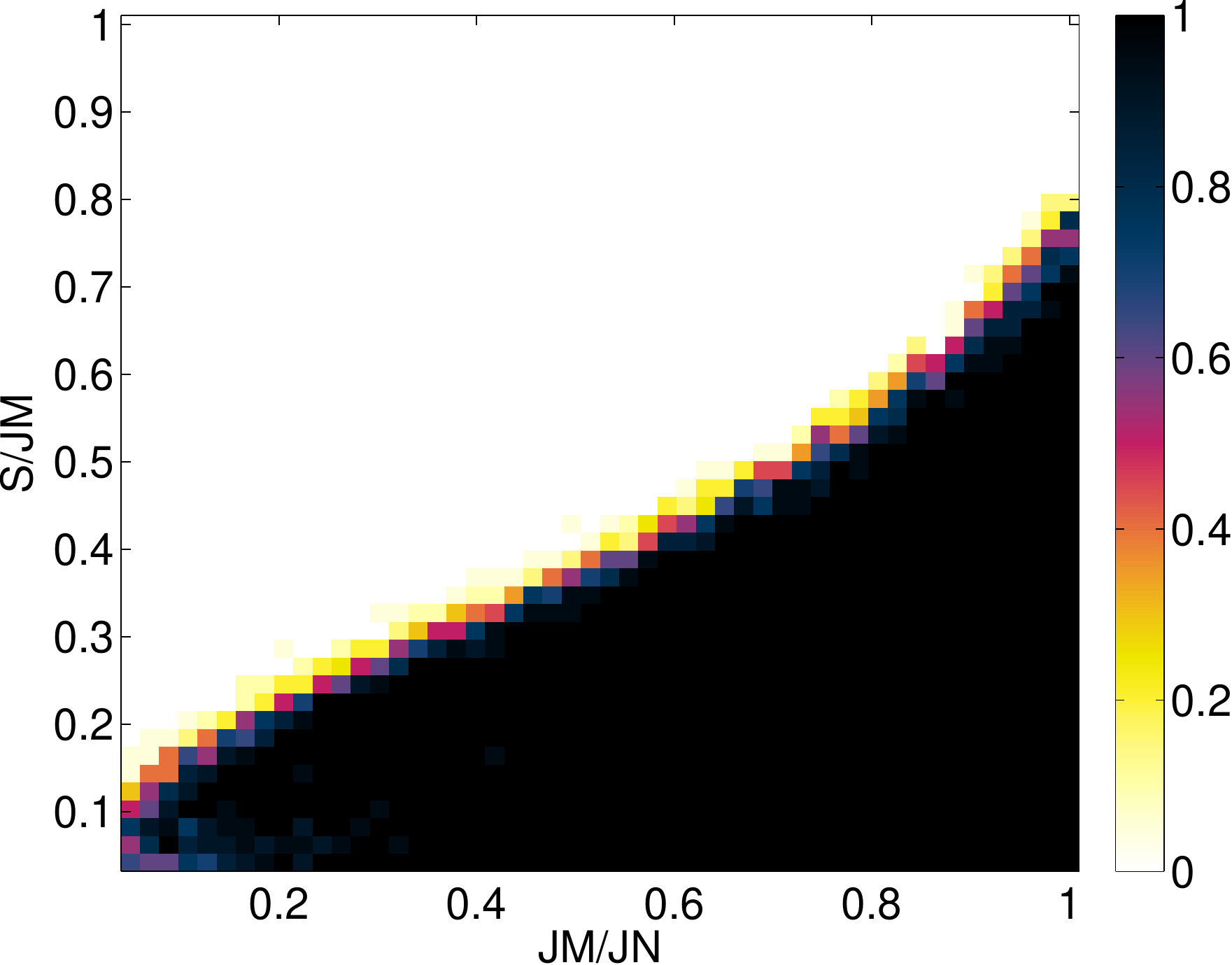}
	  \caption{Generic basis}
	\end{subfigure}

	\caption{Simulation results for DBD matrices.  Refer to Section~\ref{sec:simulations} for details.}

	\label{fig:DBD simulations}

\end{figure}

\begin{figure}[H]
	\centering
	\begin{subfigure}[b]{0.45\linewidth}
	  \centering
	  \includegraphics[scale=.4]{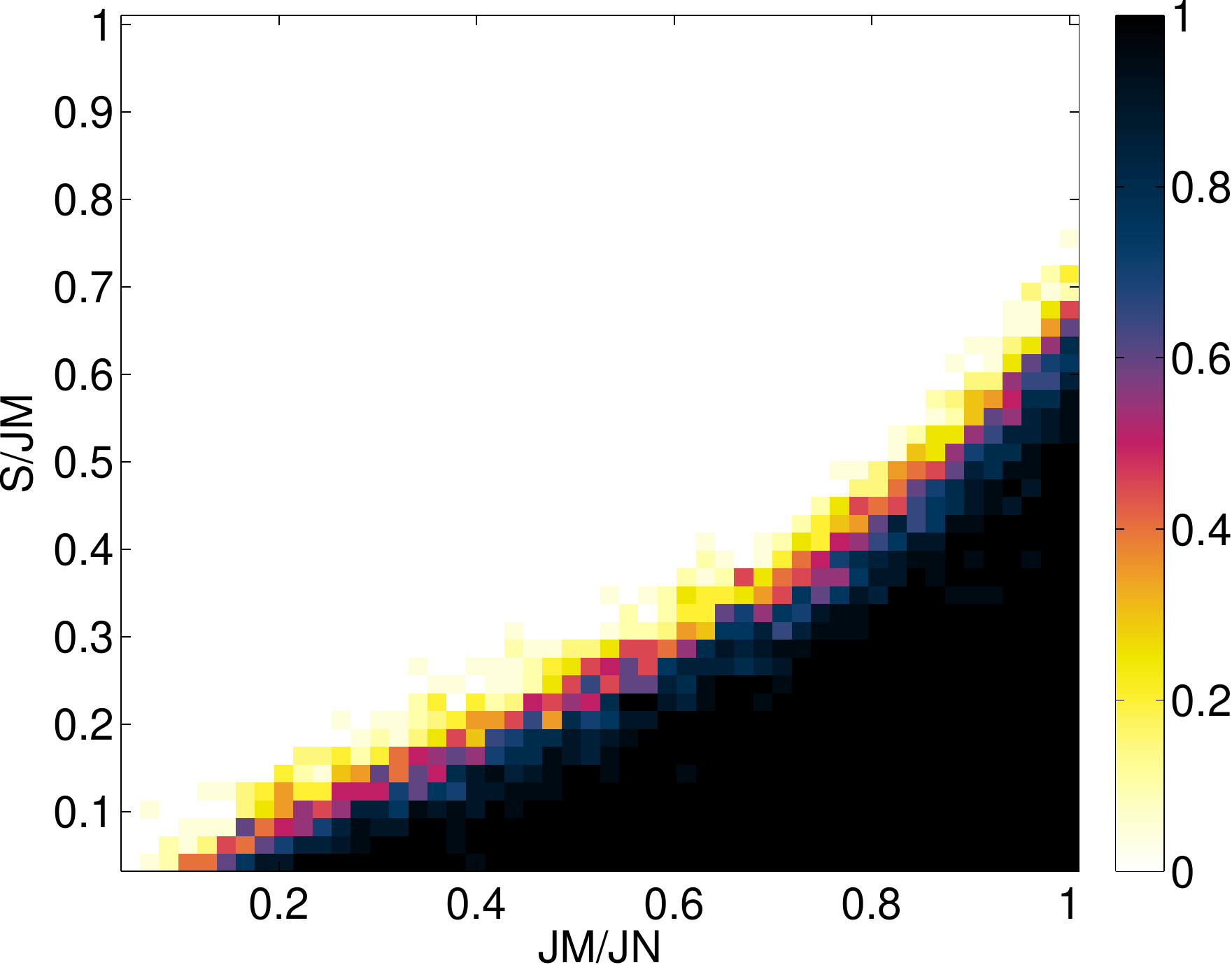}
	  \caption{Canonical basis}
	\end{subfigure}
	\begin{subfigure}[b]{0.45\linewidth}
	  \centering
	  \includegraphics[scale=.4]{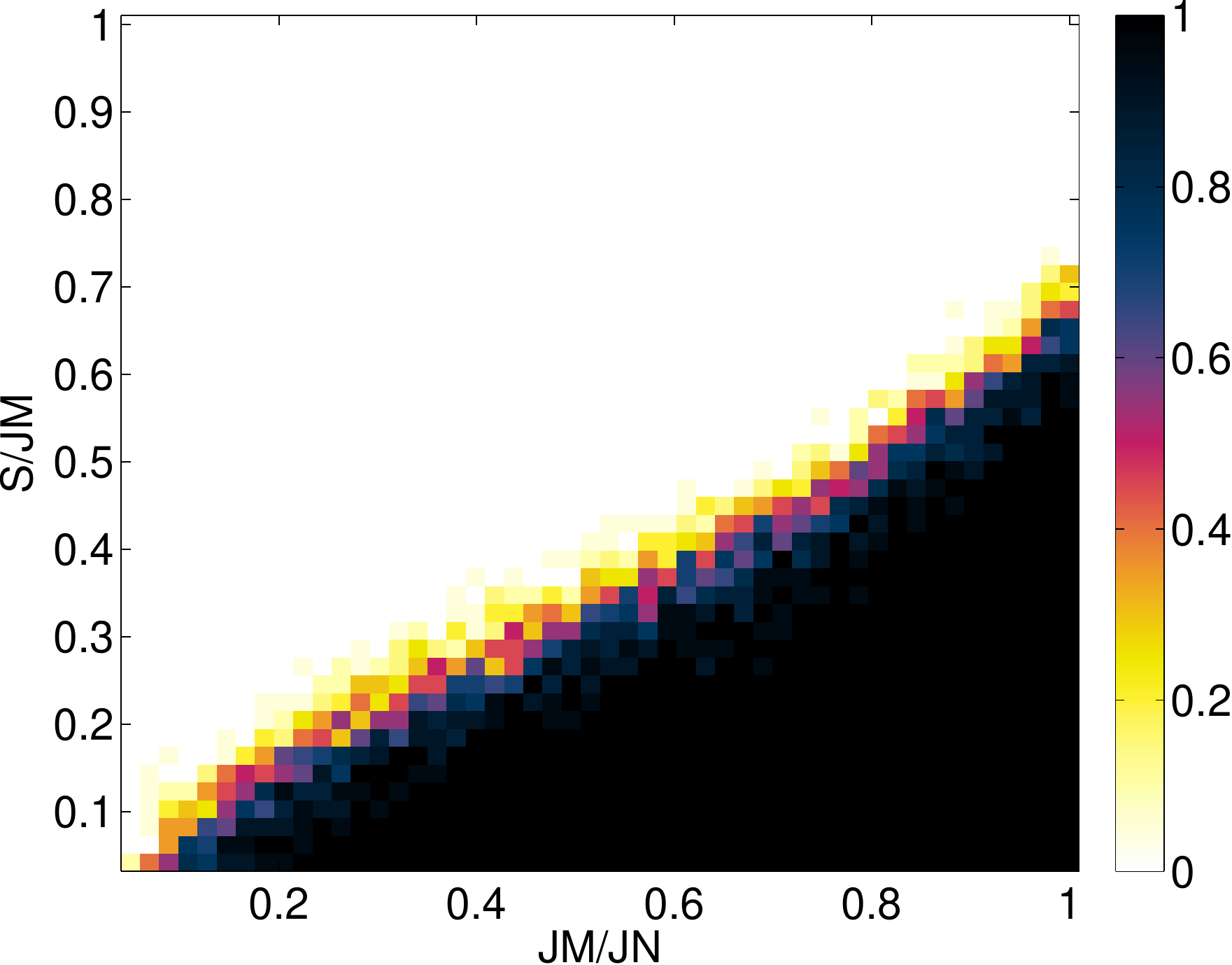}
	  \caption{Fourier basis}
	\end{subfigure}
	
	\begin{subfigure}[b]{0.45\linewidth}
	  \centering
	  \includegraphics[scale=.4]{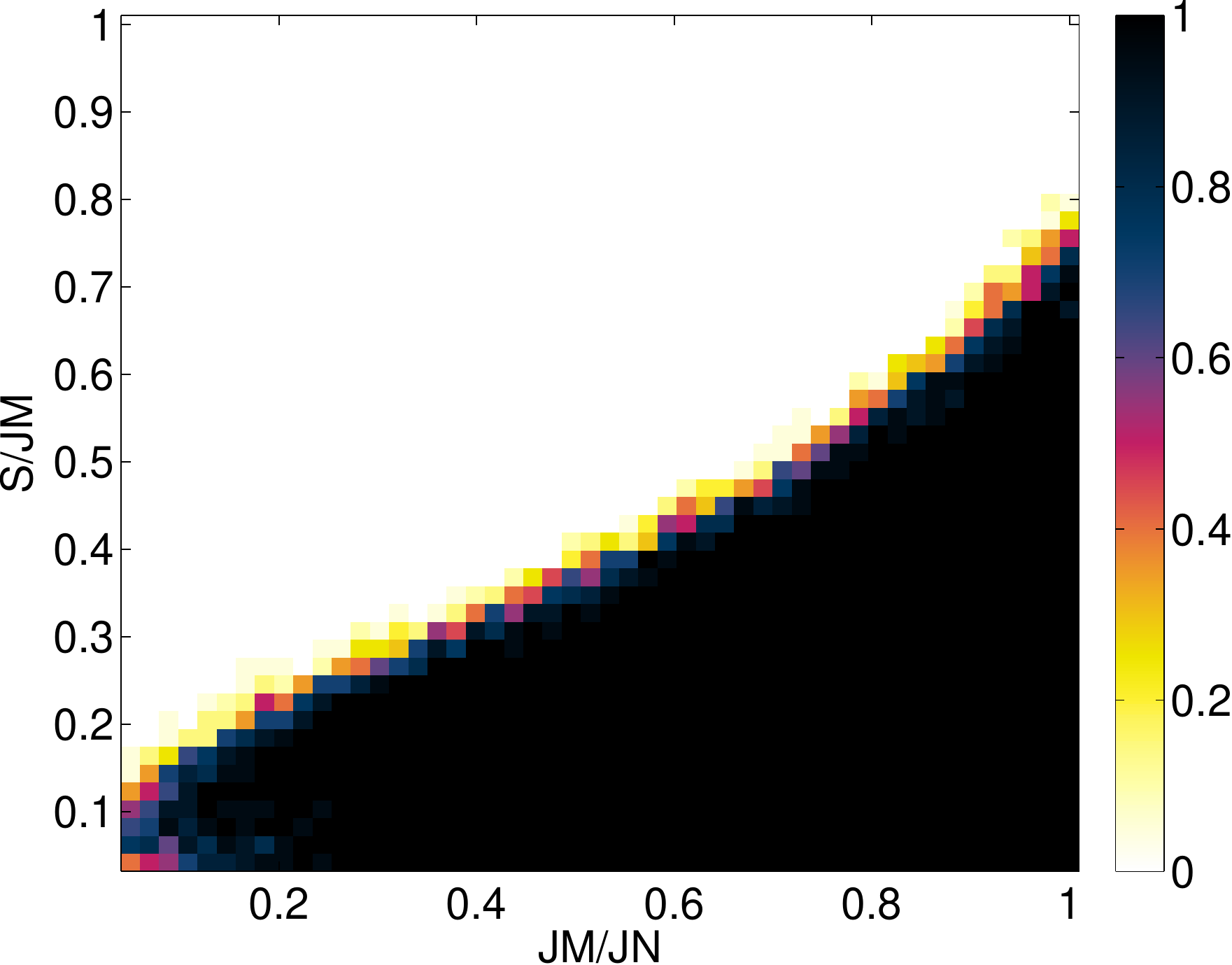}
	  \caption{Generic basis}
	\end{subfigure}

	\caption{Simulation results for RBD matrices. Refer to Section~\ref{sec:simulations} for details.}

	\label{fig:RBD simulations}

\end{figure}

\section{Proofs of Theorems~\ref{thm:DBD main thm} and~\ref{thm:RBD main thm}}
\label{sec:proof of DBD RBD main result}

\subsection{Preliminaries}

First, define the set of all $S$-sparse signals with unit norm as
\begin{equation}\label{eq:def of sparses}
	\sparses{S}:=\left\{\alpha\in\comps^{\bigN}\,:\,\zeronorm{\alpha}\le S,\,\twonorm{\alpha} = 1\right\}.
\end{equation}
With this definition, we observe that the RIC for DBD matrices can be written as
\begin{eqnarray*}
	\RIPcond{S} &=& \sup_{\alpha \in \sparses{S}}\left| \|\DBD \cdot x(\alpha)\|_2^2 - 1 \right|,
\end{eqnarray*}
where we leveraged the fact that $\twonorm{\alpha}=1$ implies
\begin{eqnarray*}
	\E{\|\DBD \cdot x(\alpha)\|_2^2}{} = \alpha^* \basis^* \E{\DBD^* \DBD}{} \basis \alpha = \alpha^* \alpha = 1.
\end{eqnarray*}
The RIC for RBD matrices can be similarly written as
\begin{eqnarray*}
	\RIPcond{S} &=& \sup_{\alpha \in \sparses{S}}\left| \|\RBD \cdot x(\alpha)\|_2^2 - 1 \right|.
\end{eqnarray*}
Given $\RIPcondgl<1$ and under the conditions in Theorems~\ref{thm:DBD main thm} and~\ref{thm:RBD main thm}, our objective is to show that $\RIPcond{S} \le \RIPcond{}$ for both DBD and RBD matrices. To achieve this goal, we require the following powerful result due to Krahmer et al.:
\begin{thm}\emph{\cite[Theorem 3.1]{Krahmer2012}}
	\label{thm:new Dudley}
	Let $\setofmat\subset\comps^{\overline{M}\times\overline{N}}$  be a set of matrices, and let $\rrad$ be a random vector whose entries are i.i.d.,\ zero-mean, unit-variance random variables with sub-Gaussian norm $\sgn$. Set
	\begin{align*}
		d_F(\setofmat) &:= \sup_{A \in \setofmat} \|A\|_F, \\
		d_2(\setofmat) &:= \sup_{A \in \setofmat} \|A\|_2,
	\end{align*}
	and
		\begin{eqnarray*}
		\parone &:=&\gammafcn_2\left(\setofmat, \| \cdot \|_2 \right) \left( \gamma_2\left(\setofmat, \| \cdot \|_2 \right) + d_F(\setofmat) \right) + d_F(\setofmat) d_2(\setofmat), \\
		\partwo &:=& d_2(\setofmat) \left( \gamma_2\left(\setofmat, \| \cdot \|_2 \right) + d_F(\setofmat) \right),\\
		\parthree &:=& d_2^2(\setofmat).
	\end{eqnarray*}
	Then, for $\dev > 0$, it holds that
	\begin{eqnarray*}
		\log\Proba{\sup_{A \in \setofmat} \left| \|A \rrad\|_2^2 - \Eb{\|A \rrad\|_2^2}{}\right| \gtrsim_\sgn \parone + \dev}{} \lesssim_\sgn - \min\left( \frac{\dev^2}{\partwo^2}, \frac{\dev}{\parthree} \right).
	\end{eqnarray*}
\end{thm}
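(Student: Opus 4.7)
The plan is to prove this chaos tail bound by combining a Hanson--Wright increment estimate for sub-Gaussian quadratic forms with a generic chaining argument tailored to Bernstein-type (mixed sub-Gaussian/sub-exponential) processes, and then translating the resulting chaining functionals into the single operator-norm functional $\gamma_2(\setofmat,\|\cdot\|_2)$ together with the scalar diameters $d_F(\setofmat)$ and $d_2(\setofmat)$ appearing in the statement.

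First I would set $Z_A := \|A\rrad\|_2^2 - \mathbb{E}\|A\rrad\|_2^2 = \rrad^{*}(A^{*}A)\rrad - \tr{A^{*}A}$ for $A\in\setofmat$. The Hanson--Wright inequality for sub-Gaussian vectors, applied to the matrix $A^{*}A - B^{*}B$, yields the increment tail
\begin{equation*}
	\Proba{|Z_A - Z_B| \gtrsim_\sgn \sqrt{u}\,\|A^{*}A - B^{*}B\|_F + u\,\|A^{*}A - B^{*}B\|_2}{} \le 2 e^{-u}
\end{equation*}
for every $u>0$. Hence $(Z_A)_{A\in\setofmat}$ is a process with Bernstein-type increments controlled by the sub-Gaussian ``chaos metric'' $d^{\Delta}_{F}(A,B):=\|A^{*}A - B^{*}B\|_F$ and the sub-exponential chaos metric $d^{\Delta}_{2}(A,B):=\|A^{*}A - B^{*}B\|_2$.

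Next I would invoke a generic chaining theorem for processes whose increments satisfy such a mixed tail (a Dirksen-type refinement of Talagrand's majorizing measure framework). Picking a reference $A_0\in\setofmat$, this produces a deviation bound of the form
\begin{equation*}
	\Proba{\sup_{A\in\setofmat}|Z_A - Z_{A_0}| \gtrsim_\sgn \gamma_2(\setofmat,d^{\Delta}_{F}) + \gamma_1(\setofmat,d^{\Delta}_{2}) + v}{} \lesssim \exp\!\Bigl(-\min\bigl(v^2/D_F^2,\,v/D_2\bigr)\Bigr),
\end{equation*}
where $D_F$ and $D_2$ are the diameters of $\setofmat$ in the two chaos metrics.

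The decisive step is then to convert the chaos metrics back to the operator norm. Using the polarization $A^{*}A - B^{*}B = A^{*}(A-B) + (A-B)^{*}B$, one has $\|A^{*}A - B^{*}B\|_F \le \|A\|_2\|A-B\|_F + \|A-B\|_2\|B\|_F$ and $\|A^{*}A - B^{*}B\|_2 \le (\|A\|_2 + \|B\|_2)\|A-B\|_2$, together with the trivial bounds $\|A-B\|_F \le 2 d_F(\setofmat)$ and $\|A\|_2,\|B\|_2 \le d_2(\setofmat)$. Careful splicing of an optimal $\gamma_2$-admissible sequence for $(\setofmat,\|\cdot\|_2)$ with these geometric estimates yields
\begin{equation*}
	\gamma_2(\setofmat,d^{\Delta}_{F}) + \gamma_1(\setofmat,d^{\Delta}_{2}) \lesssim \gamma_2(\setofmat,\|\cdot\|_2)\bigl(\gamma_2(\setofmat,\|\cdot\|_2) + d_F(\setofmat)\bigr) + d_F(\setofmat)\,d_2(\setofmat) = \parone,
\end{equation*}
while the diameters collapse to $D_F \lesssim \partwo$ and $D_2 \lesssim \parthree$. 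Substituting into the chaining deviation inequality produces exactly the stated tail bound.

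The main obstacle is the admissible-sequence surgery in the last step: one has to replace $\gamma_2(\setofmat,\|\cdot\|_F)$, which would appear in a naive two-parameter chaining bound, by the product $\gamma_2(\setofmat,\|\cdot\|_2)\cdot d_F(\setofmat)$ without incurring logarithmic losses. This is precisely the improvement of Krahmer--Mendelson--Rauhut over the earlier Dudley-integral chaining of Rudelson--Vershynin and Rauhut--Romberg--Tropp, and it requires choosing the chain partition so that the sub-Gaussian part of Hanson--Wright governs scales up to the crossover level, beyond which the sub-exponential part takes over.
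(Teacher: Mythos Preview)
The paper does not prove this theorem at all: it is quoted verbatim as \cite[Theorem~3.1]{Krahmer2012} and used as a black box in Section~\ref{sec:proof of DBD RBD main result}. There is therefore no ``paper's own proof'' to compare your proposal against; the authors only compute $d_F$, $d_2$, and $\gamma_2$ for their particular index sets $\setofmat_D$ and $\setofmat_R$ and then plug into the cited result.

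That said, your outline is a faithful sketch of the Krahmer--Mendelson--Rauhut argument itself: Hanson--Wright increments, generic chaining for mixed sub-Gaussian/sub-exponential tails, and the reduction of the two chaos $\gamma$-functionals to the single $\gamma_2(\setofmat,\|\cdot\|_2)$. Your closing paragraph correctly identifies the crux of the KMR improvement, namely that the naive bound would produce $\gamma_2(\setofmat,\|\cdot\|_F)$, and the whole point is to replace this by $\gamma_2(\setofmat,\|\cdot\|_2)\cdot d_F(\setofmat)$ via the admissible-sequence manipulation. So your proposal is a correct high-level proof plan for the cited theorem, but it is not something the present paper undertakes.
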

Without going into the details, we note that the $\gammafcn_2$-functional of $\setofmat$, $\gammafcn_2\left(\setofmat, \twonorm{\cdot} \right)$, is a geometrical property of $\setofmat$, i.e., the index set of the random process, and is widely used in the theory of probability in Banach spaces \cite{Talagrand2005, ledoux1991probability}.\footnote{The $\gamma_2$-functional of $\mathcal{A}$, $\gamma_2(\mathcal{A},\|\cdot\|_2)$, is not to be confused with the block-coherence of an orthobasis $U$, $\gamma(U)$.} In particular, the following lemma gives an estimate of this quantity.
\begin{lemma}\emph{\cite{Talagrand2005}}
	\label{lemma:useful bnd on gamma2}
	With $\setofmat$ as defined above, it holds that
	\begin{equation}\label{eq:bound on gamma2}
		\gammafcn_2\left(\mathcal{A}, \| \cdot \|_2 \right) \lesssim \int_{0}^{\infty} \log^{\frac{1}{2}} \left(\covera{\setofmat}{\twonorm{\cdot}}{\dummy} \right)\,\,d\dummy.
	\end{equation}

\end{lemma}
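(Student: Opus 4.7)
The plan is to establish this classical Dudley-type bound via the generic chaining framework. I would start by recalling that the $\gammafcn_2$-functional admits the equivalent definition
\[
\gammafcn_2(\setofmat, \twonorm{\cdot}) = \inf \sup_{A \in \setofmat} \sum_{n \geq 0} 2^{n/2} \twonorm{A - \pi_n(A)},
\]
where the infimum ranges over all admissible sequences $(\setofmat_n)_{n \geq 0}$ of subsets of $\setofmat$ with $|\setofmat_0| = 1$ and $|\setofmat_n| \leq 2^{2^n}$ for $n \geq 1$, and $\pi_n(A) \in \setofmat_n$ denotes the spectral-norm projection of $A$ onto $\setofmat_n$. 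My strategy is to build such an admissible sequence directly from a family of $\varepsilon$-nets and then convert the resulting discrete sum into Dudley's entropy integral by summation by parts.

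For the construction, I would define for each integer $n \geq 1$ the critical resolution
\[
r_n := \inf\{r > 0 : \cover{\setofmat}{\twonorm{\cdot}}{r} \leq 2^{2^n}\},
\]
and take $\setofmat_n$ to be a minimal $r_n$-net of $\setofmat$ in the spectral norm, which by construction has at most $2^{2^n}$ elements. I would take $\setofmat_0$ to be any singleton in $\setofmat$, and set $r_0$ equal to the spectral diameter of $\setofmat$, which is bounded by $2 d_2(\setofmat)$. Then $\twonorm{A - \pi_n(A)} \leq r_n$ for every $A \in \setofmat$ and every $n \geq 0$, yielding
\[
\gammafcn_2(\setofmat, \twonorm{\cdot}) \leq \sum_{n \geq 0} 2^{n/2} r_n.
\]

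The remaining step is the passage from the discrete sum to the integral. Writing $r_n = \sum_{k \geq n}(r_k - r_{k+1})$ (valid because $r_n \to 0$ in the finite-dimensional setting at hand) and swapping the order of summation gives
\[
\sum_{n \geq 0} 2^{n/2} r_n = \sum_{k \geq 0}(r_k - r_{k+1}) \sum_{n=0}^{k} 2^{n/2} \lesssim \sum_{k \geq 0} 2^{k/2}(r_k - r_{k+1}),
\]
since the partial geometric sum is bounded by an absolute constant times $2^{k/2}$. By the definition of $r_n$, for every $u < r_n$ one has $\cover{\setofmat}{\twonorm{\cdot}}{u} > 2^{2^n}$ and hence $\log^{1/2} \cover{\setofmat}{\twonorm{\cdot}}{u} \geq 2^{n/2}$, so $\int_{r_{k+1}}^{r_k} \log^{1/2} \cover{\setofmat}{\twonorm{\cdot}}{u}\,du \geq 2^{k/2}(r_k - r_{k+1})$. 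Summing telescopically over $k \geq 0$ (and noting that the integrand vanishes beyond $r_0$) recovers the claimed inequality. The main obstacle is really just reconciling conventions for $\gammafcn_2$ (admissible sequences of subsets versus of partitions, differing constants in the admissibility condition), but any two standard definitions agree up to absolute factors and are therefore interchangeable under $\lesssim$; once the definition is fixed, the construction is essentially forced by the entropy constraint and the Abel summation is routine.
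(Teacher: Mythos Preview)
Your argument is correct and follows the standard route to Dudley's entropy bound: build an admissible sequence out of optimal nets at scales $r_n$, then Abel-sum to pass to the integral. One minor point worth patching is the $n=0$ term: since you define $r_0$ separately as the diameter rather than through the covering-number criterion, the inequality $\log^{1/2}\cover{\setofmat}{\twonorm{\cdot}}{u}\geq 2^{0/2}$ need not hold on all of $(r_1,r_0)$. This is harmless, though, because $r_0$ itself is dominated (up to an absolute constant) by the Dudley integral: for $u$ below half the diameter one needs at least two balls to cover $\setofmat$, so $\int_0^{r_0/2}\log^{1/2}\cover{\setofmat}{\twonorm{\cdot}}{u}\,du\gtrsim r_0$.

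As for comparison with the paper: there is nothing to compare. The paper does not prove this lemma at all; it is stated with a citation to Talagrand's book and used as a black box. Your write-up therefore goes beyond what the paper does by supplying a self-contained derivation, which is the classical one found in Talagrand's treatment of generic chaining.
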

Clearly, we need to express the problem of bounding the RIC of DBD and RBD matrices in a form that is amenable to the setting of Theorem~\ref{thm:new Dudley}. First, for DBD matrices, let us define $X_{D,j}\in\comps^{M\times MN}$, $j\in\compress{J}$, as
\begin{equation}
	X_{D,j}(\alpha)=X_{D,j}(\alpha,\basis):=\left[
	\begin{array}{cccc}
		x_j^*(\alpha) & & &\\
		& x_j^*(\alpha) & &\\
		& & \ddots &\\
		& & & x_j^*(\alpha)\\
	\end{array}
	\right].
\end{equation}
It can then be easily verified that
\begin{align*}
	\twonorm{\DBD \cdot x(\alpha)}^2 &= \sum_{j\in\compress{J}} \twonorm{\BD_j\cdot x_j(\alpha)}^2\\
	& = \sum_{j\in\compress{J}} \twonorm{X_{D,j}(\alpha)\cdot\vec(\BD_j^*)}^2\\
	& \underset{\mbox{\tiny i.d.}}{=} \sum_{j\in\compress{J}} \twonorma{\frac{1}{\sqrt{M}}X_{D,j}(\alpha)\cdot\rrad_j}^2\\
	& \underset{\mbox{\tiny i.d.}}{=} \twonorm{\A_D(\alpha)\cdot \rrad}^2,
\end{align*}
where the linear map $\A_D:\sparses{S}\rightarrow\comps^{\bigM\times JMN}$ is defined as
\begin{equation}\label{eq:def of AD}
	\A_D(\alpha) =\A_D(\alpha,\basis):= \frac{1}{\sqrt{M}}\left[
	\begin{array}{cccc}
		X_{D,1}(\alpha) & & &\\
		& X_{D,2}(\alpha) & &\\
		& & \ddots &\\
		& & & X_{D,J}(\alpha)\\
	\end{array}
	\right],
\end{equation}
and entries of $\rrad_j\in\reals^{MN}$, $j\in\compress{J}$, and $\rrad\in\reals^{JMN}$ are i.i.d.\ zero-mean, unit-variance random variables with sub-Gaussian norm $\sgn$. The index set of the random process is
\begin{equation}\label{eq:def of cal AD}
\setofmat_D:=\{\A_D(\alpha)\,:\,\alpha\in\sparses{S}\}.
\end{equation}
We have therefore completely expressed the DBD problem in the setting of Theorem~\ref{thm:new Dudley}, where $A$ and $\mathcal{A}$ are replaced with $A_D$ and $\mathcal{A}_D$, respectively.

Next, for RBD matrices, we observe that
\begin{align}
	\twonorm{\RBD \cdot x(\alpha)}^2 & =\sum_{j\in\compress{J}}\twonorm{\B \cdot x_j(\alpha)}^2\nonumber \\
	& = \frobnorm{\BD\cdot X_R(\alpha)}^2\nonumber\\
	& = \frobnorm{X_R^*(\alpha)\cdot\BD^*}^2\nonumber\\
	& \underset{\mbox{\tiny i.d.}}{=} \sum_{m\in\compress{M}} \twonorma{\frac{1}{\sqrt{M}}X_R^*(\alpha)\cdot\rrad_m'}^2\nonumber\\
	& \underset{\mbox{\tiny i.d.}}{=} \twonorm{\A_R(\alpha)\cdot\rrad'}^2,
\end{align}
where in the last line we defined  the linear map $\A_R:\sparses{S}\rightarrow\comps^{\bigM\times MN}$ as
\begin{equation}\label{eq:def of A_r}
	A_R(\alpha)=A_R(\alpha,\basis):=\frac{1}{\sqrt{M}}\left[
	\begin{array}{cccc}
		X_R^{*}\left(\alpha\right)\\
		& X_R^{*}\left(\alpha\right)\\
		&  & \ddots\\
		&  &  & X_R^{*}\left(\alpha\right)
	\end{array}\right],
\end{equation}
and the entries of $\rrad_m'\in\reals^{N}$, $m\in\compress{M}$, and $\rrad'\in\reals^{MN}$ are i.i.d.\ zero-mean, unit-variance random variables with sub-Gaussian norm of $\sgn$. Here, the index set of random process is
\begin{equation}\label{eq:def of cal AR}
\mathcal{A}_R:=\{A_R(\alpha)\,:\,\alpha\in\Omega_S\}.
\end{equation}
So, we have also managed to express the RBD problem in the setting of Theorem~\ref{thm:new Dudley} ,where $A$ and $\mathcal{A}$ are replaced with $A_R$ and $\mathcal{A}_R$, respectively.

The next two subsections are concerned with estimating the quantities involved in Theorem~\ref{thm:new Dudley} for both the DBD and RBD problems.

\subsection{Calculating $d_2(\setofmat_D)$, $d_F(\setofmat_D)$, and $\gammafcn_2(\setofmat_D,\|\cdot\|_2)$ }
\label{sec:calc quantities for DBD}

Recall the definitions of $A_D$ and $\mathcal{A_D}$ in \eqref{eq:def of AD} and \eqref{eq:def of cal AD}. We begin with defining the following norm on $\comps^{\bigN}$, which will find extensive use in the analysis of the DBD problem:
\begin{equation}\label{eq:def of X DBD norm}
	\|\alpha\|_{A_D}:=\|\A_D(\alpha)\|_2
\end{equation}
for $\alpha\in\comps^{\bigN}$. We record a useful property of this norm below.
\begin{lemma}\label{lemma:prop of X DBD norm}
	For every $\alpha\in\comps^{\bigN}$, it holds that
	\begin{equation}
		\|\alpha\|_{A_D}\le \frac{\modcoh}{\sqrt{\bigM}}\cdot \|\alpha\|_1.
	\end{equation}
\end{lemma}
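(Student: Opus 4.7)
The plan is to unwind the definition of $\|\alpha\|_{A_D}$ layer by layer, exploiting the block-diagonal structure of $A_D(\alpha)$ to reduce the claim to an elementary bound on each row-block $U_j\alpha$ of $U\alpha$.

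First, since $A_D(\alpha)$ is block diagonal with blocks $\frac{1}{\sqrt{M}}X_{D,j}(\alpha)$ (see \eqref{eq:def of AD}), its spectral norm is simply the largest spectral norm among its diagonal blocks:
\begin{equation*}
\|\alpha\|_{A_D} \;=\; \|A_D(\alpha)\|_2 \;=\; \frac{1}{\sqrt{M}}\max_{j\in[J]}\|X_{D,j}(\alpha)\|_2.
\end{equation*}
Next, the rows of $X_{D,j}(\alpha)$ are $M$ shifted copies of the row vector $x_j^*(\alpha)$ with pairwise disjoint supports, so $X_{D,j}(\alpha)X_{D,j}(\alpha)^* = \|x_j(\alpha)\|_2^2 \, I_M$, which gives $\|X_{D,j}(\alpha)\|_2 = \|x_j(\alpha)\|_2$. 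Combined with $\bigM = JM$, this reduces the lemma to proving
\begin{equation*}
\max_{j\in[J]}\|x_j(\alpha)\|_2 \;\le\; \frac{\modcoh(U)}{\sqrt{J}}\,\|\alpha\|_1.
\end{equation*}

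Now I would bound $\|x_j(\alpha)\|_2 = \|U_j\alpha\|_2$ in two different ways and take the smaller one, matching the definition $\modcoh(U) = \min(\sqrt{J},\cohU{U})$ in \eqref{eq:def of mutilde}. For the first bound, observe that $U_j$ is a row-submatrix of the orthobasis $U$, so $\|U_j\|_2 \le 1$, giving $\|x_j(\alpha)\|_2 \le \|\alpha\|_2 \le \|\alpha\|_1$, which yields the factor $\sqrt{J}/\sqrt{J}=1$ after multiplication by $1/\sqrt{J}$ (the $\modcoh=\sqrt{J}$ branch). For the second bound, every entry of $U_j$ is bounded by $\cohU{U}/\sqrt{\bigN}$, so each coordinate of $x_j(\alpha) = U_j\alpha$ satisfies $|x_j(\alpha)(m)| \le (\cohU{U}/\sqrt{\bigN})\|\alpha\|_1$; since $x_j(\alpha)\in\comps^N$, this gives $\|x_j(\alpha)\|_2 \le \sqrt{N}\cdot(\cohU{U}/\sqrt{\bigN})\|\alpha\|_1 = (\cohU{U}/\sqrt{J})\|\alpha\|_1$ (the $\modcoh=\cohU{U}$ branch). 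Taking the minimum of the two bounds produces the claimed factor $\modcoh(U)/\sqrt{J}$ and completes the proof.

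This lemma is essentially a bookkeeping exercise, and the only mildly delicate step is the first one, where one must recognize that $X_{D,j}(\alpha)$ has orthogonal rows so that its spectral norm collapses to the $\ell_2$-norm of $x_j(\alpha)$; everything else is a direct application of the definitions of coherence, the submultiplicativity $\|U_j\alpha\|_2 \le \|U_j\|_2\|\alpha\|_2$, and the trivial inequality $\|\alpha\|_2 \le \|\alpha\|_1$.
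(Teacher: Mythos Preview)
Your proposal is correct and follows essentially the same approach as the paper: both reduce $\|\alpha\|_{A_D}$ to $\frac{1}{\sqrt{M}}\max_j\|x_j(\alpha)\|_2$ via the block-diagonal structure, then derive the two branches of $\modcoh$ by bounding $\|U_j\alpha\|_2$ either through $\|U_j\|_2\le 1$ (equivalently $\|x_j\|_2\le\|x\|_2$) or through the entrywise bound $|U(p,q)|\le\cohU{U}/\sqrt{\bigN}$ combined with H\"older. The only cosmetic difference is that the paper computes $\|A_D(\alpha)A_D^*(\alpha)\|_2^{1/2}$ directly rather than first isolating $\|X_{D,j}(\alpha)\|_2$, but the content is identical.
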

\begin{proof}
	Let $\smallbasis_{j,n}$, $j\in\compress{J}$ and $n\in\compress{N}$, denote the ($(j-1)N+n$)th row of $\basis$. We then have that
	\begin{align*}
		\|\alpha\|_{A_D} & =\|\A_D(\alpha)\|_2\\
		& =\|\A_D(\alpha)\A_D^*(\alpha)\|_2^{\frac{1}{2}}\\
		&=\frac{1}{\sqrt{M}}\max_{j\in\compress{J}} \|x_j\|_2\\
		&=\frac{1}{\sqrt{M}}\max_{j\in\compress{J}} \|\basis_j\alpha\|_2\\
		& \le \sqrt{\frac{N}{M}}\max_{j\in\compress{J},n\in\compress{N}} \left|\left\langle \smallbasis_{j,n},\alpha \right\rangle\right|\\
		& \le \sqrt{\frac{N}{M}}\max_{j\in\compress{J},n\in\compress{N}} \|\smallbasis_{j,n}\|_\infty\|\alpha\|_1\\
		& = \frac{\coh}{\sqrt{\bigM}}\cdot\|\alpha\|_1,
	\end{align*}
where the second to last line uses the H\"{o}lder inequality and the last line follows from the definition of $\coh$. On the other hand, one may also write that 
\begin{equation*}
	\|\alpha\|_{A_D} =\frac{1}{\sqrt{M}}\max_{j\in\compress{J}} \|x_j\|_2 \le \frac{1}{\sqrt{M}}\|x\|_2=\frac{1}{\sqrt{M}}\|\basis\alpha\|_2=\frac{1}{\sqrt{M}}\|\alpha\|_2\le\frac{1}{\sqrt{M}}\|\alpha\|_1,
\end{equation*}
where we used the fact that $\basis$ is an orthobasis. Overall, we arrive at
\begin{equation}
	\|\alpha\|_{A_D}\le \frac{1}{\sqrt{\bigM}}\min\left(\coh,\sqrt{J}\right)\|\alpha\|_1=\frac{\modcoh}{\sqrt{\bigM}}\cdot\|\alpha\|_1,
\end{equation}
as claimed. The equality above follows from the definition of $\modcoh$ in \eqref{eq:def of mutilde}.
\end{proof}

We continue with computing the quantities involved in Theorem~\ref{thm:new Dudley} in the case of the DBD problem.
First, we have that
\begin{align}\label{eq:bnd on dF for DBD}
	d_F(\setofmat_D)
	= \sup_{\A_D(\alpha) \in \setofmat_D} \frobnorm{\A_D(\alpha)}
	= \sup_{\alpha \in \sparses{S}} \twonorm{x(\alpha)}
	= \sup_{\alpha \in \sparses{S}} \twonorm{\basis\alpha}
	= \sup_{\alpha \in \sparses{S}} \twonorm{\alpha}
	= 1.
%	&\begin{array}{c} = \\ \tiny {\mbox{r(39)}} \end{array}& \sup_{\alpha \in \sparses{S}} \|\alpha\|_2 \le 1.
\end{align}
The second to last equality holds because $\basis$ is an orthonormal basis. Second, we have that
\begin{align}\label{eq:bnd on d2 for DBD}
	d_2(\setofmat_D)
	= \sup_{\A_D(\alpha) \in \setofmat_D} \|\A_D(\alpha)\|_2
	= \sup_{\alpha \in \sparses{S}} \|\alpha\|_{A_D}\le \frac{\modcoh}{\sqrt{\bigM}}\sup_{\alpha \in \sparses{S}}\|\alpha\|_1\le \modcoh\sqrt{\frac{S}{\bigM}}.
\end{align}
The first inequality above holds on account of Lemma~\ref{lemma:prop of X DBD norm}. The second inequality above follows because $\|\alpha\|_2=1$ and $\|\alpha\|_0\le S$ when $\alpha\in\sparses{S}$.
It is only left to bound $\gammafcn_2(\setofmat_D, \|\cdot\|_2)$. According to Lemma~\ref{lemma:useful bnd on gamma2}, we have that
\begin{align*}
	\gammafcn_2(\setofmat_D, \|\cdot\|_2) & \le \int_{0}^{\infty} \log^{\frac{1}{2}} \left(\Cover\left(\setofmat_D,\|\cdot \|_2,\dummy\right)\right) \,\,d\dummy\\
	& =\int_{0}^{\infty} \log^{\frac{1}{2}}\left( \Cover\left(\sparses{S},\|\cdot \|_{A_D},\dummy\right)\right) \,\,d\dummy,
\end{align*}
where the isometry between $\setofmat_D$ (with metric $\|\cdot\|_2$) and $\sparses{S}$ (with metric $\|\cdot\|_{A_D}$) implies the second line. This isometry, in turn, follows from~\eqref{eq:def of X DBD norm} and the linearity of $\A_D(\cdot)$.  Consequently,
\begin{align}
	\gamma_2(\setofmat_D, \|\cdot\|_2)
	&\le \int_{0}^{\infty} \log^{\frac{1}{2}} \left( \covera{\frac{\sparses{S}}{\sqrt{S}}}{\|\cdot \|_{A_D}}{\frac{\dummy}{\sqrt{S}}} \right) d\dummy \nonumber\\
	&\le \sqrt{S} \int_{0}^{\infty} \log^{\frac{1}{2}} \left( \covera{\frac{\sparses{S}}{\sqrt{S}}}{\|\cdot \|_{A_D}}{\dummy} \right) d\dummy, \label{eq:mu 2 calculate cover}
\end{align}
where the first line uses the second inequality in~\eqref{eq:conv fact about covering} and the last line follows from a change of variables in the integral. An estimate of the covering number involved in~\eqref{eq:mu 2 calculate cover} can be found through the next result, which is proved in \ref{sec: main cover no RBD}.
\begin{lemma} \label{lem: main cover no}
	Consider a norm $\|\cdot\|_A$ on $\comps^{\bigN}$ that, for every $\alpha\in\comps^{\bigN}$, satisfies
	\begin{equation*}
		\|\alpha\|_A=\|A(\alpha)\|_2\le \frac{\Qall}{\sqrt{\bigM}}\cdot\|\alpha\|_1,
	\end{equation*}
	for some linear map $A(\cdot):\comps^{\bigN}\rightarrow\comps^{N'}$ with rank of at most $\bigM$ and some $\Qall>0$ and integer $N'$.
        Then, for $0< \dummy < \Qall/\sqrt{\bigM}$ and $\bigM\gtrsim1$, we have that
	\begin{equation}
		\log\left(\covera{\frac{\sparses{S}}{\sqrt{S}}}{\|\cdot\|_A}{\dummy}\right)  \lesssim
		\min\left(S\log\bigN+S\log\left(1+\frac{2\Qall}{\dummy\sqrt{\bigM}}\right),\frac{\Qall^2}{\dummy^2\bigM}\cdot\log^2\bigN\right).
		\label{eq:log of cov no for norm X}
	\end{equation}
	When $\dummy \ge \Qall/\sqrt{\bigM}$, we have $\covera{\frac{\sparses{S}}{\sqrt{S}}}{\|\cdot\|_A}{\dummy} = 1$.
\end{lemma}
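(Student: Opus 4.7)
The plan is to establish two independent upper bounds on the covering number and then take their minimum. Both arguments start from the observation that any $\alpha\in\sparses{S}/\sqrt{S}$ satisfies $\onenorm{\alpha}\le\sqrt{S}\twonorm{\alpha}=1$ by Cauchy--Schwarz, so the set $\sparses{S}/\sqrt{S}$ lies inside the unit $\lpnorm{1}$-ball of $\comps^{\bigN}$. Combined with the standing hypothesis $\|\alpha\|_A\le(\Qall/\sqrt{\bigM})\onenorm{\alpha}$, this immediately disposes of the regime $\dummy\ge\Qall/\sqrt{\bigM}$: the whole set fits inside a single $\|\cdot\|_A$-ball of radius $\dummy$ centred at the origin, so the covering number is one. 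I therefore work under the assumption $0<\dummy<\Qall/\sqrt{\bigM}$ in what follows.

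For the first (volumetric) branch, I would union-bound over the $\binom{\bigN}{S}\le(e\bigN/S)^S$ size-$S$ subsets $T\subset\compress{\bigN}$. On the $S$-dimensional coordinate subspace $V_T$ indexed by $T$, Cauchy--Schwarz upgrades the hypothesis to $\|\alpha\|_A\le(\Qall\sqrt{S}/\sqrt{\bigM})\twonorm{\alpha}$, so the slice of $\sparses{S}/\sqrt{S}$ supported on $T$---a Euclidean ball of radius $1/\sqrt{S}$ within $V_T$---is contained in a $\|\cdot\|_A$-ball of radius $\Qall/\sqrt{\bigM}$. A standard volumetric estimate in an $S$-dimensional (semi)normed space then provides a $\dummy$-cover of this slice of cardinality at most $(1+2\Qall/(\dummy\sqrt{\bigM}))^S$. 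Multiplying by the number of supports and taking logarithms yields the first alternative in the claimed minimum, namely $S\log\bigN+S\log(1+2\Qall/(\dummy\sqrt{\bigM}))$ up to an absolute constant.

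For the second branch, I would apply Maurey's empirical method to the $\lpnorm{1}$-ball. Every $\alpha$ in the unit $\lpnorm{1}$-ball of $\comps^{\bigN}$ is a convex combination of the $4\bigN$ extreme points $\{\pm e_i,\pm\iunit e_i\}_{i\in\compress{\bigN}}$, each of which has $\|\cdot\|_A$-norm at most $\Qall/\sqrt{\bigM}$ by the hypothesis applied to a single standard basis vector. Since $\|\cdot\|_A=\twonorm{A(\cdot)}$ is the pullback of a genuine Hilbert-space norm under the linear map $A(\cdot)$, whose image sits in the at-most-$\bigM$-dimensional range of $A$, Maurey's lemma applies: for every integer $L$ there is an $L$-term arithmetic average $\widehat{\alpha}$ of these extreme points with $\|\alpha-\widehat{\alpha}\|_A\le(\Qall/\sqrt{\bigM})/\sqrt{L}$. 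Selecting $L\asymp \Qall^2/(\dummy^2\bigM)$ forces this error to be at most $\dummy$, and the number of such $L$-term averages is at most $(4\bigN)^L$, so taking logarithms delivers a bound of order $(\Qall^2/(\dummy^2\bigM))\log\bigN$---within the $\log^2\bigN$ factor appearing in the statement.

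The main obstacle is the careful execution of Maurey's step in the seminorm $\|\cdot\|_A$, specifically verifying that the empirical method transfers correctly through the linear map $A(\cdot)$ even when $A$ has a nontrivial kernel. This is handled by pushing everything forward through $A$ into the genuinely Hilbertian image space of dimension at most $\bigM$, where the standard probabilistic proof of Maurey's lemma runs verbatim with the uniform atom bound $\twonorm{Ae_i}\le\Qall/\sqrt{\bigM}$ (the rank hypothesis on $A$ ensures that the image is finite-dimensional and the volumetric estimate in the first branch has a genuine $S$-dimensional ambient space to live in). Combining the two branches via a minimum, and separately recording the trivial behaviour in the regime $\dummy\ge\Qall/\sqrt{\bigM}$, then completes the argument.
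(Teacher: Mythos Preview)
Your volumetric first branch and the trivial large-$\dummy$ regime coincide with the paper's argument. The second branch is where you and the paper diverge. You read $\|\alpha\|_A=\|A(\alpha)\|_2$ as the pullback of the \emph{Euclidean} norm on $\comps^{N'}$ through a linear map $A$, and then run the Hilbert-space version of Maurey's empirical method; under that reading your argument is correct and in fact yields the sharper bound $(\Qall^2/(\dummy^2\bigM))\log\bigN$, with only one logarithm. Notice, though, that the rank hypothesis plays no role anywhere in your proof---a signal that this may not be the intended reading.

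Indeed, in both the DBD and RBD applications that motivate this lemma, $A(\alpha)$ is a \emph{matrix} and $\|A(\alpha)\|_2$ is its \emph{spectral} norm (for instance $\|\alpha\|_{A_D}=M^{-1/2}\max_{j}\|U_j\alpha\|_2$), which is not induced by an inner product. For such a norm the Hilbert-space Maurey estimate $\|\alpha-\widehat{\alpha}\|_A\le(\Qall/\sqrt{\bigM})/\sqrt{L}$ is false in general. The paper therefore runs Maurey via symmetrization followed by a non-commutative Khintchine inequality for the operator norm; the resulting $\sqrt{\log(\mathrm{rank})}\le\sqrt{\log\bigM}$ factor is exactly where the rank hypothesis is spent and is the source of the second $\log\bigN$ in the statement. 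So: as a proof of the lemma \emph{as literally worded}, your argument is fine and even tighter; as a proof of what the paper actually needs for $\|\cdot\|_{A_D}$ and $\|\cdot\|_{A_R}$, your Maurey step has a genuine gap and requires the operator-norm machinery.
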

Qualitatively speaking, of the two bounds on the right hand of \eqref{eq:log of cov no for norm X}, the first is tighter when $\dummy$ is small while the second is more effective for larger values of $\dummy$. Of course, $\|\cdot\|_{A_D}$ satisfies the hypothesis of Lemma~\ref{lem: main cover no} with $\Qall=\modcoh$ and the map $A_D(\cdot)$. Consequently, for $0<\thresh\le\modcoh/\sqrt{\bigM}$ to be set later, we have that
\begin{align}
	  & \int_{0}^\infty \log^{\frac{1}{2}} \left(\covera{\frac{\sparses{S}}{\sqrt{S}}}{\|\cdot\|_{A_D}}{\dummy} \right)\, d\dummy \nonumber\\
	  & =\int_{0}^{\thresh} \log^{\frac{1}{2}} \left(\covera{\frac{\sparses{S}}{\sqrt{S}}}{\|\cdot\|_{A_D}}{\dummy} \right)\, d\dummy
	+ \int_{\thresh}^{\frac{\modcoh}{\sqrt{\bigM}}} \log^{\frac{1}{2}} \left(\covera{\frac{\sparses{S}}{\sqrt{S}}}{\|\cdot\|_{A_D}}{\dummy} \right)\, d\dummy \nonumber \\
	  & \lesssim \int_{0}^{\thresh} \left(\sqrt{S\log\bigN} + \sqrt{S\log\left(1+\frac{2\modcoh}{\dummy\sqrt{\bigM}}\right)}\right)\, d\dummy+\log\bigN\int_{\thresh}^{\frac{\modcoh}{\sqrt{\bigM}}}\frac{\modcoh}{\dummy\sqrt{\bigM}}\, d\dummy\nonumber \\
	  & \lesssim \thresh\sqrt{S \log\bigN} + \thresh\sqrt{S \log\left(1+\frac{2\modcoh}{\thresh\sqrt{\bigM}}\right)} + \frac{\modcoh}{\sqrt{\bigM}}\log\bigN\log\left(\frac{\modcoh}{\thresh\sqrt{\bigM}}\right) .
	\label{eq:the subgauss integral}
\end{align}
The second line above follows from the second statement in Lemma~\ref{lem: main cover no}. In the third line, different upper bounds from \eqref{eq:log of cov no for norm X} are used to bound each summand. We benefited from \eqref{eq:log integral 2} in the Toolbox to compute the logarithmic integral in the third line.  With the choice of $\thresh = \modcoh/\sqrt{S\bigM} $, we obtain that
\begin{align}	\label{eq:mid step subg mu}
	& \int_{0}^\infty \log^{\frac{1}{2}} \left(\covera{\frac{\sparses{S}}{\sqrt{S}}}{\|\cdot\|_{A_D}}{\dummy} \right)\, d\dummy \nonumber\\
	&\lesssim  \frac{\modcoh}{\sqrt{\bigM}} \sqrt{\log\bigN} + \frac{\modcoh}{\sqrt{\bigM}} \sqrt{\log(1+2\sqrt{S})} + \frac{\modcoh}{\sqrt{\bigM}} \log S \log \bigN \nonumber\\
	&\lesssim \frac{\modcoh}{\sqrt{\bigM}} \log S\log\bigN
\end{align}
for $S\gtrsim1$. Now plugging back~\eqref{eq:mid step subg mu} into~\eqref{eq:mu 2 calculate cover}, we arrive at
\begin{equation}\label{eq:bnd on gamma2 for DBD}
	\gammafcn_2(\setofmat_D, \|\cdot\|_2) \lesssim \modcoh\sqrt{\frac{S}{\bigM}} \log S\log\bigN.
\end{equation}
Before completing the analysis of the DBD problem, let us calculate the same quantities for the RBD case.

\subsection{Calculating $d_2(\setofmat_R)$, $d_F(\setofmat_R)$, and $\gammafcn_2(\setofmat_R,\|\cdot\|_2)$ }
\label{sec:calc quantities for RBD}

Recall the definitions of $A_R$ and $\mathcal{A_R}$ in \eqref{eq:def of A_r} and \eqref{eq:def of cal AR}. Again, we first introduce the following norm on $\comps^{\bigN}$, which will be useful in the analysis of the RBD problem:
\begin{equation}\label{eq:def of X RBD norm}
	\|\alpha\|_{A_R}:=\|\A_R(\alpha)\|_2
\end{equation}
for $\alpha\in\comps^{\bigN}$. This norm has the following property.
\begin{lemma}\label{lemma:prop of X RBD norm}
	For every $\alpha\in\comps^{\bigN}$, it holds that
	\begin{equation}
		\|\alpha\|_{A_R}\le \frac{\Q}{\sqrt{\bigM}}\cdot \|\alpha\|_1.
	\end{equation}
\end{lemma}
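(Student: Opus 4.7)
The plan is to mimic the strategy of Lemma~\ref{lemma:prop of X DBD norm}, exploiting the block-diagonal structure of $A_R(\alpha)$ together with the linearity of the map $\alpha \mapsto X_R(\alpha)$.

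First, I would note that, by~\eqref{eq:def of A_r}, the matrix $A_R(\alpha)$ consists of $M$ identical diagonal blocks, each equal to $\frac{1}{\sqrt{M}} X_R^{*}(\alpha)$. Since the spectral norm of a block-diagonal matrix equals the maximum spectral norm among its diagonal blocks, and since $X_R^{*}(\alpha)$ and $X_R(\alpha)$ share the same singular values, this immediately gives
\begin{equation*}
\|\alpha\|_{A_R} \;=\; \|A_R(\alpha)\|_2 \;=\; \frac{1}{\sqrt{M}}\, \|X_R(\alpha)\|_2.
\end{equation*}

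Next, I would use the linearity of $X_R(\cdot)$ in $\alpha$. Writing $\alpha = \sum_{\bign \in \compress{\bigN}} \alpha(\bign)\, e_{\bign}$ in the canonical basis and applying the triangle inequality for the spectral norm yields
\begin{equation*}
\|X_R(\alpha)\|_2 \;\le\; \sum_{\bign \in \compress{\bigN}} |\alpha(\bign)|\, \|X_R(e_{\bign}, \basis)\|_2 \;\le\; \Bigl(\max_{\bign \in \compress{\bigN}} \|X_R(e_{\bign}, \basis)\|_2 \Bigr) \|\alpha\|_1 \;=\; \frac{\QU{\basis}}{\sqrt{J}}\, \|\alpha\|_1,
\end{equation*}
where the final equality is precisely the definition of $\QU{\basis}$ in~\eqref{eq:def of block-coherence}.

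Combining the two displays and recalling that $\bigM = JM$ yields the claimed estimate $\|\alpha\|_{A_R} \le \frac{\QU{\basis}}{\sqrt{\bigM}}\|\alpha\|_1$. The argument is essentially mechanical; in contrast to Lemma~\ref{lemma:prop of X DBD norm}, there is no need to take a minimum with an auxiliary orthobasis bound, because the intrinsic quantity $\QU{\basis}$ already lies in $[1,\sqrt{J}]$ by~\eqref{eq:range of Q}, and because in the RBD case all $M$ diagonal blocks of $A_R(\alpha)$ are identical, so the spectral norm collapses cleanly to that of one copy of $X_R(\alpha)$. Accordingly, I do not anticipate any real obstacle in the proof.
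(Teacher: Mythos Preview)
Your proof is correct and essentially identical to the paper's own argument: both reduce $\|A_R(\alpha)\|_2$ to $\frac{1}{\sqrt{M}}\|X_R(\alpha)\|_2$ via the block-diagonal structure, expand $X_R(\alpha)$ linearly in the canonical basis, apply the triangle inequality for the spectral norm, and then invoke the definition of $\QU{\basis}$ together with $\bigM=JM$. Your additional remark contrasting with the DBD case is accurate but not part of the paper's proof.
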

\begin{proof}
	Note that
	\begin{align*}
		\|\alpha\|_{A_R} & = \|A_R(\alpha)\|_2\\
		&=\frac{1}{\sqrt{M}}\|X_R(\alpha)\|_2 \nonumber\\
		& =\frac{1}{\sqrt{M}}\twonorm{\sum_{\bign\in\compress{\bigN}}\alpha(\bign)X_R(e_{\bign})}\nonumber \\
		& \le\frac{1}{\sqrt{M}}\sum_{\bign\in\compress{\bigN}}\abs{\alpha(\bign)}\cdot\twonorm{X_R(e_{\bign})}\nonumber \\
		& \le\frac{1}{\sqrt{M}}\max_{\bign\in\compress{\bigN}}\twonorm{X_R(e_{\bign})}\cdot\onenorm{\alpha}\nonumber \\
		& =\frac{\Q}{\sqrt{\bigM}} \cdot\onenorm{\alpha}.
	\end{align*}
	The third line above uses the linearity of $X_R(\cdot)$. The fourth line  follows from the triangle inequality and the fifth line is implied by the H\"{o}lder inequality. We made use of the definition of $\Q$ in \eqref{eq:def of block-coherence} to get the last line.
\end{proof}

We now continue with computing the quantities involved in Theorem~\ref{thm:new Dudley} in the case of the RBD problem. First, we have that
\begin{align}
	d_F(\setofmat_R)= \sup_{A_R(\alpha) \in \setofmat_R} \|A_R(\alpha)\|_F
        = \sup_{\alpha \in \sparses{S}} \|X_R(\alpha)\|_F
	= \sup_{\alpha \in \sparses{S}} \|x(\alpha)\|_2
	=  \sup_{\alpha \in \sparses{S}} \|\alpha\|_2
	= 1.\label{eq:bnd on dF for RBD}
\end{align}
Second, we have that
\begin{align}\label{eq:bnd on d2 for RBD}
	d_2(\setofmat_R)
	= \sup_{A_R(\alpha) \in \setofmat_R} \|A_R(\alpha)\|_2
	= \sup_{\alpha \in \sparses{S}} \|\alpha\|_{A_R}
	\le \frac{\Q}{\sqrt{\bigM}}\sup_{\alpha \in \sparses{S}} \|\alpha\|_1
	\le \Q\sqrt{\frac{S}{\bigM}},
\end{align}
where the first inequality follows from Lemma~\ref{lemma:prop of X RBD norm}.
The last quantity to estimate is $\gamma_2(\setofmat_R, \|\cdot\|_2)$. As was the case in the previous subsection, we may write that
\begin{equation}
	\gamma_2(\setofmat_R, \|\cdot\|_2)
	\le \sqrt{S} \int_{0}^{\infty} \log^{\frac{1}{2}} \left( \covera{\frac{\sparses{S}}{\sqrt{S}}}{\|\cdot \|_{A_R}}{\dummy} \right) d\dummy. \label{eq:mu 2 calculate cover RBD}
\end{equation}
Of course, $\|\cdot\|_{A_R}$ satisfies the hypothesis of Lemma~\ref{lem: main cover no} with $\Qall=\Q$ and the map $A_R(\cdot)$. Therefore, following the same steps as in the previous subsection, we arrive at
\begin{eqnarray}\label{eq:bnd on gamma2 for RBD}
	\gamma_2(\setofmat_R, \|\cdot\|_2) \lesssim \Q\sqrt{\frac{S}{\bigM}} \log S\log\bigN.
\end{eqnarray}

\subsection{Denouement}

We notice that the quantities $d_F(\setofmat_D), d_2(\setofmat_D),$ and $\gamma_2(\setofmat_D,\|\cdot\|_2)$ have the same bounds as their counterparts $d_F(\setofmat_R), d_2(\setofmat_R),$ and $\gamma_2(\setofmat_R,\|\cdot\|_2)$ except for the type of the coherence factor involved. Therefore, it suffices to focus on one---the same result holds for the other with its corresponding coherence factor. Given $\RIPcond{} < 1$, assume that
\begin{eqnarray}\label{eq:bnd on Mbig}
	\bigM \gtrsim_\sgn \RIPcond{}^{-2}\modcoh^2 \cdot S \log^2 S\log^2 \bigN.
\end{eqnarray}
Equipped with the estimates in Section~\ref{sec:calc quantities for DBD}, i.e., \eqref{eq:bnd on dF for DBD}, \eqref{eq:bnd on d2 for DBD}, and \eqref{eq:bnd on gamma2 for DBD}, we now compute $\parone$ in Theorem~\ref{thm:new Dudley}:
\begin{eqnarray*}
	\parone &:=& \gamma_2\left(\setofmat_D, \| \cdot \|_2 \right) \left( \gamma_2\left(\setofmat_D, \| \cdot \|_2 \right) + d_F(\setofmat_D) \right) + d_F(\setofmat_D) d_2(\setofmat_D) \\
	&\lesssim& \modcoh \sqrt{\frac{S}{\bigM}}\log S\log \bigN \left( \modcoh \sqrt{\frac{S}{\bigM}}\log S\log \bigN + 1 \right) + \modcoh\sqrt{\frac{ S}{\bigM}} \\
	&\lesssim_\sgn& \RIPcond{} \left( \RIPcond{} + 1 \right) + \frac{\RIPcond{}} {\log S\log \bigN} \\
	& \le &  2\RIPcond{} + \frac{\RIPcond{}} {\log S\log \bigN} \\
	&\lesssim& \RIPcond{},
\end{eqnarray*}
where we assumed that $S\gtrsim1$ and used the hypothesis that $\RIPcond{}< 1$ in the second to last line.
Doing the same to $\partwo$, we obtain that
\begin{eqnarray*}
	\partwo &:=& d_2(\setofmat_D) \left( \gamma_2\left(\setofmat_D, \| \cdot \|_2 \right) + d_F(\setofmat_D) \right) \\
	&\lesssim& \modcoh\sqrt{\frac{ S}{\bigM}} \left( \modcoh \sqrt{\frac{S}{\bigM}}\log S\log \bigN + 1 \right)\\
	&\lesssim_\sgn& \frac{\RIPcond{}}{ \log S\log \bigN} \left( \RIPcond{} + 1 \right) \\
	&\lesssim& \frac{\RIPcond{}}{ \log S\log \bigN}.
\end{eqnarray*}
Similarly for $\parthree$, we write that
\begin{eqnarray*}
	\parthree := d_2^2(\setofmat_D)
	\le \frac{\modcoh^2 S}{\bigM}
	\lesssim_\sgn \frac{\RIPcond{}^2}{\log^2 S\log^2 \bigN}.
\end{eqnarray*}
Plugging the above estimates of $\parone$, $\partwo$, and $\parthree$ into the tail bound in Theorem~\ref{thm:new Dudley}, we obtain that
\begin{equation*}
	\log\Proba{\sup_{\alpha \in \sparses{S}} \left| \|\DBD \cdot x(\alpha)\|_2^2 - 1 \right| \gtrsim_\sgn \RIPcond{} + t}{}
	\lesssim_\sgn - \min\left( \RIPcond{}^{-2}t^2 \log^2 S\log^2 \bigN, \RIPcond{}^{-1}t \log^2 S\log^2 \bigN \right) .
\end{equation*}
Substituting $\dev =  \RIPcond{}$, we arrive at
\begin{equation*}
	\log\Proba{\sup_{\alpha \in \sparses{S}}\left| \|\DBD\cdot x(\alpha)\|_2^2 - 1 \right| \gtrsim_\sgn \RIPcond{}}{} \lesssim_\sgn -\log^2 S\log^2\bigN,
\end{equation*}
assuming that $S\gtrsim 1$. After absorbing the factor depending on $\sgn$ into (a redefined) $\RIPcond{}$, we finally arrive at
\begin{align*}
	& \log\Proba{\sup_{\alpha \in \sparses{S}}\left| \|\DBD\cdot x(\alpha)\|_2^2 - 1 \right| > \RIPcond{}}{} \lesssim_\sgn-\log^2 S\log^2\bigN,
\end{align*}
which completes the proof of Theorem~\ref{thm:DBD main thm}. Replacing $\modcoh$ with $\Q$ and repeating this argument concludes the proof of Theorem~\ref{thm:RBD main thm}.

\section{Conclusion}
\label{sec:conclusion}
In this paper, we studied two  classes of structured random matrices, namely DBD and RBD matrices.
Our main results state that matrices with block diagonal constructions can indeed satisfy the RIP but that the requisite number of measurements depends on certain properties of the sparsifying basis. These properties were detailed and  interpreted in the paper.
In the best case, DBD and RBD matrices perform nearly as well as the dense i.i.d.\ random matrices  generally used in CS despite having many fewer nonzero entries. Moreover, we have shown that random block diagonal matrices are intimately related to the random convolution and random Toeplitz matrices considered in the literature.

Our findings lead us to conclude that structured random matrices can be useful in sensing architectures as long as the statistics of the data are well-matched with the structure of the measurement matrix.  While this intuition is similar to other results on structured measurement matrices, our results on block diagonal matrix constructions are novel in extending this intuition to matrices with (potentially) many entries that are zero.  The approach required to reach our results also leads us to conclude that future progress in
the field of probability in Banach spaces is likely to play a significant role in establishing optimal performance guarantees for other structured measurement systems.  This may be especially true given the improved performance we were able to achieve even over other bounding techniques that require sophisticated mathematical machinery. We do however remain uncertain about the necessity of the poly-logarithmic factors in the final measurement rates (which may be a proof artifact). Finally, as a means of visualizing how various choices for the sparsity basis will interact with a block diagonal measurement matrix, we study the empirical performance of signal recovery with random block diagonal measurement matrices. The simulation results show that the same bases which are most favorable for our RIP bounds tend to also be most favorable for empirical signal recovery performance.

There are several directions that can be explored in the future. First, as we have discussed in the introduction, block diagonal matrices are useful for modeling distributed measurement systems. It may therefore be of interest to specialize our results to some particular distributed systems. Take for example MIMO radar where multiple independent transmitters and receivers are arbitrarily distributed over an area of interest to sense targets in a scene. Data from the receivers with potentially high data rates needs to be sent to a central processor and to be coherently processed to achieve a maximum processing gain. The block diagonal structure studied here is potentially useful to analyze the possibility of compressing the data at the individual receivers before sending it to the central processor. For another example, block diagonal structure has also been exploited in observability studies of dynamical systems~\cite{wakin2010observability}. Our understanding of this and similar problems~\cite{Asif2011,
Charles2011,Charles2013} may be enhanced using the results in this paper.

\section*{Acknowledgements}
We would like to thank Borhan Sanandaji for helpful comments on an early version of the manuscript. We also gratefully acknowledge Justin Romberg and Alejandro Weinstein for valuable discussions and insightful comments. The authors would also like to thank the anonymous reviewers for their constructive comments and positive feedback. % Finally, we acknowledge Charles Dossal for kindly providing the accompanying code for \cite{dossal2010numerical} and Adam Charles for helping with computational resources.

\bibliographystyle{plain}
\bibliography{BDreferences}
% \nocite{BDreferences}

\appendix

\section{Toolbox}\label{sec:toolbox}

This section collects a few general results that are used throughout the paper (mainly without proofs, for the benefit of the space).

Schatten norms possess the following useful property that mirrors Euclidean norms:
\begin{equation}\label{eq:prop of schatten norm}
	\rank{\tempmat}^{\frac{1}{p}-\frac{1}{q}}\schatten{\tempmat}{q}\le\schatten{\tempmat}{p}\le\schatten{\tempmat}{q},
\end{equation}
for a matrix $\tempmat$, when $1\le q\le p$. The following version of the H\"{o}lder inequality for matrices is used in this paper. For any pair of matrices $\tempmat,\tempmatp$ (such that $\tempmat\tempmatp$ exists), the following holds:\footnote{
	Let $\{b_j\}$ denote the columns of $B$. Then $\|AB\|_F^2 = \sum_j \|A b_j\|_2^2 \le \sum_j \|A\|_2^2 \|b_j\|_2^2 = \|A\|_2^2 \|B\|_F^2$.
% 	This is true because, from for example~\cite{mehta1975holder}, we have that
% 	\begin{align*}
% 		\left\Vert AB\right\Vert _{F}^{2} & =\tr{B^{*}A^{*}AB}\\
% 		& = \tr{A^{*}ABB^*}\\
% 		& \le\left\Vert A^{*}A\right\Vert _{s,\infty}\left\Vert B^{*}B\right\Vert _{s,1},
% 	\end{align*}
% 	where the last line follows from the Holder inequality for matrices,
% 	and the superscript $s$ stands for Schatten norm. Clearly, $\left\Vert A^{*}A\right\Vert _{s,\infty}=\left\Vert A\right\Vert ^{2}_2$.
% 	Also note that
% 	\begin{align*}
% 		\left\Vert B^{*}B\right\Vert _{s,1} & =\sum_{i}\sigma_{i}\left(B^{*}B\right)\\
% 		& =\sum_{i}\sigma_{i}^{2}\left(B\right)\\
% 		& =\left\Vert B\right\Vert _{F}^{2}.
% 	\end{align*}
% 	Therefore, \[
% 	\left\Vert AB\right\Vert _{F}\le\left\Vert A\right\Vert _{2}\left\Vert B\right\Vert _{F},\]
% 	as claimed.
}

\begin{equation}\label{eq:holder for matrices}
\frobnorm{\tempmat\tempmatp}\le\twonorm{A}\frobnorm{B}.
\end{equation}
For a random variable $\rv$ and $1\le p\le q$, the following holds~\cite[Page 30]{rauhut2010compressive}:
%\footnote{
% 	To verify this claim, we write
% 	\begin{align*}
% 		\Eb{\abs{\rv}}{p}& =\left(\Eb{\abs{\rv}^p}{}\right)^{\frac{1}{p}}\\
% 		& =\left(\Eb{\left(\abs{\rv}^q\right)^{\frac{p}{q}}}{}\right)^{\frac{1}{p}}\\
% 		& \le\left(\left(\Eb{\abs{\rv}^q}{}\right)^{\frac{p}{q}}\right)^{\frac{1}{p}}\\
% 		& = \left(\Eb{\abs{\rv}^q}{}\right)^{\frac{1}{q}}\\
% 		& = \Eb{\abs{\rv}}{q},
% 	\end{align*}
% 	where the second to last line uses the fact that $p/q\le1$ and the Jensen inequality.
% }
\begin{equation}\label{eq:prop of Ep norm}
	\Eb{\abs{\rv}}{p}\le \Eb{\abs{\rv}}{q}.
\end{equation}
Also, $\sgnorm{\const\cdot\rv}=\abs{\const}\sgnorm{\rv}$ and, according to~\cite[Lemma 5.9]{vershynin2010introduction}, the following holds for a finite sequence of zero-mean independent random variables $\{\rv_j\}$:
\begin{equation}\label{eq:subg norm of sum of rvs}
	\sgnorm{\sum_j\rv_j}^2\lesssim \sum_j\sgnorm{\rv_j}^2.
\end{equation}

Throughout this section, let $\rgvec\in\reals^{N}$ denote a vector whose entries are i.i.d.\ zero-mean unit variance sub-Gaussian random variables. For convenience, set $\maxsgnorm:=\max_{n\in\compress{N}}\sgnorm{\rgvec(n)}$.
%The sub-Gaussian norms are collected in the vector $\rvsgn\in\reals^{N}$.
For $\dev>0$ and $n\in\compress{N}$, the following holds by the definition of the sub-Gaussian norm~\cite{vershynin2010introduction}:
\begin{equation}\label{eq:bound on signel gaussian rv}
	\log\Proba{\abs{\rgvec(n)}>\dev}{}\lesssim -\frac{\dev^2}{\sgnorm{\rgvec(n)}^2}.
\end{equation}
Also, the following holds when $N\gtrsim 1$,~\cite[Lemma 6.6]{vershynin2010introduction}:%\footnote{
%In fact, this inequality holds regardless of the independence of the entries of $\rgvec$. \note{AE: To be cut later?} To derive this inequality, note that, for $p\ge1$, we have
%\begin{align*}
%	\Eb{\max_{n\in\compress{N}}\rgvec^2(n)}{}  \le \Eb{(\sum_{n\in\compress{N}}\rgvec^{2p}(n))^{1/p}}{} & \le (\Eb{\sum_{n\in\compress{N}}\rgvec^{2p}(n)}{})^{1/p}
%	\lesssim (N(\sqrt{2p}\maxsgnorm)^{2p})^{1/p}
%	\lesssim N^{1/p}p\maxsgnorm^2.
%\end{align*}
%The second inequality holds on  account of the Jensen's law. The third inequality uses the definition of sub-Gaussian norm, \eqref{eq:def of sub-gaussian norm}. To finish the argument, choose $p=\log N$.
%}
%\footnote{
% 	\note{AE: Not include the proof in the paper?} We have
% 	\begin{align*}
% 		\Eb{\max_{n}\rgvec^2(n)}{} &=\int_0^{\infty}\Proba{\max_{n}\rgvec^2(n)>\dev}{}\,d\dev\\
% 		&=\int_0^{\thresh}\Proba{\max_{n}\rgvec^2(n)>\dev}{}\,d\dev+\int_{\thresh}^{\infty}\Proba{\max_{n}\rgvec^2(n)>\dev}{}d\dev\\
% 		& \le \thresh+N\cdot\int_{\thresh}^{\infty}e^{-\dev/\Cl{max of g}\maxsgnorm^2}\,d\dev\\
% 		& =\thresh+N\maxsgnorm^2e^{-\thresh/\Cr{max of g}\maxsgnorm^2}.
% 	\end{align*}
% 	With the choice of $\thresh=\Cr{max of g}\maxsgnorm^2\log N$, we obtain that
% 	\begin{equation*}
% 		\Eb{\max_{n}\rgvec^2(n)}{}\lesssim K^2(\log N+1)\lesssim K^2\log N,
% 	\end{equation*}
% 	as claimed. We assumed that $\bigN\gtrsim1$.\note{AE: To ourselves: We can assume $\bigN\ge e$.}
% }
\begin{equation}\label{eq:exp of max of gaussians}
	\sqrt{\Eb{ \|\rgvec\|^2_{\infty}}{}}=(\Eb{\max_{n\in\compress{N}}\rgvec^2(n)}{})^{1/2} \lesssim\maxsgnorm\sqrt{\log N}.
\end{equation}
Furthermore, for $\dev\le1$, the following inequality is from %Corollary 5.17 in~\cite{vershynin2010introduction}
\cite[Corollary 5.17]{vershynin2010introduction} and provides a lower bound on $\twonorm{\rgvec}$:
% 	\footnote{
% 	We write
% 	\begin{align*}
% 		\Proba{\twonorm{\rgvec}<(1-\dev)\sqrt{N}}{} &	
% 		\le\Proba{\twonorm{\rgvec}^2<(1-\dev)N}{} \\ & \le e^{-C\min(\frac{\dev^2}
% 		{K^4},\frac{\dev}{K^2})N},
% 	\end{align*}
% 	where we used the fact that, for any \noteh{subgaussian}random variable $\rv$, $\senorm{\rv}\le\sgnorm{\rv}^2$.\noteh{It should be $\senorm{\rv^2} \le \sgnorm{\rv}^2$, and we can cite [Vershynin] }\note{AE: If we decide to keep this proof, we need to define sub-exponential norm.}
% }
\begin{equation}
	\log\Proba{\twonorm{\rgvec}<(1-\dev)\sqrt{N}}{} \lesssim -\min(\frac{\dev^2}
	{\maxsgnorm^4},\frac{\dev}{\maxsgnorm^2})N.
	\label{eq:bound on norm of Gaussian vector}
\end{equation}
Suppose that the entries of $\rgmat\in\reals^{N\times J}$ are Gaussian random variables with zero-mean and unit variance. The next inequality provides an upper bound on the spectral norm of $\rgmat$, which directly follows from Corollary 5.35 in~\cite{vershynin2010introduction}. When $J\le N$, and for $\dev>0$, the following holds:
\begin{equation}
	\log\Proba{\twonorm{\rgmat}>(1+\dev)\sqrt{N}+\sqrt{J}}{} \le -\dev^2N/2.
% 	\Proba{\twonorm{\rgmat}>(1+\dev)\sqrt{N}+\sqrt{J}}{} \le e^{-\dev^2N/2},
\label{eq:bound on norm of Gaussian matrix}
\end{equation}
The next result essentially bounds the moments of a sum of independent random variables with those of a Rademacher sequence. The proof of this result (and the next one) uses the symmetrization technique, which has the following argument at its heart. If $\rv$ is a random variable taking values in a Banach space $\Banach$, then we can define its symmetrized version $\symrv = \rv - \rv'$, where $\rv'$ is an independent copy of $\rv$. As suggested by the name, the distribution of $\symrv$ is symmetric about the origin. In addition, the distributions of $\symrv$ and $\rad \symrv$ are the same, where $\rad$ is a standard Bernoulli random variable that takes $\pm 1$ with equal probability.
%Then powerful inequalities such as \emph{Khintchine's inequality} (also to be shown below) can be used to bound the moments of Rademacher sequences.
\begin{lemma}\label{symmetrization in expectation}
\emph{\cite[Lemma 6.7]{rauhut2010compressive}}
	\label{lem:sym}
	Let $\{\rv_l\}$ be a finite sequence of independent random variables in a Banach space $(\Banach,\Banachnorm{\cdot})$. Then, the following holds: %\note{AE: Out of curiosity, why does this hold for every $p\ge0$?}
	\begin{eqnarray*}
		\Eb{\Banachnorm{\sum_{l} \rv_l - \Eb{\rv_l}{}}}{} \le 2 \Eb{\Banachnorm{\sum_{l} \rad_l Z_l}}{},
	\end{eqnarray*}
	where $\{\xi_l\}$ is a Rademacher sequence.
\end{lemma}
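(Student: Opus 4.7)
The plan is to execute the classical symmetrization trick in three movements: introduce an independent copy, exploit Jensen to push the independent expectation outside the norm, and then absorb random signs by using that the difference of two i.i.d.\ variables is symmetric.

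First, I would let $\{Z_l'\}$ be an independent copy of $\{Z_l\}$ (living on an enlarged probability space, with $\mathbb{E}'$ denoting expectation with respect to the primed variables only). Since the $Z_l'$ have the same distribution as the $Z_l$, one has $\mathbb{E}Z_l = \mathbb{E}'Z_l'$, so
\begin{equation*}
\sum_l Z_l - \mathbb{E}Z_l = \sum_l Z_l - \mathbb{E}'Z_l' = \mathbb{E}'\sum_l (Z_l - Z_l').
\end{equation*}
Taking the Banach-space norm and applying Jensen's inequality (the norm $\Banachnorm{\cdot}$ is convex) followed by Fubini to interchange the two expectations gives
\begin{equation*}
\Eb{\Banachnorm{\sum_l Z_l - \mathbb{E}Z_l}}{} \;\le\; \Eb{\Banachnorm{\sum_l (Z_l - Z_l')}}{}.
\end{equation*}

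Second, I would observe that for each $l$ the random element $Z_l - Z_l'$ is symmetric (its distribution is invariant under sign flip), and the $\{Z_l - Z_l'\}$ remain independent. Consequently, introducing an independent Rademacher sequence $\{\xi_l\}$, the joint distribution of $\{Z_l - Z_l'\}$ coincides with that of $\{\xi_l(Z_l - Z_l')\}$. Therefore
\begin{equation*}
\Eb{\Banachnorm{\sum_l (Z_l - Z_l')}}{} = \Eb{\Banachnorm{\sum_l \xi_l(Z_l - Z_l')}}{}.
\end{equation*}

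Third, I would use the triangle inequality on the right-hand side together with the fact that $\{\xi_l Z_l\}$ and $\{\xi_l Z_l'\}$ have identical distributions, so each contributes the same expectation:
\begin{equation*}
\Eb{\Banachnorm{\sum_l \xi_l(Z_l - Z_l')}}{} \;\le\; \Eb{\Banachnorm{\sum_l \xi_l Z_l}}{} + \Eb{\Banachnorm{\sum_l \xi_l Z_l'}}{} = 2\,\Eb{\Banachnorm{\sum_l \xi_l Z_l}}{}.
\end{equation*}
Chaining the three displayed inequalities yields the claim. There is no real obstacle here; the only points requiring minor care are the measurability/integrability bookkeeping needed to invoke Jensen and Fubini on the enlarged product space, and the verification that symmetry of $Z_l - Z_l'$ makes it jointly distributed as $\xi_l(Z_l - Z_l')$ with an independent Rademacher sequence.
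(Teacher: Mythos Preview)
Your proof is correct and is exactly the classical symmetrization argument the paper has in mind: the paper does not give a full proof but cites \cite[Lemma 6.7]{rauhut2010compressive} and, just before the lemma, sketches precisely your steps (introduce an independent copy $Z'$, note that $Y=Z-Z'$ is symmetric so equals $\xi Y$ in distribution). Nothing to change.
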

% The proof of the above lemma makes use of the \emph{symmetrization} technique (which might explain its name), which on its own is used on the proof of the main theorem.  To make things concrete, if $Z$ is a random variable taking values in a Banach space $X$, then we define the \emph{symmetrized variable} $Y = Z - Z'$, where $Z'$ is an independent copy of $Z$. Then we see that the distribution of $Y$ is exactly the same as the distribution of $\rad Y$ (we write $Y \sim \rad Y$), where $\rad$ is a $\pm 1$ Bernoulli random variable.

%The next lemma links the tail bounds of a sum of independent random variables and its symmetrized version. We remark that this result directly follows by applying equation 6.1 in~\cite{ledoux1991probability} to $\sum_j \rv_j-\Eb{\rv_j}{}$.

%\begin{lemma}
%	\label{lem:sym_proba}
%	Let $\{\rv_j\}$ be defined as above,
%	%a finite sequence of independent random variables in the Banach space $(\Banach,\Banachnorm{\cdot})$.
%	and let $\{\rv'_j\}$ denote an independent copy of $\{\rv_j\}$. The following holds for $\dev>0$:
%	\begin{equation*}
%		\Proba{\Banachnorm{\sum_{j} \rv_j - \Eb{\rv_j}{}}>2\Eb{\Banachnorm{\sum_{j} \rv_j - \Eb{\rv_j}{}}}{}+\dev}{}\le 2 \Proba{\Banachnorm{\sum_{j} \rad_j(\rv_j-\rv'_j)}>\dev}{}.
%	\end{equation*}
%\end{lemma}

In our proofs, we also require a (weak version of) the Khintchine inequality for operator norms that we state next.
\begin{lemma}\label{lem:weak kintchine ineq}
	If $\{\tempmat_{l}\}$, $l\in\compress{\copies}$, is a sequence of  matrices of the same dimension and rank of at most $J \gtrsim 1$, then the following holds:
	\begin{equation*}
	\Eb{\twonorma{\sum_{l\in\compress{\copies}}\rad_{l}\tempmat_l}}{}
	\lesssim \sqrt{\log J}\left(\sum_{l\in\compress{\copies}}\twonorma{ \tempmat_l}^{2}\right)^{1/2},
	\end{equation*}
where $\{\xi_l\}$ is a Rademacher sequence.
\end{lemma}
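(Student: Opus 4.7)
).}

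My plan is to combine the noncommutative Khintchine inequality of Lust--Piquard and Pisier with the elementary Schatten-norm estimate $\|A\|_{S_p} \le (\operatorname{rank} A)^{1/p}\|A\|_2$ applied \emph{to each summand individually} (rather than to the sum). The first move is a Jensen/embedding step: for any $p \ge 2$,
\begin{equation*}
\mathbb{E}\,\bigl\|\textstyle\sum_l \xi_l A_l\bigr\|_2
= \mathbb{E}\,\bigl\|\textstyle\sum_l \xi_l A_l\bigr\|_{S_\infty}
\le \mathbb{E}\,\bigl\|\textstyle\sum_l \xi_l A_l\bigr\|_{S_p}
\le \Bigl(\mathbb{E}\,\bigl\|\textstyle\sum_l \xi_l A_l\bigr\|_{S_p}^{p}\Bigr)^{1/p}.
\end{equation*}

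Next, I invoke the noncommutative Khintchine inequality, which asserts that for $p \ge 2$,
\begin{equation*}
\Bigl(\mathbb{E}\,\bigl\|\textstyle\sum_l \xi_l A_l\bigr\|_{S_p}^{p}\Bigr)^{1/p}
\lesssim \sqrt{p}\,\max\!\left(
\Bigl\|\bigl(\textstyle\sum_l A_l A_l^*\bigr)^{1/2}\Bigr\|_{S_p},\;
\Bigl\|\bigl(\textstyle\sum_l A_l^* A_l\bigr)^{1/2}\Bigr\|_{S_p}
\right).
\end{equation*}
The crucial step is how to control the right-hand side. Using the identity $\|(B)^{1/2}\|_{S_p}^2 = \|B\|_{S_{p/2}}$ for PSD $B$, the triangle inequality in $S_{p/2}$ (valid since $p/2 \ge 1$), and the identity $\|A_l A_l^*\|_{S_{p/2}} = \|A_l\|_{S_p}^2$, I obtain
\begin{equation*}
\Bigl\|\bigl(\textstyle\sum_l A_l A_l^*\bigr)^{1/2}\Bigr\|_{S_p}^{2}
= \Bigl\|\textstyle\sum_l A_l A_l^*\Bigr\|_{S_{p/2}}
\le \textstyle\sum_l \|A_l A_l^*\|_{S_{p/2}}
= \textstyle\sum_l \|A_l\|_{S_p}^{2}
\le J^{2/p} \textstyle\sum_l \|A_l\|_2^2,
\end{equation*}
where the final step uses \eqref{eq:prop of schatten norm} together with $\operatorname{rank}(A_l) \le J$ for each individual summand. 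The same bound holds for the $A_l^* A_l$ term, so
\begin{equation*}
\mathbb{E}\,\bigl\|\textstyle\sum_l \xi_l A_l\bigr\|_2
\lesssim \sqrt{p}\cdot J^{1/p} \Bigl(\textstyle\sum_l \|A_l\|_2^2\Bigr)^{1/2}.
\end{equation*}
Choosing $p = \max(2, \log J)$ (permissible under $J \gtrsim 1$) makes $J^{1/p} \le e$, and the lemma follows.

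The main (and really only) obstacle is invoking the noncommutative Khintchine inequality, which is a deep external tool; once it is available, the argument is short. The single insight that makes the bound come out as $\sqrt{\log J}$ rather than $\sqrt{\log(LJ)}$ or $\sqrt{\log d}$ (where $d$ is the ambient matrix dimension) is the decision to apply the rank bound \emph{summand by summand} through the triangle inequality in $S_{p/2}$, instead of first bounding the Schatten $p$-norm of $\sum_l A_l A_l^*$ by a power of its rank, which would bleed in a factor $\log L$ or $\log d$.
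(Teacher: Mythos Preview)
Your proof is correct and follows the same overall strategy as the paper: pass from the spectral norm to a Schatten $S_p$ norm, invoke the noncommutative Khintchine inequality, and take $p \approx \log J$ so that the rank factor $J^{1/p}$ becomes $O(1)$.

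The one place you differ is in how the variance term $\bigl\|(\sum_l A_l A_l^*)^{1/2}\bigr\|_{S_p}$ is controlled. The paper converts this Schatten norm back to the operator norm via $\|\cdot\|_{S_{\log J}} \le e\,\|\cdot\|_2$ applied to the \emph{sum}, and only then uses the triangle inequality in $\|\cdot\|_2$. You instead rewrite it as $\|\sum_l A_l A_l^*\|_{S_{p/2}}$, apply the triangle inequality in $S_{p/2}$, and invoke the rank bound on each $A_l$ \emph{individually}. Your route is actually more faithful to the lemma's stated hypothesis: the paper's step $\|\cdot\|_{S_{\log J}} \le e\,\|\cdot\|_2$ requires $\operatorname{rank}\bigl(\sum_l A_l A_l^*\bigr) \le J$, which is not implied by each $A_l$ having rank at most $J$. (It does hold in the paper's sole application, where all the $A_l = A(Z_l)$ lie in the range of a fixed linear map of rank at most $\widetilde{M}$, so the sum inherits that rank bound; but it need not hold in general.) So your argument establishes the lemma exactly as stated, while the paper's written proof tacitly relies on this extra structural feature of the application.
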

\begin{proof}
From~\cite[Theorem 6.14]{rauhut2010compressive} and for every $2\le p<\infty$, we recall that
\[
	\Eb{\schattena{\sum_{l\in\compress{\copies}}\rad_{l}\tempmat_{l}}{p}}{p}	\lesssim\sqrt{p}\cdot\max\left(\schattena{(\sum_{l\in\compress{\copies}}\tempmat_l\tempmat_l^{*})^{\frac{1}{2}}}{p},\schattena{ (\sum_{l\in\compress{\copies}}\tempmat_l^{*}\tempmat_l)^{\frac{1}{2}}}{p}\right).
\]
The spectral norm is a special case of the Schatten norm with $p=\infty$. Therefore, the inequality above does not directly apply to our problem. As such, we need a more detailed argument here, which follows the approach of~\cite{vershynin2010introduction}. %\noteh{Replace previous sentence with ``As such, we need a more detailed argument here and this is achieved by following the footsteps of ..''}
From \eqref{eq:prop of schatten norm}, with $p=\infty$ and $q=\log J$, recall that
\[
	e^{-1}\schatten{\tempmat}{\log J}\le\twonorm{\tempmat}\le\schatten{\tempmat}{\log J}
\]
for $J\ge e$.
%\note{Note that the above inequality is actually correct for $J>1$.}
This equivalence, in combination with the fact that $\Eb{}{p}$ is increasing in $p$, i.e., \eqref{eq:prop of Ep norm}, allows us to write
\begin{align*}
	\Eb{\twonorma{\sum_{l\in\compress{\copies}}\rad_{l}\tempmat_l}}{}& \le\Eb{\schattena{\sum_{l\in\compress{\copies}}\rad_{l}\tempmat_l}{\log J}}{}\\
	& \le\Eb{\schattena{\sum_{l\in\compress{\copies}}\rad_{l}\tempmat_l}{\log J}}{\log J}\\
	& \lesssim \sqrt{\log J}\max\left(\schattena{ (\sum_{l\in\compress{\copies}}\tempmat_l\tempmat_l^{*})^{\frac{1}{2}}}{\log J},\schattena{ (\sum_{l\in\compress{\copies}}\tempmat_l^{*}\tempmat_l)^{\frac{1}{2}}}{\log J}\right)\\
	& \le e\sqrt{\log J}\max\left(\twonorma{ (\sum_{l\in\compress{\copies}}\tempmat_l\tempmat_l^{*})^{\frac{1}{2}}} ,\twonorma{ (\sum_{l\in\compress{\copies}}\tempmat_l^{*}\tempmat_l)^{\frac{1}{2}}} \right)\\
	& =e\sqrt{\log J}\max\left(\twonorma{ \sum_{l\in\compress{\copies}}\tempmat_l\tempmat_l^{*}}^{\frac{1}{2}},\twonorma{ \sum_{l\in\compress{\copies}}\tempmat_l^{*}\tempmat_l} ^{1/2}\right)\\
	& \le e\sqrt{\log J}\max\left((\sum_{l\in\compress{\copies}} \twonorma{\tempmat_l}^{2})^{\frac{1}{2}},(\sum_{l\in\compress{\copies}}\twonorma{\tempmat_l} ^{2})^{\frac{1}{2}}\right)\\
	& =e\sqrt{\log J}\cdot(\sum_{l\in\compress{\copies}}\twonorma{ \tempmat_l}^{2})^{\frac{1}{2}},
\end{align*}
as claimed. We assumed above that $J\ge e^2$ %\note{AE changed the typo $J\ge e$ to $J\ge e^2$ in order to apply the inequality  recalled in the beginning of the proof.}
to produce the first line, and the second to the last line above uses the triangle inequality and the fact that $\twonorm{ \tempmat \tempmat^{*}}=\twonorm{\tempmat}^{2}$ for any matrix $\tempmat$. We remark that had we stopped at the fifth line above, we would have ended with the stronger original non-commutative Khintchine inequality for the spectral norm. However, the weaker bound given in the last line suffices for our purposes in this paper and completes the proof of Lemma~\ref{lem:weak kintchine ineq}.

\end{proof}

We also list two trivial identities regarding covering numbers, which hold for every set $\genset$, norm $\gennorm{\cdot}$, and $\covres,\dumone>0$:
\begin{align}
	\cover{\genset}{\dumone\gennorm{\cdot}}{\covres}&=\cover{\genset}{\gennorm{\cdot}}{\covres/\dumone}\nonumber\\
	\cover{\dumone\genset}{\gennorm{\cdot}}{\covres}&=\cover{\genset}{\gennorm{\cdot}}{\covres/\dumone}.\label{eq:conv fact about covering}
\end{align}
For reference, we also provide an estimation for two integrals we encounter in the analysis:
\begin{equation}
	\int_{0}^{\dumone}\log\left(1+\frac{\dumtwo}{\dummy}\right)\, d\dummy\lesssim\dumone\log\left(1+\frac{\dumtwo}{\dumone}\right).\label{eq:log integral 1}
\end{equation}
\begin{equation}
	\int_{0}^{\dumone}\sqrt{\log\left(1+\frac{\dumtwo}{\dummy}\right)}\, d\dummy\lesssim\dumone\sqrt{\log\left(1+\frac{\dumtwo}{\dumone}\right)},\label{eq:log integral 2}
\end{equation}
Both inequalities hold when $\dumone\le\dumtwo$.%\footnote{
% Note that as long as $\dumone/\dumtwo\le1$, we can write
% \begin{align*}
% 	\int_{0}^{\dumone}\log\left(1+\frac{\dumtwo}{\dummy}\right)\, d\dummy & =\dumtwo\log\left(1+\frac{\dumone}{\dumtwo}\right)+\dumone\log\left(1+\frac{\dumtwo}{\dumone}\right)\\
% 	& \le \dumtwo\cdot\frac{\dumone}{\dumtwo}+\dumone\log\left(1+\frac{\dumtwo}{\dumone}\right)\\
% 	& \lesssim \dumone\log\left(1+\frac{\dumtwo}{\dumone}\right).
% \end{align*}
% The second line uses $1+\dumthree\le e^{\dumthree}$ for every real $\dumthree$. The last line follows since $\dumtwo/\dumone\ge 1$. Then note that by the Jensen inequality
% \begin{align*}
% 	\int_{0}^{\dumone}\dumone^{-1}\sqrt{\log\left(1+\frac{\dumtwo}{\dummy}\right)}\, d\dummy & \le\sqrt{\int_{0}^{\dumone}\dumone^{-1}\log\left(1+\frac{\dumtwo}{\dummy}\right)\, d\dummy},
% \end{align*}
% and consequently,
% \begin{align*}
% 	\int_{0}^{\dumone}\sqrt{\log\left(1+\frac{\dumtwo}{\dummy}\right)}\, d\dummy & \le\sqrt{\dumone\int_{0}^{\dumone}\log\left(1+\frac{\dumtwo}{\dummy}\right)\, d\dummy}\\
% 	& \lesssim\dumone\sqrt{\log\left(1+\frac{\dumtwo}{\dumone}\right)},
% \end{align*}
% as claimed.
% }

\section{Proof of Lemma~\ref{lem:coh of rand basis}}
\label{sec:proof of coh of rand basis}

% \noteh{
% Alternate proof using Tropp's paper:
% By theorem 1.1 in Tropp's paper, we have for $p \ge 1$,
% \begin{eqnarray*}
% 	\E{\|R\|_{\max}^p}{} \le (1.5)^p \E{\|G\|_{\max}^p}{} \le (1.5)^p p^{p/2} N^{-p/2},
% \end{eqnarray*}
% where $R$ is the $N \times N$ random orthonormal matrix, $G$ is an i.i.d. random gaussian matrix with variance $1/N$, and the last inequality follows from the moments calculation of gaussian random variables [Vershynin].
% Next, apply [Rauhut, Prop 6.5] to get the tail bounds of $\|R\|_{\max}$:
% \begin{eqnarray*}
% 	\Proba{\|R\|_{\max} \ge e^{1/2} \frac{1.5}{\sqrt{N}} u}{} \le \exp(-u^2/2), \;\; u \ge 1.
% \end{eqnarray*}
% Now $\mu = \sqrt{N} \|R\|_{\max}$.
% Thus, if we set $t = e^{1/2} (1.5) u$, then
% \begin{eqnarray*}
% 	\Proba{\mu \ge t}{} \le \exp\left(- \frac{t^2}{2 (1.5)^2 e} \right), \;\; t \ge 1.5 \sqrt{e}.
% \end{eqnarray*}
% To end it off, set $t = c\sqrt{\log N}$ with $c = \sqrt{2} (1.5) e \le 10$, we arrive at
% \begin{eqnarray*}
% 	\Proba{\mu \ge c \sqrt{\log N}}{} \le \exp(\log N) = N^{-1}.
% \end{eqnarray*}
% Because of the condition $t \ge 1.5 \sqrt{e}$, we require that $\log N \ge 1/2$.
% }
%
Equivalently, $\rbasis$ can be constructed as follows. Let $\{\rbasisc_{\bign}\}$,  $\bign\in\compress{\bigN}$, denote the columns of $\rbasis$. %\footnote{\note{AE: Cut this?}
% 	We recall that an orthobasis for $\reals^{\bigN}$ is automatically an orthobasis for $\comps^{\bigN}$; its columns are unit $\lpnorm{2}$-norm, orthogonal and span $\comps^{\bigN}$.
% 	}
	The first column, $\rbasisc_1$, is chosen from the uniform distribution on the unit sphere
	in $\reals^{\bigN}$. %The second column is chosen from
	%the uniform distribution on the unit sphere in the orthogonal complement
	%of the first column.
	For every $\bign\in\compress{\bigN}\backslash\{1\}$, $\rbasisc_{\bign}$ is chosen from the uniform distribution on the unit sphere in the orthogonal complement of the span of the first $\bign-1$ columns.

Let $\rgvec\in\reals^{\bigN}$ denote a standard Gaussian vector, that is a vector whose entries are i.i.d.\ zero-mean Gaussian random variables with unit variance.
%a vector whose enties are i.i.d.\ Gaussian random variables, with zero-mean and unit variance. %Specifically, the real and imaginray parts of each entry of $\rgvec$ are independent Gaussian random variables with the variance $1/2$.
Since $\rbasisc_1$ is drawn from the uniform distribution on the unit sphere in $\reals^{\bigN}$, the entries of $\rbasisc_1$ have the same distribution as those in $\rgvec/\twonorm{\rgvec}$. Since the distribution of $\rbasis$ remains unchanged under permutation of its rows, every column of $\rbasis$ has the same (marginal) distribution as $\rgvec/\twonorm{\rgvec}$. This, alongside with the union bound, allows us to write the following for any $\dev>0$:
\begin{align*}
	\Proba{\cohU{\rbasis}>\dev\sqrt{\log{\bigN}}}{} & =\Proba{\max_{\bign\in\compress{\bigN}}  \infnorm{\rbasisc_{\bign}}>\dev\sqrt{\log\bigN}/\sqrt{\bigN}}{}\nonumber\\
	& \le\bigN\cdot\max_{\bign\in\compress{\bigN}}\Proba{ \infnorm{\rbasisc_{\bign}}>\dev\sqrt{\log\bigN}/\sqrt{\bigN}}{}\nonumber
	\\ & =\bigN\cdot\Proba{\max_{\bign\in\compress{\bigN}}\frac{\abs{g(\bign)}}{\twonorm{g}}>\dev\sqrt{\frac{\log\bigN}{\bigN}}}{}\nonumber\\
	& \le \bigN^2\cdot\Proba{\frac{\abs{g(1)}}{\twonorm{g}}>\dev\sqrt{\frac{\log\bigN}{\bigN}}}{}\nonumber\\
	& = \bigN^2\cdot\Proba{\frac{\abs{g(1)}}{\twonorm{g}}>\frac{\dev/2\cdot\sqrt{\log\bigN}}{(1-1/2)\sqrt{\bigN}}}{}\nonumber\\
	& \le \bigN^2\cdot\Proba{\abs{g(1)}>\dev/2\cdot\sqrt{\log\bigN}}{}+\bigN^2\cdot\Proba{\twonorm{g}<(1-1/2)\sqrt{\bigN}}{}\nonumber\\
	& \lesssim \bigN^2e^{-\frac{\Cl{g tail}}{4\sgnorm{\rgvec(1)}^2}\dev^2\log\bigN}+\bigN^2e^{-\Cl{rgvec norm}\min\left(\frac{1}{4\sgnorm{\rgvec(1)}^4},\frac{1}{2\sgnorm{\rgvec(1)}^2}\right)\bigN}\nonumber\\
	& \le \bigN^2e^{-\frac{\Cr{g tail}}{4\sgnorm{\rgvec(1)}^2}\dev^2\log\bigN}+\bigN^2e^{-\frac{\Cr{rgvec norm}}{4\sgnorm{\rgvec(1)}^4}\bigN},
\end{align*}
where we used \eqref{eq:bound on signel gaussian rv} and \eqref{eq:bound on norm of Gaussian vector} to bound the failure probability. The last line holds because $\sgnorm{\rgvec(1)}=\sqrt{2/\pi}$.\footnote{
This is easily verified using the moments of the Gaussian distribution.}
% \footnote{
% 	The upper bound on $\sgnorm{\rgvec(1)}$ can be easily derived from the problems 1 and 4 in Section 2.2 of~\cite{vaart1996weak}. The lower bound follows because
% }
If we take
% \[
% 	\dev\ge4\sqrt{2/\pi\Cr{g tail}}\ge4\sgnorm{\rgvec(1)}/\sqrt{\Cr{g tail}}                                                                                                                                                   
% \]
% \[
% 	\bigN\ge13\Cr{rgvec norm}^{-1}\log\bigN\ge16\Cr{rgvec norm}^{-1}\sgnorm{\rgvec(1)}^4\log\bigN,
% \]
$\bigN\ge\dev^2\log\bigN$, we obtain that
\begin{equation}
	\Proba{\cohU{\rbasis}>\dev\sqrt{\log{\bigN}}}{} \lesssim \bigN^{2-\frac{\Cr{g tail}}{4\sgnorm{\rgvec(1)}^2}\dev^2}+\bigN^{2-\frac{\Cr{rgvec norm}}{4\sgnorm{\rgvec(1)}^4}\dev^2}.
\end{equation}
We arrive at the advocated result when $\dev\gtrsim1$:
\begin{equation}
	\Proba{\cohU{\rbasis}>\dev\sqrt{\log{\bigN}}}{} \lesssim \bigN^{-\dev}+\bigN^{-\dev}= 2\bigN^{-\dev}.
\end{equation}

\section{Proof of Lemma~\ref{lem:Q of random matrix}\label{sec:Proof-of-Lemma Q for random mtx}}

% Note that the columns of $\rbasis$ are dependent. However, the distribution of the columns of $U$ should remain unchanged after a permutation
% of columns and so the columns of $U$ are identically distributed.  The first column of $U$ is drawn from the uniform distribution on the unit sphere in $\comps^{\bigN}$. Therefore, all columns of $U$ follow the uniform distribution.
We use here the construction of $\rbasis$ laid down in the beginning of  \ref{sec:proof of coh of rand basis}.
As pointed out in the proof of Lemma~\ref{lem:coh of rand basis}, the columns of $\rbasis$ are dependent but identically distributed as $\rgvec/\twonorm{\rgvec}$, where $\rgvec$ was defined there. Now, for every $\dev$, we can write
\begin{align}
	\Proba{\QU{\rbasis}\gtrsim 1+\sqrt{\frac{J}{N}}+\dev}{} & =\Proba{\max_{\bign\in\compress{\bigN}}\twonorm{\XvecU{\rbasis}{e_{\bign}}} \gtrsim\frac{1}{\sqrt{N}}+\frac{1+\dev}{\sqrt{J}}}{} \nonumber\\
	%& \le\sum_{\bign\in\compress{\bigN}}\Proba{\twonorm{\XvecU{U}{e_{\bign}}}>\frac{1.5C}{\sqrt{N}}+\frac{1+\dev}{\sqrt{J}}}{}\nonumber \\
	& \le\bigN\cdot\max_{\bign\in\compress{\bigN}}\Proba{\twonorm{\XvecU{\rbasis}{e_{\bign}}} \gtrsim\frac{1}{\sqrt{N}}+\frac{1+\dev}{\sqrt{J}}}{} \nonumber\\
	& =\bigN\cdot\Proba{\twonorm{\XvecU{\rbasis}{e_{1}}} \gtrsim\frac{1}{\sqrt{N}}+\frac{1+\dev}{\sqrt{J}}}{}.\label{eq:expand Q of random matrix}
\end{align}
The second line uses the union bound and the last line holds due to the identical distribution of the columns of $\rbasis$. It remains to find an upper bound for the probability in the last line above. %Recall that each column of $U$ is a random unit norm vector in $\reals^{\bigN}$. This means that each column has the same distribution as $g/\Vert g\Vert_2$, where entries of $g\in\comps^{\bigN}$ are iid complex Gaussian random variables. Specifically, the real and imaginray parts of each entry of $g$ are independent Gaussian random variables with the variance $1/2$.
Recall that $\rbasisc_1$ has the same distribution as $\rgvec/\twonorm{\rgvec}$, and thus $\XvecU{\rbasis}{e_{1}}$ has the same distribution as $\rgmat/\frobnorm{\rgmat}$, where $\rgmat\in\comps^{N\times J}$ is formed by reshaping $\rgvec$. Therefore, $\twonorm{\XvecU{\rbasis}{e_{1}}}$ has the same
distribution as $\left\Vert G\right\Vert _{2}/\left\Vert G\right\Vert _{F}$. For fixed $\dev\le1$, the following convenient inequality holds:\footnote{
	This inequality easily follows from the fact that  $1+\dev\ge\left(1+\dev/3\right)\left(1-\dev/3\right)^{-1}$ holds for every $\dev\le 1$.
}
\begin{equation}
	\frac{1}{\sqrt{N}}+\frac{1+\dev}{\sqrt{J}} \gtrsim \frac{(1+\dev/3)\sqrt{N}+\sqrt{J}}{(1-\dev/3)\sqrt{\bigN}}.
	\label{eq:conv inequality}
\end{equation}
Now, we can write that
\begin{align*}
	& \Proba{\twonorm{\XvecU{\rbasis}{e_{1}}}\gtrsim\frac{1}{\sqrt{N}}+\frac{1+\dev}{\sqrt{J}}}{} \\
	& =\Proba{ \frac{\twonorm{\rgmat}}{\frobnorm{\rgmat}}\gtrsim\frac{1}{\sqrt{N}}+\frac{1+\dev}{\sqrt{J}}}{} \\
	& \le\Proba{ \frac{\twonorm{\rgmat}}{\frobnorm{\rgmat}}\gtrsim\frac{\left(1+\frac{\dev}{3}\right)\sqrt{N}+\sqrt{J}}{\left(1-\frac{\dev}{3}\right)\sqrt{\bigN}}}{} \\
	& \le\Proba{\twonorm{\rgmat}\gtrsim\left(1+\frac{\dev}{3}\right)\sqrt{N}+\sqrt{J}}{} +\Proba{\frobnorm{\rgmat}\lesssim\left(1-\frac{\dev}{3}\right)\sqrt{\bigN}}{} \\
	& \lesssim e^{-\frac{1}{18}\dev^{2}N}+e^{-\Cr{rgvec norm}\min\left(\frac{\dev^2}{9\sgnorm{\rgvec(1)}^4}\,,\,\frac{\dev}{3\sgnorm{\rgvec(1)}^2}\right)\bigN}\\
	& \le e^{-\frac{1}{18}\dev^{2}N}+e^{-\frac{\Cr{rgvec norm}}{9\sgnorm{\rgvec(1)}^4}\dev^2\bigN}\\
	& \le2e^{-\Cl{RBD rand basis}\dev^{2}N},
 \end{align*}
where the second line uses \eqref{eq:conv inequality}. The fourth line uses the inequalities \eqref{eq:bound on norm of Gaussian vector} and \eqref{eq:bound on norm of Gaussian matrix} and the fact that $\frobnorm{\rgmat}=\twonorm{\rgvec}$. The second to last line follows because $\sgnorm{\rgvec(1)}=\sqrt{2/\pi}$ and $\dev\le1$. The above upper bound in combination with \eqref{eq:expand Q of random matrix} leads us to the following conclusion:
\begin{equation*}
	\Proba{\QU{\rbasis}\gtrsim 1+\sqrt{\frac{J}{N}}+\dev}{} \lesssim\bigN e^{-\Cr{RBD rand basis}\dev^{2}N}.
\end{equation*}
We complete the proof of Lemma~\ref{lem:Q of random matrix} by taking $N\ge2\Cr{RBD rand basis}^{-1}\dev^{-2}\log\bigN$.

\section{Proof of Lemma~\ref{lem:orthobasis for Toeplitz}}\label{sec:orthobasis for Toeplitz}

Let $\circbasis':=F_{J}\otimes I_{P}\in\comps^{PJ\times PJ}$, where $\otimes$ stands for the Kronecker product. Here, $F_J$ and $I_P$ are the Fourier and canonical orthobases of size, respectively, $J\times J$ and $P\times P$. We consider the natural partitioning of $\circbasis'$ into $J$ row-submatrices of size $P\times PJ$, denoted by $\circbasis'_{j}$,  $j\in\compress{J}$.
% We also define the submatrices $\{\circbasis'_j\}\subset\comps^{P\times PJ}$, $j\in\compress{J}$,
%be the submatrix of $\circbasis'$ formed by $((j-1)P+1)$th to the $(jP)$th rows
% similar to \eqref{eq:how to decompose a matrix}.
Now, for every $j\in\compress{J}$, cyclically shift the rows of $\circbasis'_{j}$ by $j-1$ times upward and create $\circbasis_{j}\in\comps^{P\times PJ}$. Then form the matrix  $\circbasis\in\comps^{PJ\times PJ}$ by replacing every $\circbasis'_{j}$ with $\circbasis_{j}$. The transformation of $\circbasis'$ to $\circbasis$ is visualized in Figure~\ref{fig:T n Tp}.
%The properties of $T$ are summarized in the next lemma and proved in Appendix~\ref{sec:orthobasis for Toeplitz}.
%$=[\circbasis^*_1,\circbasis^*_2,\cdots,\circbasis^*_J]^*\in\comps^{PJ\times PJ}$.
%\begin{lemma}
%	\label{lem:orthobasis for Toeplitz}
%	With the above construction, $\circbasis$ is an orthobasis for $\comps^{PJ}$ and satisfies $\QU{\circbasis}=1$.
%\end{lemma}
%The first $P$ rows of $\circbasis$ are shown below:
% \[ \small
% \left[\begin{array}{ccccc}
% \overset{P\,\mbox{columns}}{\overbrace{\begin{array}{ccc}
% 1\\
%  & 1\\
%  &  & \ddots\\
% \\
% \\
% \\
% \end{array}}} & \overset{P\,\mbox{columns}}{\overbrace{\begin{array}{|cccc|}
%  & e^{-\frac{2\pi}{J}} &  & \\
%  &  & e^{-\frac{2\pi}{J}} & \\
%  &  &  & \ddots\\
%  &  &  & \\
%  &  &  & \\
% e^{-\frac{2\pi}{J}} &  &  &
% \end{array}}} & \cdots & \overset{J-1\,\mbox{columns}}{\overbrace{\begin{array}{|ccc}
% \\
% \\
% \\
% e^{-\frac{2\pi(J-1)}{J}}\\
%  & e^{-\frac{2\pi(J-1)}{J}}\\
%  &  & \ddots
% \end{array}}} & \overset{P-J+1\,\mbox{columns}}{\overbrace{\begin{array}{ccc|}
% e^{-\frac{2\pi(J-1)}{J}} &  & \\
%  & e^{-\frac{2\pi(J-1)}{J}} & \\
%  &  & \ddots\\
%  &  & \\
%  &  & \\
%  &  & \\
% \end{array}}}\end{array}\right]
% \]
\begin{figure}
	\begin{subfigure}[b]{.45\textwidth}
	\begin{center}
	\includegraphics[scale=.4]{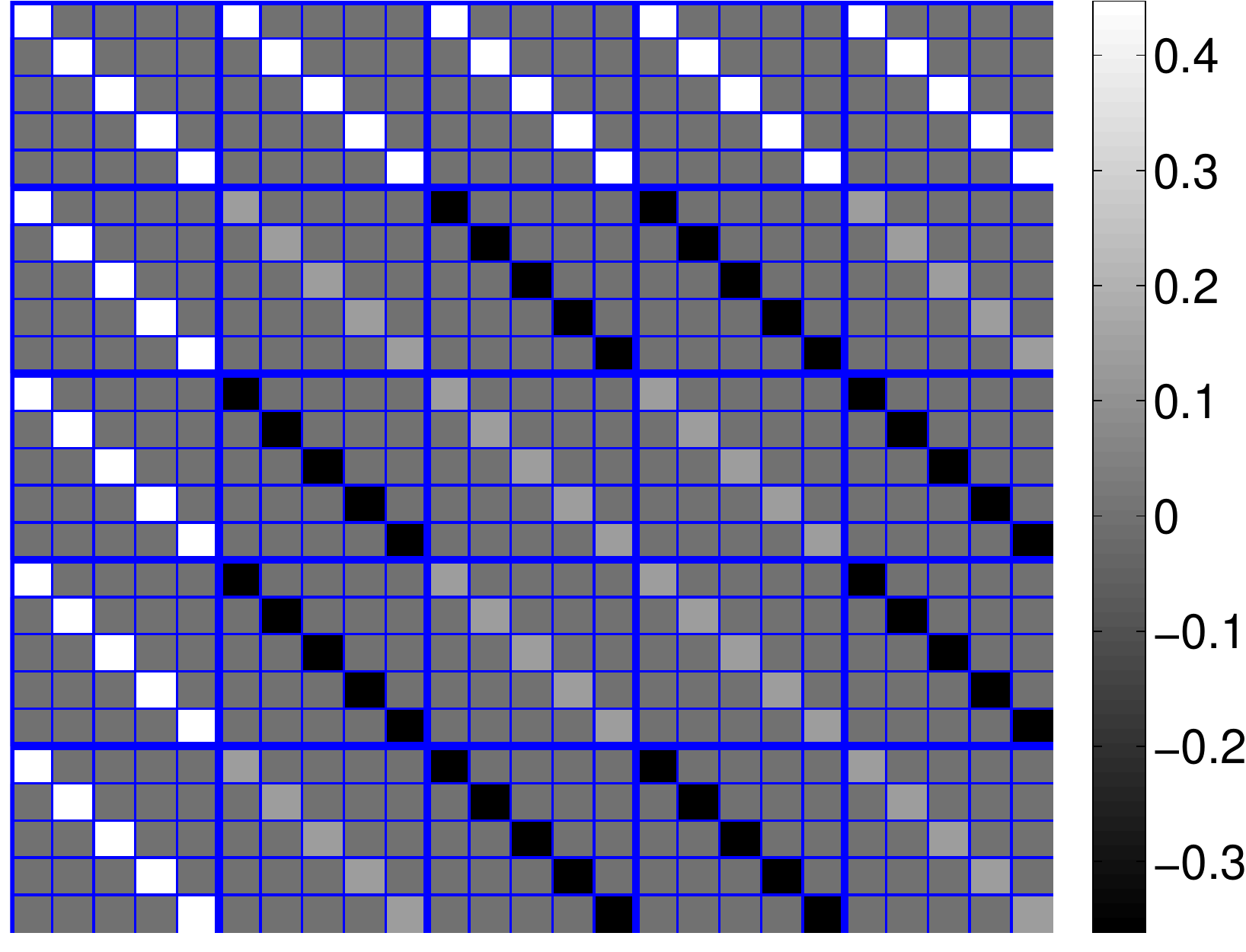}
	\caption{$\re{T'}$}
	\end{center}
	\end{subfigure}
	\begin{subfigure}[b]{0.45\textwidth}
	\begin{center}
	  \includegraphics[scale=.4]{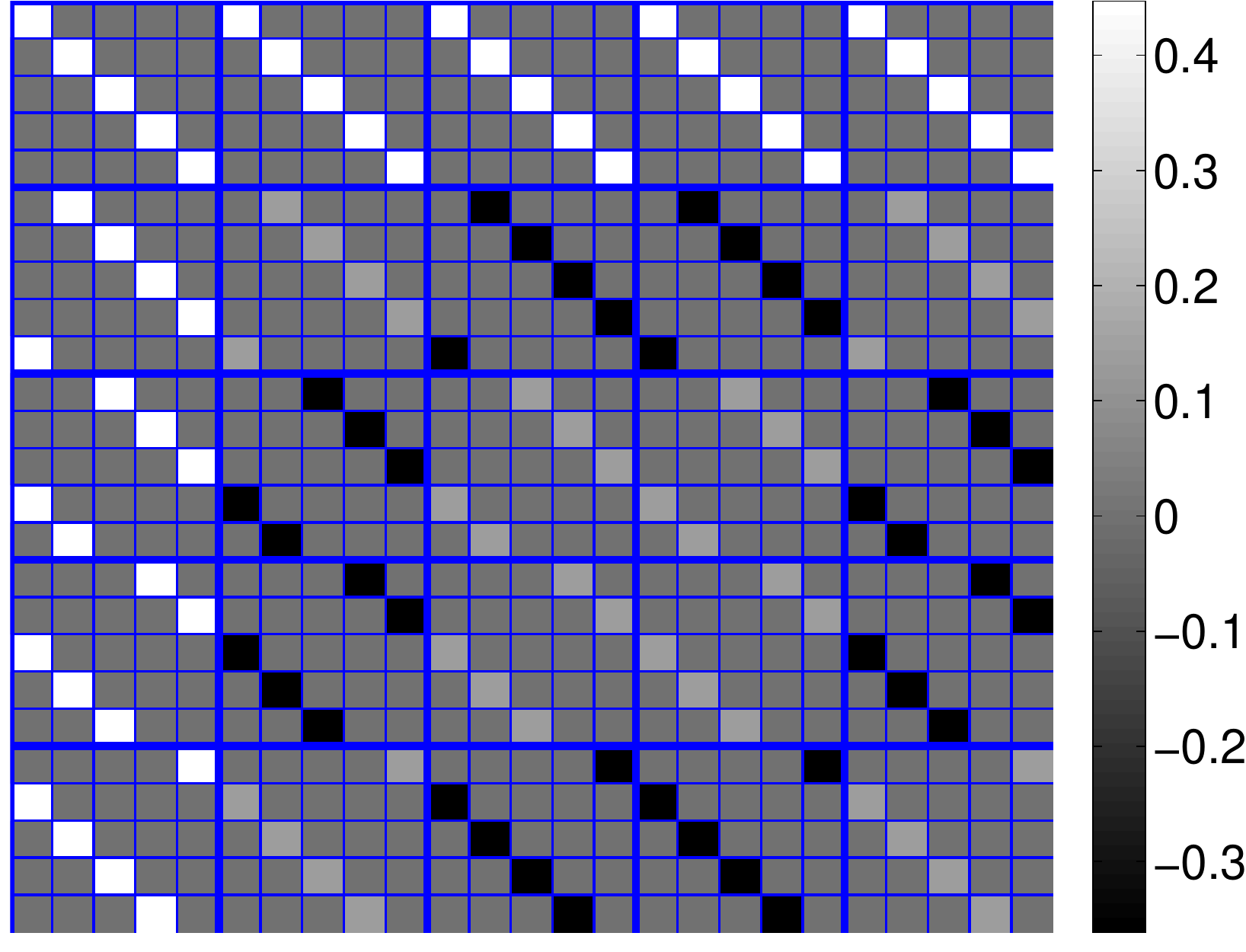}
	  \caption{$\re{T}$}
	\end{center}
	\end{subfigure}	
	\caption{A visual illustration of the transformation from $T'$ to $T$ for $P=J=5$.
	See the explanation in \ref{sec:orthobasis for Toeplitz}.}
	\label{fig:T n Tp}
\end{figure}

Using the properties of the Kronecker product, it is easily verified that $\circbasis'$ is an orthobasis for $\comps^{PJ}$.
%%\footnote{
%\note{AE: To be cut later...} We note that, using the properties of the Kronecker product, $(F_J\otimes I_P)^*\cdot(F_J\otimes I_P)=(F_J^*\otimes I_P)\cdot(F_J\otimes I_P)=(F_J^*F_J)\otimes I_P=I_{PJ}$.
%}
Due to its structure, we also observe that the nonzero entries of the $p_1$th and $p_2$th columns of $\circbasis$ do not overlap when $(p_1-p_2)\,\mod\, P\ne0$, and so these columns are orthogonal. Otherwise, when $(p_1-p_2)\,\mod\, P=0$, we need a more subtle argument. Under this condition, the inner product of the $p_1$th and $p_2$th columns of $\circbasis'$ equals the inner product of the $\lceil p_1/P\rceil$th and $\lceil p_2/P\rceil$th columns of $F_J$ and is indeed zero. The inner product of the $p_1$th and $p_2$th columns remains zero under the transformation of $\circbasis'$ to $\circbasis$ since this transformation only amounts to a permutation in the rows of $\circbasis'$.
Therefore, $\circbasis$ is an orthobasis for $\comps^{PJ}$. As for computing $\QU{\circbasis}$, the structure of $\circbasis$ guarantees that, for each $j$, every column  and row of $\XvecU{e_{j}}{\circbasis}$ has only one nonzero entry with the magnitude of $1/\sqrt{J}$. Therefore, $\QU{\circbasis}=1$.

Finally, it can be easily verified that  $\extx/\sqrt{J}=\circbasis\extxtwo$, where $\extxtwo=[x^T,0,0,\cdots,0]^T\in\comps^{PJ}$, and by \eqref{eq:PC_RBD_formulation}, $\circmat x=\circrbd \circbasis\extxtwo$. If $x$ is $S$-sparse, so is $\extxtwo$. This completes the proof of Lemma~\ref{lem:orthobasis for Toeplitz}.

%Using the properties of the Kronecker product, it is easily verified that $\circbasis'$ is an orthobasis for $\comps^{PJ}$.
%%%\footnote{
%%\note{AE: To be cut later...} We note that, using the properties of the Kronecker product, $(F_J\otimes I_P)^*\cdot(F_J\otimes I_P)=(F_J^*\otimes I_P)\cdot(F_J\otimes I_P)=(F_J^*F_J)\otimes I_P=I_{PJ}$.
%%}
%Due to its structure, we also observe that the nonzero entries of the $p_1$th and $p_2$th columns of $\circbasis$ do not overlap when $(p_1-p_2)\,\mod\, P\ne0$, and so these columns are orthogonal. Otherwise, when $(p_1-p_2)\,\mod\, P=0$, we need a sligthly more subtle argument. Under this condition, the inner product of the $p_1$th and $p_2$th columns of $\circbasis'$ equals the inner product of $\lceil p_1/P\rceil$th and $\lceil p_2/P\rceil$th columns of $F_J$ and is indeed zero. The inner product of the $p_1$th and $p_2$th columns is remains zero under the transformation of $\circbasis'$ to $\circbasis$ since this transformation only amounts to circulant shifts in the rows of $\circbasis'$.
%Therefore, $\circbasis$ is an orthobasis for $\comps^{PJ}$. As for computing $\QU{\circbasis}$, the structure of $\circbasis$ guarantees that, for each $j$, every column  and row of $\XvecU{e_{j}}{\circbasis}$ has only one nonzero entry with the magnitude of $1/\sqrt{J}$. Therefore, $\QU{\circbasis}=1$.
%

\section{Proof of Lemma~\ref{lem: main cover no}}
\label{sec: main cover no RBD}

The arguments used in this section are largely adapted from~\cite{rudelson2008fourier}.
In what follows, we let $\ball{\bigN}{A}$ denote the unit  ball with respect to the norm $\|\cdot\|_A$ in $\comps^{\bigN}$.  Also, $\ball{\bigN}{1}$ and $\ball{\bigN}{2}$, respectively, denote the unit $\lpnorm{1}$-ball and $\lpnorm{2}$-ball in $\comps^{\bigN}$.  For $\supp\subseteq\compress{\bigN}$, we let $\ball{\supp}{A}$
denote the unit $\|\cdot\|_A$ ball in the $\#\supp$-dimensional subspace of $\comps^{\bigN}$  spanned by $\{e_{\bign}\}$, ${\bign\in\supp}$. We define $\ball{\supp}{1}$ and $\ball{\supp}{2}$ similarly.

The first thing to notice is that when $\alpha\in\sparses{S}/\sqrt{S}$, then
$\onenorm{\alpha}\le1$. From the hypothesis of the lemma, we then have that
\begin{equation}
	\|\alpha\|_A\le \frac{\Qall}{\sqrt{\bigM}}.
	\label{eq:bnd on X norm in Omega}
\end{equation}
This implies that for every support $\supp\subset\compress{\bigN}$ with $\#\supp=S$, we have
\begin{equation}
	\frac{\ball{\supp}{2}}{\sqrt{S}}\subseteq \frac{\Qall}{\sqrt{\bigM}}\cdot \ball{\supp}{A}.\label{eq:containment for x ball 1st}
\end{equation}
On the other hand, $\sparses{S}/\sqrt{S}$ can be equivalently represented as
\begin{equation}
	\frac{\sparses{S}}{\sqrt{S}}=\bigcup_{\#\supp=S}\frac{\ball{\supp}{2}}{\sqrt{S}}.\label{eq:alt def of Omega}
\end{equation}
Together, \eqref{eq:containment for x ball 1st} and \eqref{eq:alt def of Omega} imply that
\begin{equation}
	\frac{\sparses{S}}{\sqrt{S}}\subseteq\bigcup_{\#\supp=S}\frac{\Qall}{\sqrt{\bigM}}\cdot \ball{\supp}{A}.\label{eq:containment for Omega 1st}
\end{equation}
We also record that
\begin{equation}
	\frac{\sparses{S}}{\sqrt{S}}\subseteq \frac{\Qall}{\sqrt{\bigM}}\cdot \ball{\bigN}{A},
	\label{eq:containment for Omega}
\end{equation}
which dictates that
\begin{equation}\label{eq:off limit cover no}
	\Cover(\sparses{S}/\sqrt{S},\|\cdot\|_A,\dummy)=1
\end{equation}
if $\dummy\ge\Qall/\sqrt{\bigM}$. This proves the second statement in Lemma~\ref{lem: main cover no}. Otherwise, if $\dummy<\Qall/\sqrt{\bigM}$, we continue with the rest of the argument.
In light of \eqref{eq:containment for Omega 1st}, we have that
\begin{align}\label{eq:cov no of X norm 1 pre raw!}
	\Cover\left(\frac{\sparses{S}}{\sqrt{S}},\|\cdot\|_A,\dummy\right) & \le\Cover\left(\bigcup_{\#\supp=S}\frac{\Qall}{\sqrt{\bigM}}\cdot \ball{\supp}{A},\|\cdot\|_A,\dummy\right)\nonumber \\
	& \le{\bigN \choose S}\cdot\covera{\frac{\Qall}{\sqrt{\bigM}}\cdot\ball{\supp}{A}}{\|\cdot\|_A}{\dummy}\nonumber \\
	&  ={\bigN \choose S}\cdot\covera{\ball{\supp}{A}}{\|\cdot\|_A}{\dummy\Qall^{-1}\sqrt{\bigM}},
\end{align}
where the last line uses the second inequality in \eqref{eq:conv fact about covering}.
An estimate for the covering number in the last line above is available for $\covres>0$, namely\footnote{
This is proved similar to Lemma 5.2 in~\cite{vershynin2010introduction}, after exchanging the Euclidean metric with $\Vert\cdot\Vert_{A}$ and accounting for the complex vector space.}
\begin{equation}
		  \cover{\ball{\supp}{A}}{\|\cdot\|_A}{\covres}\le(1+2/\covres)^{2\#\supp}.\label{eq:cover no of ball with smaller balls}
\end{equation}
Plugging the bound above into \eqref{eq:cov no of X norm 1 pre raw!}, we arrive at
\begin{align}
	\covera{\frac{\sparses{S}}{\sqrt{S}}}{\|\cdot\|_A}{\dummy} \le{\bigN \choose S}\cdot\left(1+\frac{2\Qall}{\dummy\sqrt{\bigM}}\right)^{2S}
	 \le \left(\frac{e\bigN}{S}\right)^{S}\cdot\left(1+\frac{2\Qall}{\dummy\sqrt{\bigM}}\right)^{2S}.\label{eq:cov no of X norm 1 raw}
\end{align}
The last inequality holds because ${n \choose m}\le(en/m)^{m}$ for any pair of integers $m\le n$. When $\bigN\gtrsim1$, \eqref{eq:cov no of X norm 1 raw} implies that
\begin{align}
	\log\covera{\frac{\sparses{S}}{\sqrt{S}}}{\xnorm{\cdot}}{\dummy}
	 & \le S\log\left(\frac{e\bigN}{S}\right)+2S\log\left(1+\frac{2\Qall}{\dummy\sqrt{\bigM}}\right)\nonumber\\
	& \lesssim S\log\bigN+S\log\left(1+\frac{2\Qall}{\dummy\sqrt{\bigM}}\right).
	\label{eq:cov no of X norm 1}
\end{align}
The bound above is less effective for small values of $\dummy$. To seek a second bound on the covering number, we begin from the  containment
\begin{equation}\label{eq:containment of Omega 2nd}
	\frac{\sparses{S}}{\sqrt{S}}\subseteq \ball{\bigN}{1},
\end{equation}
which immediately dictates that
\begin{equation}\label{eq:result of the containment of Omega 2nd}
	\covera{\frac{\sparses{S}}{\sqrt{S}}}{\|\cdot\|_A}{\dummy}\le\covera{\ball{\bigN}{1}}{\|\cdot\|_A}{\dummy}.
\end{equation}
In order to compute the covering number on the right hand side of~\eqref{eq:result of the containment of Omega 2nd}, we use the following result, which is proved in \ref{sec:proof of cover no of real l1 ball}.
\begin{lemma}\label{lem: cover no of real l1 ball}
	Let $B_{1,\bigN}$ denote the $\lpnorm{1}$-ball in $\reals^{\bigN}$, and consider the norm $\|\cdot\|_A$ on $\comps^{\bigN}$, which satisfies the hypothesis of Lemma~\ref{lem: main cover no}. Naturally, $\|\cdot\|_A$ induces a norm on $\reals^{\bigN}\subset\comps^{\bigN}$ (which is represented  with the same notation for convenience). For $\dummy>0$ and $\bigM\gtrsim1$, it holds that
	\[
		\log \left(\covera{B_{1,\bigN}}{\|\cdot\|_A}{\dummy} \right)
		\lesssim \frac{\Qall^{2}}{\dummy^{2}\bigM}\cdot\log^2\bigN.
	\]	
\end{lemma}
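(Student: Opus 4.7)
The plan is to apply Maurey's empirical method, approximating each point of $B_{1,\bigN}$ by the empirical average of $L$ samples drawn from the $2\bigN$ signed canonical basis vectors $\{\pm e_n\}_{n\in\compress{\bigN}}$. Given $\alpha \in B_{1,\bigN}$, I would let $W$ be the $\reals^{\bigN}$-valued random vector taking the value $\operatorname{sgn}(\alpha_n)\,e_n$ with probability $\abs{\alpha_n}$ (for each $n\in\compress{\bigN}$) and the value $0$ with probability $1-\onenorm{\alpha}$, so that $\mathbb{E}[W]=\alpha$. Drawing $L$ i.i.d.\ copies $W_1,\ldots,W_L$ of $W$ and setting $\bar W:=\tfrac{1}{L}\sum_l W_l$, the empirical mean satisfies $\mathbb{E}[\bar W]=\alpha$ and always lies in an explicit, $\alpha$-independent finite set of cardinality at most $(2\bigN+1)^L$. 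This set will serve as the candidate covering net.

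Next I would bound $\mathbb{E}\,\|\bar W-\alpha\|_A$. By the symmetrization of Lemma \ref{lem:sym} followed by the weak noncommutative Khintchine inequality of Lemma \ref{lem:weak kintchine ineq} (applied with rank parameter $\bigM$, matching the rank hypothesis on $A$),
\[
\mathbb{E}\,\|\bar W-\alpha\|_A \;\le\; \frac{2}{L}\,\mathbb{E}\Bigl\|\sum_{l=1}^L \rad_l A(W_l)\Bigr\|_2 \;\lesssim\; \frac{\sqrt{\log\bigM}}{L}\,\mathbb{E}\Bigl(\sum_{l=1}^L \|A(W_l)\|_2^2\Bigr)^{1/2},
\]
where $\{\rad_l\}$ is an independent Rademacher sequence. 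The lemma's hypothesis gives the pointwise estimate $\|A(W_l)\|_2 \le \tfrac{\Qall}{\sqrt{\bigM}}\onenorm{W_l} \le \tfrac{\Qall}{\sqrt{\bigM}}$, so the right-hand side is at most $\tfrac{\Qall\sqrt{\log\bigM}}{\sqrt{L\bigM}}$.

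Finally I would choose $L\asymp \tfrac{\Qall^2\log\bigM}{\dummy^2\bigM}$, making this expected error at most $\dummy/2$. By Markov's inequality, for every $\alpha\in B_{1,\bigN}$ some realization $(w_1,\ldots,w_L)$ of $(W_1,\ldots,W_L)$ satisfies $\|\tfrac{1}{L}\sum_l w_l-\alpha\|_A\le\dummy$, so the candidate net covers $B_{1,\bigN}$ at resolution $\dummy$, giving
\[
\log\covera{B_{1,\bigN}}{\|\cdot\|_A}{\dummy} \;\le\; L\log(2\bigN+1) \;\lesssim\; \frac{\Qall^2\log\bigM\log\bigN}{\dummy^2\bigM} \;\lesssim\; \frac{\Qall^2\log^2\bigN}{\dummy^2\bigM},
\]
where the last step uses $\bigM\lesssim \bigN$ as in the paper's applications. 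The most delicate point is justifying the Khintchine rank input: this is immediate when the matrices $\{A(e_n)\}$ have ranges in a common $\bigM$-dimensional subspace (as holds in both the DBD and RBD settings by construction of $A_D$ and $A_R$), and even the coarser rank-of-sum bound $L\bigM$ would only contribute an extra $\sqrt{\log L}\lesssim\sqrt{\log\bigN}$ factor, which does not change the final rate.
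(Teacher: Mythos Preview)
Your proposal is correct and follows essentially the same route as the paper's proof: Maurey's empirical method with the signed canonical vectors, symmetrization via Lemma~\ref{lem:sym}, the weak Khintchine inequality of Lemma~\ref{lem:weak kintchine ineq} with rank parameter $\bigM$, and the choice $L\asymp \Qall^2\log\bigM/(\dummy^2\bigM)$. The only cosmetic difference is that you bound $\|A(W_l)\|_2\le\Qall/\sqrt{\bigM}$ pointwise from $\onenorm{W_l}\le 1$, whereas the paper passes through Jensen's inequality and computes $\mathbb{E}\|A(Z_l)\|_2^2$ explicitly; both yield the same estimate.
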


Now consider an arbitrary $\temp\in \ball{\bigN}{1}$, and note that $\re{\temp},\,\im{\temp}\in B_{1,\bigN}$. Let $$\acover_1:=\net{B_{1,\bigN}}{\|\cdot\|_A}{\dummy/2}$$ denote a minimal ($\dummy/2$)-cover for $B_{1,\bigN}$ with respect to the metric $\|\cdot\|_A$. Therefore, there exist $p_1,p_2\in \acover_1$ such that $\xnorm{\re{\temp}-p_1},\xnorm{\im{\temp}-p_2}\le \dummy/2$. It follows by the triangle inequality that
\begin{align*}
	\|\temp-(p_1+\iunit p_2)\|_A & =\|(\re{\temp}-p_1)+\iunit (\im{\temp}-p_2)\|_A \\
	& \le \| \re{\temp}-p_1\|_A +  \|\im{\temp}-p_2\|_A\\
	& \le \dummy.
\end{align*}
Therefore, $\{p_1+\iunit p_2 : p_1,\,p_2\in \acover_1\}$ is a cover for $\ball{\bigN}{1}$, and clearly
\[
	\cover{\ball{\bigN}{1}}{\|\cdot\|_A}{\dummy}\le \left(\cover{B_{1,\bigN}}{\|\cdot\|_A}{\dummy/2}\right)^2.
\]
It now follows from \eqref{eq:result of the containment of Omega 2nd}, Lemma~\ref{lem: cover no of real l1 ball}, and the inequality above that
\begin{equation}
	\log\covera{\frac{\sparses{S}}{\sqrt{S}}}{\|\cdot\|_A}{\dummy}
	\lesssim \frac{\Qall^{2}}{\dummy^{2}\bigM}\cdot\log^2\bigN.\label{eq:cov no for X norm 2nd bnd}
\end{equation}
Combining \eqref{eq:cov no of X norm 1} and \eqref{eq:cov no for X norm 2nd bnd}, we finally arrive at
\begin{align*}
	 \log\covera{\frac{\sparses{S}}{\sqrt{S}}}{\|\cdot\|_A}{\dummy}& \lesssim
	 \min\left(S\log\bigN+S\log\left(1+\frac{2\Qall}{\dummy\sqrt{\bigM}}\right),\frac{\Qall^{2}}{\dummy^{2}\bigM}\cdot\log^2\bigN\right),
\end{align*}
which holds when $\bigM\gtrsim1$. This completes the proof of Lemma~\ref{lem: main cover no}.

\section{Proof of Lemma~\ref{lem: cover no of real l1 ball}}
\label{sec:proof of cover no of real l1 ball}

Consider an arbitrary $\temp\in B_{1,\bigN}$. Also consider a random vector $\rv$ that takes a value in $\{0\}\cup\{\sign{\temp(\bign)}\cdot e_{\bign}\}$,
$\bign\in\compress{\bigN}$. It takes $\sign{\temp(\bign)}\cdot e_{\bign}$ with probability of $\abs{\temp(\bign)}$ and zero otherwise.
Clearly, $\Eb{\rv}{}=\temp$. Now, we wish to approximate $\temp$ with the average of $\copies$ independent copies of $\rv$, denoted by $\{\rv_l\}$, $l\in\compress{\copies}$. The expected approximation error, measured in $\|\cdot\|_A$, would be
\begin{equation*}
	\Eb{\left\Vert\temp-\frac{1}{\copies}\sum_{l\in\compress{\copies}}\rv_l\right\Vert_A}{}.
\end{equation*}
Since the argument of the norm is zero-mean, we can use the symmetrization technique by invoking Lemma \ref{lem:sym} from the Toolbox to obtain that
\begin{align}
	\Eb{\left\Vert\temp-\frac{1}{\copies}\sum_{l\in\compress{\copies}}\rv_l\right\Vert_A}{} =
	\frac{1}{\copies}\Eb{\left\Vert\sum_{l\in\compress{\copies}}\rv_l-\Eb{Z_l}{}\right\Vert_A}{}
	\le \frac{2}{\copies}\Eb{\left\Vert\sum_{l\in\compress{\copies}}\rad_l\rv_l\right\Vert_A}{},\label{eq:symm in cool covering}
\end{align}
where $\{\rad_l\}$, $l\in\compress{L}$, is a Rademacher sequence.
According to Lemma~\ref{lem: main cover no}, $\|\cdot\|_A=\|A(\cdot)\|_2$  and $A\left(\cdot\right)$ is a linear map. Therefore
\begin{align*}
	\Eb{\left\Vert\sum_{l\in\compress{\copies}}\rad_{l}\rv_{l}\right\Vert_A}{}
	& =\Eb{\twonorma{\sum_{l\in\compress{\copies}}\rad_lA(\rv_{l})}}{}.
\end{align*}
Also, from the hypothesis of Lemma~\ref{lem: main cover no}, $\rank{A(\rv_l)}\le\bigM$  for every $l$. We now invoke a Khintchine-type inequality for the operator norm (stated in Lemma~\ref{lem:weak kintchine ineq} from the Toolbox), which allows us to continue our argument as
\begin{align}
	\Eb{\twonorma{\sum_{l\in\compress{\copies}}\rad_l A(\rv_{l})}}{}
	& =\Eb{\Eb{\twonorma{\sum_{l\in\compress{\copies}}\rad_l A(\rv_{l})}}{\rad}}{\rv} \nonumber\\
	& \lesssim \sqrt{\log \bigM}\cdot \Eb{\left(\sum_{l\in\compress{\copies}}\twonorm{A(\rv_{l})}^2\right)^{1/2}}{\rv}\nonumber \\
		& \le \sqrt{\log \bigM}\cdot \left(\sum_{l\in\compress{\copies}}\Eb{\twonorm{A(\rv_{l})}^2}{}\right)^{1/2}\nonumber \\	
				& = \sqrt{\log \bigM}\cdot \left(\sum_{l\in\compress{\copies}}\sum_{\bign\in\compress{\bigN}}|\beta_{\bign}|\cdot\twonorm{A(e_{\bign})}^2\right)^{1/2}\nonumber \\	
		& \le \sqrt{\log \bigM}\cdot\left(\copies\cdot\max_{\bign\in\bigN}\|A(e_{\bign})\|^2\right)^{1/2}\nonumber \\
				& = \sqrt{\copies\log \bigM}\cdot \max_{\bign\in\bigN}\|e_{\bign}\|_A\nonumber \\
%		& \lesssim \sqrt{\log \bigM}\cdot \left(\sum_{l\in\compress{\copies}}\Eb{\twonorm{A(\rv_{l})}^2}{\rv}\right)^{1/2}\nonumber \\	
%	& \le \sqrt{\log \bigM}\cdot\Eb{\left(\sum_{l\in\compress{\copies}}\twonorm{A(e_{\bign_{l})}}^2\right)^{1/2}}{\rv}\nonumber \\
%	& = \sqrt{\log \bigM}\cdot\Eb{\left(\sum_{l\in\compress{\copies}}\|e_{\bign_{l}} \|_A^2\right)^{1/2}}{\rv}\nonumber \\
%	& \le\sqrt{\log \bigM}\cdot\E{\sqrt{\copies}\cdot\frac{\Qall}{\sqrt{\bigM}}}{\rv}\nonumber\\
	& \le \Qall\sqrt{\frac{\copies}{\bigM}\log \bigM},\label{eq:bnd on symm on cool covering}
\end{align}
where, conditioned on $\{\rv_l\}$, $l\in\compress{\copies}$, the Khintchine inequality was used to produce the second line. The third line follows from the Jensen inequality. %the fact that either $\twonorm{A(\rv_{l})}=\twonorm{ A(e_{\bign_{l}})}$ for some $\bign_{l}\in\compress{\bigN}$ or $A(\rv_{l})=0$; either way the third line holds for every $l$.
The fifth line uses the assumption that $\beta\in B_{1,\bigN}$.
The last line follows from the hypothesis of Lemma~\ref{lem: main cover no}. Using the above inequality in combination with \eqref{eq:symm in cool covering} yields
\[
	\Eb{\xnorma{\temp-\frac{1}{\copies}\sum_{l\in\compress{\copies}}\rv_{l}}}{}\lesssim\Qall\sqrt{\frac{\log \bigM}{\copies \bigM}}.
\]
To keep the average no larger than $\dummy$, it suffices to take
\[
	\copies\gtrsim\frac{\Qall^{2}}{\dummy^2 \bigM} \cdot\log \bigM.
\]
With this choice of $\copies$, there exists a linear combination of independent copies of $\rv$ that falls within a $\dummy$ distance of $\temp$.
In other words, we have shown that for an arbitrary $\temp \in B_{1,\bigN}$, there exists an average of $\copies$ elements of $\{0\} \cup \{\pm e_{\bign}\}$ that is of distance at most $\dummy$ from $\temp$.
There are $2\bigN+1$ elements in the aforementioned set, and so there are $(2\bigN+1)^{\copies}$
possibilities for the average. We therefore conclude that
\begin{equation*}
	\covera{B_{1,\bigN}}{\|\cdot\|_A}{\dummy}\le\left(2\bigN+1\right)^{\Cl{cover of l1 ball}\log \bigM\cdot\Qall^{2}/\dummy^2 \bigM},
\end{equation*}
or
\begin{align*}
	\log\covera{B_{1,\bigN}}{\|\cdot\|_A}{\dummy}& \lesssim\frac{\Qall^{2}}{\dummy^2 \bigM}\cdot\log \bigM\log\left(2\bigN+1\right)\\
	& \lesssim\frac{\Qall^{2}}{\dummy^2 \bigM}\cdot\log \bigM\log\bigN\\
	& \le \frac{\Qall^{2}}{\dummy^2 \bigM}\cdot\log^2\bigN
\end{align*}
when $\bigN\gtrsim1$. This completes the proof of Lemma \ref{lem: cover no of real l1 ball}.

\end{document}